\newtheorem{theorem}{Theorem}
\newtheorem{lemma}[theorem]{Lemma}
\newtheorem*{claim*}{Claim}
\theoremstyle{remark}
\newtheorem{remark}[theorem]{Remark}
\newcommand*\samethanks[1][\value{footnote}]{\footnotemark[#1]}
\date{}
\begin{document}

\title{A Polynomial-Time Approximation Algorithm for Complete Interval Minors}
\author[1]{Romain Bourneuf\thanks{Email: \texttt{\{romain.bourneuf, julien.cocquet, chaoliang.tang, stephan.thomasse\}@ens-lyon.fr}}}
\affil[1]{Univ. Bordeaux, CNRS, Bordeaux INP, LaBRI, UMR 5800, F-33400 Talence, France.}
\author[2]{Julien Cocquet\samethanks}
\author[2,3]{Chaoliang Tang\samethanks}
\author[2]{Stéphan Thomassé\samethanks}
\affil[2]{ENS de Lyon, Université Claude Bernard Lyon 1, CNRS, Inria, LIP UMR 5668, Lyon, France.}
\affil[3]{Shanghai Center for Mathematical Statistics, Fudan University, 220 Handan Road, Shanghai 200433, China}

\maketitle

\begin{abstract}
As shown by Robertson and Seymour, deciding whether the complete graph $K_t$ is a minor of an input graph $G$ is a fixed parameter tractable problem when parameterized by $t$. From the approximation viewpoint, the gap to fill is quite large, as there is no PTAS for finding the largest complete minor unless $\P = \NP$, whereas a polytime $O(\sqrt n)$-approximation algorithm was given by Alon, Lingas and Wahlén.

We investigate the complexity of finding $K_t$ as interval minor in ordered graphs (i.e. graphs with a linear order on the vertices, in which intervals are contracted to form minors). Our main result is a polytime $f(t)$-approximation algorithm, where $f$ is triply exponential in $t$ but independent of $n$. The algorithm is based on delayed decompositions and shows that ordered graphs without a $K_t$ interval minor can be constructed via a bounded number of three operations: closure under substitutions, edge union, and concatenation of a stable set. As a byproduct, graphs avoiding $K_t$ as an interval minor have bounded chromatic number.
\end{abstract}

\section{Introduction}

Complete minors in graphs form an extensively studied subject, notably featuring the fundamental result of Robertson and Seymour~\cite{RS95}, which asserts that testing whether the complete graph $K_t$ is a minor of an input graph $G$ on $n$ vertices can be done in time $f(t) \cdot n^3$. 
This was proved in the series of Graph Minors papers, which also provided a decomposition theorem of $K_t$-minor-free graphs~\cite{RS03} (whose bounds were recently improved significantly by Gorsky, Seweryn and Wiederrecht~\cite{GSW25}). 
One of the key basic facts allowing such a result is that $K_t$-minor-free graphs are sparse, i.e. they have a linear number of edges. 
Recently, Korhonen, Pilipczuk and Stamoulis~\cite{KPS24} provided an algorithm running in time $f(t) \cdot n^{1+o(1)}$ for the same problem, improving on a quadratic algorithm from Kawarabayashi, Kobayashi and Reed~\cite{KKR12}. 
From the approximation point of view, the landscape is much less understood. 
The first hardness result is due to Eppstein~\cite{Epp09}: finding the size of a largest complete minor is $\NP$-hard. 
This was extended by Wahlén~\cite{Wah09}, who showed that there is no polynomial time approximation scheme (PTAS) for the size of a largest complete minor, unless $\P = \NP$. 
The current best known approximation factor achievable in polynomial time is $O(\sqrt n)$, as shown by Alon, Lingas and Wahlén~\cite{ALW07}. 
Up to this day, it is still open whether this problem admits a polytime $f(OPT)$ approximation algorithm. The goal of this paper is to investigate the corresponding problem for complete interval minors in ordered graphs. 

An \emph{ordered graph} $(G,<)$ is a graph $G=(V,E)$ equipped with a linear order $<$ on its vertices. We will consistently denote the number of vertices by $n$, and the number of edges by $m$. We usually denote the vertices as $v_1,\dots ,v_n$, enumerated according to $<$.
For simplicity of notation, we will sometimes omit the order $<$ and talk about an ordered graph $G$.
An \emph{interval minor} of $(G,<)$ is obtained by deleting some edges of $G$ as well as iteratively contracting pairs of vertices which are consecutive in $<$. An example is displayed in \cref{fig:ex-im}.

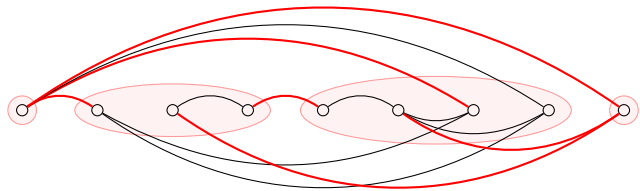
\begin{figure}[ht]
	\centering
	\begin{tikzpicture}[
		every node/.style={circle,draw,inner sep=1.5pt},
		arc/.style={bend left=35} 
		]

		\foreach \i in {1,...,9} {
			\node (v\i) at (\i,0) {};
		}

		\begin{scope}[on background layer]
			\node[draw=red!40, fill=red!5, fit=(v1), ellipse, inner sep=1.5pt] {};
			\node[draw=red!40, fill=red!5, fit=(v9), ellipse, inner sep=1.5pt] {};
			\node[draw=red!40, fill=red!5, ellipse, minimum width=2.6cm, minimum height=0.7cm] at (3,0) {};
			\node[draw=red!40, fill=red!5, ellipse, minimum width=3.6cm, minimum height=0.9cm] at (6.5,0){};
		\end{scope}

		\draw[thick, red, arc] (v1) to (v2);
		\draw[thick, red, arc, bend left=32] (v1) to (v7);
		\draw[arc, bend left=33] (v1) to (v8);
		\draw[thick, red,arc] (v1) to (v9);
		
		\draw[arc, bend left=29] (v7) to (v2);
		\draw[arc] (v8) to (v2);
		\draw[thick, red, arc] (v9) to (v3);
		\draw[arc, bend left=25] (v7) to (v6);
		\draw[arc, bend left=30] (v8) to (v6);
		\draw[thick, red, arc,] (v9) to (v6);
		
		\draw[arc] (v3) to (v4);
		\draw[thick, red, arc] (v4) to (v5);
		\draw[arc] (v5) to (v6);
	\end{tikzpicture}
	\caption{A $K_4$ interval minor: the red zones represent the intervals we contract and the red edges are the remaining edges. Observe that this interval minor model is not a minor model since we contracted non-connected subsets of vertices.}
	\label{fig:ex-im}
\end{figure}

The central computational problem on interval minors is the following:

\begin{tcolorbox}[colback=white,colframe=black]
\textsc{Interval Minor Detection} \\
\textbf{Input:} Two ordered graphs $G$ and $H$. \\
\textbf{Question:} Is $H$ an interval minor of $G$?
\end{tcolorbox}

The complete study of the computational complexity of this question is probably very challenging, and a good first step is to limit ourselves (as for usual minors) to the case where $H$ is a complete graph. 
The crucial fact to observe is that $K_t$-interval-minor-free ordered graphs can be very complex. Say that a bipartite ordered graph $(G, <)$ with edges between two parts $X$ and $Y$ is \emph{monotone} if either $X < Y$ or $Y < X$. Then, observe that the monotone $K_{t,t}$ does not contain $K_4$ as an interval minor. 
In particular, $K_4$-interval-minor-free ordered graphs can have a quadratic number of edges. 
They also form a large class of graphs (with growth at least $2^{n^2/4}$) as they contain all subgraphs of the monotone $K_{n/2,n/2}$. 
Therefore, any attempt to effectively construct $K_t$-interval-minor-free ordered graphs must involve an operation allowing the creation of arbitrary monotone bipartite graphs. 
The landscape of $K_t$-interval-minor-free ordered graphs thus seems very different from the one of $K_t$-minor-free graphs. 
However, a strong connection exists between these two worlds: the right analogy with graph minors involves monotone-$K_{t,t}$-interval-minor-free ordered graphs.
To see this, let us first recall the celebrated Marcus-Tardos theorem~\cite{MT04}: 

\begin{theorem}\label{thm:marcustardos}
There exists a function $f$ such that every ordered graph on $n$ vertices with at least $f(t) \cdot n$ edges contains a monotone $K_{t,t}$ as an interval minor.
\end{theorem}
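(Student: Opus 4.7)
The plan is to reduce the statement to a matrix-pattern problem and carry out the block-decomposition argument of \cite{MT04}.

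First I would translate to 0-1 matrices. Let $M \in \{0,1\}^{n \times n}$ be the strict upper triangular matrix with $M_{ij} = 1$ iff $i < j$ and $v_iv_j \in E(G)$, so that $M$ has exactly $m$ ones. A monotone $K_{t,t}$ interval minor of $(G,<)$ is equivalent to the existence of $2t$ disjoint consecutive intervals $I_1 < \dots < I_{2t}$ of $[n]$ with $M[I_a, I_{t+b}]$ non-empty for all $a, b \in [t]$: contracting each $I_i$ yields exactly the desired minor.

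Second, I would set up a block decomposition. Partition $[n]$ into $n/k$ consecutive intervals of size $k = k(t)$, and call a strictly above-diagonal $k \times k$ block of $M$ \emph{wide} if its ones occupy at least $t$ distinct columns of the block, \emph{tall} if they occupy at least $t$ distinct rows, and \emph{small} otherwise. The on-diagonal blocks contain at most $O_t(n)$ ones in total and are absorbed into a final constant. A double-counting argument limits the number of wide blocks in each above-diagonal block-column to $(t-1)\binom{k}{t}$: in excess of this, pigeonhole produces $t$ wide blocks in distinct block-rows (all strictly above the common block-column) sharing a common $t$-subset $S$ of columns hit by ones. These $t$ block-rows, each extended by intervening vertices, together with the $t$ singleton columns of $S$ (and a trailing tail) assemble into the $2t$ required disjoint consecutive intervals satisfying the non-empty-cell condition. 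A symmetric bound holds for tall blocks per block-row, which together cap the ones inside wide or tall blocks at $O_t(n)$.

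Third, I would control the ones in small blocks by recursion on the coarser $(n/k) \times (n/k)$ matrix $M'$ whose entry is $1$ iff the corresponding $k \times k$ block of $M$ contains a $1$. Any monotone $t \times t$ block pattern in $M'$ lifts to one in $M$ by taking unions of coarse blocks; hence $M'$ inherits the avoidance of such a pattern, and the inductive bound $f(t) \cdot (n/k)$ applies to its ones. Since each small block contributes at most $(t-1)^2$ ones, the small-block contribution is at most $(t-1)^2 f(t) \cdot n / k$. Combining all terms gives a recurrence of the form $m \leq C_t \cdot n + (t-1)^2 f(t) \cdot n / k$; choosing $k$ sufficiently larger than $(t-1)^2$ closes this into a linear bound, which is the desired conclusion. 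The main technical obstacle, and essentially the content of \cite{MT04}, is precisely this calibration of $k$ and $f(t)$ so that the recursion yields a clean linear-in-$n$ bound rather than a superlinear one.
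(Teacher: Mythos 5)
The paper does not prove this theorem; it is the Marcus--Tardos theorem, which the paper simply cites from \cite{MT04} (with the interval-minor reformulation due to Fox \cite{fox2013}). Your sketch is a faithful reproduction of the Marcus--Tardos block-decomposition argument and the plan is sound. The one place that deserves explicit care, and where the ordered-graph setting differs slightly from the generic matrix setting, is the following: a monotone $K_{t,t}$ interval minor requires the $t$ ``row'' intervals to lie entirely to the left of the $t$ ``column'' intervals in the \emph{same} linear order on $V(G)$, whereas a generic $J_t$ interval minor of an $n\times n$ 0-1 matrix allows the row and column intervals to be positioned independently. You handle this correctly by working only with strictly-above-diagonal blocks -- the pigeonhole then produces $t$ block-rows strictly to the left of the common block-column, so the assembled $2t$ intervals are automatically disjoint and ordered as required -- but the same care must carry into the recursive step: the contracted matrix $M'$ is upper triangular with a possibly nonzero diagonal (from the straddling diagonal blocks of $M$) that must be absorbed separately, and the induction is cleaner if phrased as a recurrence for the extremal function $g(n)$, with $f(t)$ extracted at the end once $k$ is chosen larger than $(t-1)^2$, rather than presupposing the linear bound $f(t)\cdot(n/k)$.
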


The notion of interval minors was formally introduced by Fox~\cite{fox2013} to prove bounds on the function $f$ of \cref{thm:marcustardos}. The upper bound was later improved by Cibulka and Kynčl~\cite{CK17}, again using interval minors.
\cref{thm:marcustardos} gave a positive answer to a conjecture of Füredi and Hajnal~\cite{FH92}, stating that for every permutation matrix $P$, there exists a constant $c_P$ such that every $n \times n$ binary matrix with at least $c_P \cdot n$ entries 1 contains $P$ as a pattern. 
This result was later generalized to matrices in higher dimension by Klazar and Marcus~\cite{KM07}. 
The corresponding bound was later improved by Geneson and Tian~\cite{GT17}, once more using interval minors.
In~\cite{fox2013}, Fox suggested to conduct a thorough study of ordered graphs avoiding a given interval minor, with the hope of developing a theory analogue to the graph minor theory of Robertson and Seymour. 
There have been several papers going in that direction, see for instance~\cite{MRTW16,MLWG15,JK18}.
In this paper, we continue this line of research by providing a decomposition theorem for $K_t$-interval-minor-free graphs.

The Marcus-Tardos theorem therefore implies that monotone-$K_{t,t}$-interval-minor-free ordered graphs are sparse, as are $K_t$-minor-free graphs (in the non-ordered case). A second analogy between these two classes appears from the computational angle:

\begin{theorem}\label{thm:Kttfpt}
There exists a function $g$ such that testing whether an ordered graph $(G,<)$ contains a monotone $K_{t,t}$ as an interval minor can be done in time $g(t) \cdot |V(G)|$.
\end{theorem}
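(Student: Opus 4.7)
The plan is to reduce to a sparse instance via \cref{thm:marcustardos} and then decide the sparse case by a direct left-to-right algorithm.

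First, I compute $m := |E(G)|$ in linear time. If $m \geq f(t) \cdot n$, where $f$ is the Marcus--Tardos function of \cref{thm:marcustardos}, I output YES; this is correct by \cref{thm:marcustardos}, and a witnessing interval model can moreover be extracted in time bounded by a function of $t$ times $n$ by following the constructive proofs of Marcus--Tardos, see e.g.~\cite{fox2013,CK17}.

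Assume now the sparse regime $m \leq f(t) \cdot n$, so that $G$ has only $O_t(n)$ edges. I set up a left-to-right dynamic program on the positions $1, \ldots, n$. At each step, the DP state consists of: the current interval index $a \in \{1, \ldots, 2t\}$ of the partial model being built, a satisfaction matrix $S \in \{0,1\}^{t \times t}$ recording which of the $t^2$ required edges $I_i \leftrightarrow I_{t+j}$ have been witnessed so far, and a bounded summary of the currently open interval that suffices to attribute future incoming edges to the correct past interval. At each step the DP nondeterministically extends the current interval by $v_p$ or closes it and opens a new one; edges incident to $v_p$ and to previously processed vertices update $S$ via lookups into their recorded interval index. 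Sparseness ensures that the amortized work per vertex is $O_t(1)$, giving total running time $g(t) \cdot n$ for a suitable function $g$.

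The main obstacle is showing that the DP state can be kept bounded independently of $n$: the naive formulation needs to remember the past split positions in order to attribute an edge $v_q v_p$ (with $q < p$) to the interval $I_i$ containing $v_q$. I propose to resolve this by stamping each processed vertex with its interval index at the moment of processing, so that an incoming edge is charged to the correct pair $(i,j)$ as soon as its right endpoint is read, and propagating these stamps through the DP branches. An alternative route, bypassing the DP entirely, is to kernelize: iteratively contract pairs of consecutive vertices whose contraction provably preserves the (non)existence of a monotone $K_{t,t}$ interval minor, until the instance has size bounded by a function of $t$, and finish by brute force. The technical crux of either approach is proving the soundness of the bounded-state DP transitions, respectively of the "safe contraction" criterion.
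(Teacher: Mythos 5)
Your proposal has a genuine gap, and it is exactly the hard part of the theorem.

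Your reduction to the sparse regime via \cref{thm:marcustardos} is fine, and it parallels the paper's win/win strategy. But the sparse case is where the substance lies, and your left-to-right DP does not work as described. The obstacle you name — that an incoming edge $v_q v_p$ must be attributed to the interval $I_i$ containing $v_q$, yet the split positions are themselves part of the nondeterministic choice — is not a technical inconvenience but the core difficulty. Your proposed fix, ``stamping each processed vertex with its interval index and propagating these stamps through the DP branches,'' does not bound the state: the stamp of a vertex depends on the DP branch (different branches place the split points differently), so carrying the stamps along is tantamount to carrying the split positions, which is $\Theta(n^{2t-1})$ state, not $O_t(1)$. Already for the special case where $G$ is a matching, your problem is equivalent to permutation pattern matching for a fixed pattern, which is precisely the problem solved by the Guillemot--Marx FPT algorithm~\cite{GM14} via a sophisticated merge-sequence argument; there is no known elementary left-to-right DP with bounded state, and your sketch gives no reason to believe one exists.

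Your ``alternative route'' — iteratively contract consecutive vertices by a ``safe contraction'' criterion until the instance is bounded — is in fact a gesture toward the machinery the paper actually invokes: a contraction sequence of bounded error for ordered graphs that avoid a monotone $K_{t,t}$ interval minor is precisely what bounded twin-width provides, and computing (an approximation of) such a sequence, then running FO model checking over it to test the interval-minor formula, is the content of the cited results~\cite{BKTW21,BGOSTT24,GM14}. Proving the soundness of the contraction criterion and establishing that it terminates with a $t$-bounded kernel is the entire difficulty, and your proposal leaves it as ``the technical crux'' rather than solving it. So both of your routes reduce to an open step; the paper closes that step by appealing to the twin-width toolkit rather than attempting a direct DP or kernel.
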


The cornerstone of this algorithm is the $\FPT$ algorithm of Guillemot and Marx~\cite{GM14} for detecting a pattern in a permutation. 
Their algorithm amounts to detecting a monotone $K_{t,t}$ as an interval minor in a monotone matching. 
The generalization to arbitrary ordered graphs follows from the notion of twin-width introduced in \cite{BKTW21}.
More specifically, approximating the twin-width of an ordered graph can be done in $\FPT$ time, see \cite{BGOSTT24}. 
In a nutshell, if an ordered graph $(G, <)$ does not contain a monotone complete bipartite graph as an interval minor, then its twin-width is bounded, and therefore dynamic programming can test any first-order formula in $\FPT$ time. 
This gives a win/win algorithm: if the twin-width is large compared to $t$, simply return Yes since $(G, <)$ must contain $K_{t,t}$ as interval minor, and if not, one can test if $K_{t,t}$ is an interval minor of $(G, <)$, as first-order logic can encode interval minors. 
Indeed, $(G, <)$ contains an ordered graph $(H, <')$ on vertex set $u_1 <' \ldots <' u_h$ if and only if there exist vertices $x_1, \ldots, x_{h-1} \in V(G)$ such that $x_1 < \ldots < x_{h-1}$, and for every edge $u_iu_j \in E(H)$, there exists $y_i, y_j \in V(G)$ such that $x_{i-1} < y_i \leq x_i$, $x_{j-1} < y_j \leq x_j$ and $y_iy_j \in E(G)$ (with the obvious adaptation for $u_1$ and $u_h$). 
Let us end this detour about $K_{t,t}$ interval minors with the central open question in this topic: 
Given a graph $G$ which admits an order $<$ avoiding $K_{t,t}$ as an interval minor, can we efficiently compute an order $<'$ such that $(G,<')$ avoids $K_{f(t),f(t)}$ as an interval minor. 
This question is equivalent to ask for a $f(OPT)$-approximation algorithm of the twin-width for sparse graphs (see Theorem 2.12 in~\cite{BGKTW22}).

Let us go back to our original goal of approximately detecting $K_t$ as an interval minor. 
We follow the classical strategy of providing an effective decomposition of $K_t$-interval-minor-free ordered graphs using a bounded number of tractable operations. 
The key-tool for this is the notion of \emph{delayed decomposition}, used in~\cite{PS23} and formally defined in~\cite{BT25} to show that graphs of bounded twin-width are polynomially $\chi$-bounded. 
It was later used as the central tool to show that pattern-avoiding permutations are the product of a bounded number of separable permutations~\cite{BBGT24}. 
We first introduce three operations on classes of ordered graphs. Given two classes $\mathcal C, \mathcal{C}'$ of ordered graphs, we define:

\begin{itemize}
    \item The \emph{substitution closure} of $\mathcal{C}$ is the class $\overline {\mathcal C}$ of ordered graphs which can be obtained from $\mathcal C$ by iterating an arbitrary number of substitutions of elements of $\mathcal C$. The \emph{substitution} of a vertex $v$ of an ordered graph $(G,<)$ by an ordered graph $(H,<')$ consists of replacing $v$ by a copy of $(H,<')$ and joining all neighbors of $v$ to all vertices of $H$. The vertices of $H$ remain ordered according to $<'$, at the position of $v$ in $<$.
    
    \item The \emph{edge union} of $\mathcal C$ and $\mathcal{C}'$ is the class $\mathcal C \oplus \mathcal{C}'$ of all ordered graphs of the form $(G \oplus G',<)$ where $(G,<)\in {\mathcal C}$ and $(G',<)\in {\mathcal C'}$ are two graphs on the same set of vertices ordered by the same linear order $<$, and $G \oplus G'$ is their edge union. 
    \item The \emph{independent concatenation} of $\mathcal{C}$ is the class $\mathcal C^+$ of all ordered graphs $(G^+,<^+)$ which can be obtained from a graph $(G,<) \in \mathcal C$ by adding an independent set $I$ at the beginning or at the end of $<$, which can be connected arbitrarily to the vertices of $G$. Formally, either $V(G)<^+I$ or $I<^+V(G)$.
\end{itemize}

We define the \emph{rank} of an ordered graph $(G, <)$ inductively, as follows:
\begin{itemize}
    \item The empty graph has rank $0$.
    \item The rank of $(G, <)$ is the smallest integer $r$ such that $(G, <)$ can be built from graphs of rank at most $r-1$, using one of the following operations: substitution closure, edge union or independent concatenation.
\end{itemize}
Let us illustrate this parameter. The class of ordered edgeless graphs has rank 1, as they can be obtained by an independent concatenation from the empty graph. The class of monotone bipartite graphs has rank 2, as we can perform another round of independent concatenation. The class of ordered complete graphs has rank at most 3, since the edge graph $K_2$ is obtained at rank 2, and its closure under substitution gives all complete graphs. A slightly more involved example is that of the ordered path $(P_n,<)$, in which each vertex is joined to its successor in $<$. 

Note that the subgraph $M_o$ of $P_n$ consisting of the odd indexed edges is a matching consisting of edges $e_1<e_3<\dots$, hence can obtained by substituting the vertices of an independent set by edges.
In particular $M_o$ has rank 3, and the even indexed edges subgraph $M_e$ also has rank 3. Finally, $P_n=M_o \oplus M_e$ has rank 4. This is illustrated in~\cref{fig:P6}.

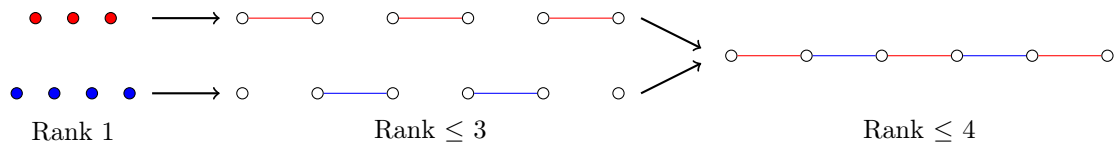
\begin{figure}[ht]
	\centering
	\begin{tikzpicture}[
		every node/.style={circle,draw,inner sep=1.5pt}
		]

		\node[fill = red] (r1) at (0.25,1) {};
		\node[fill = red] (r2) at (0.75,1) {};
		\node[fill = red] (r3) at (1.25,1) {};
		
		\node[fill =blue] (b1) at (0,0) {};
		\node[fill =blue] (b2) at (0.5,0) {};
		\node[fill =blue] (b3) at (1,0) {};
		\node[fill =blue] (b3) at (1.5,0) {};

		\draw[->,thick](1.8,1)to(2.7,1);
		\node (rr1) at (3,1) {};
		\node (rr2) at (4,1) {};
		\node (rr3) at (5,1) {};
		\node (rr4) at (6,1) {};
		\node (rr5) at (7,1) {};
		\node (rr6) at (8,1) {};
		\draw[red] (rr1) -- (rr2);
		\draw[red] (rr3) -- (rr4);
		\draw[red] (rr5) -- (rr6);
		
		\draw[->,thick](1.8,0)to(2.7,0);
		\node (bb1) at (3,0) {};
		\node (bb2) at (4,0) {};
		\node (bb3) at (5,0) {};
		\node (bb4) at (6,0) {};
		\node (bb5) at (7,0) {};
		\node (bb6) at (8,0) {};
		\draw[blue] (bb2) -- (bb3);
		\draw[blue] (bb5) -- (bb4);

		\draw[->,thick](8.3,1)to(9.1,0.6);
		\draw[->,thick](8.3,0)to(9.1,0.4);
		\node (p1) at (9.5,0.5) {};
		\node (p2) at (10.5,0.5) {};
		\node (p3) at (11.5,0.5) {};
		\node (p4) at (12.5,0.5) {};
		\node (p5) at (13.5,0.5) {};
		\node (p6) at (14.5,0.5) {};
		\draw [red](p1) -- (p2);
		\draw [blue](p2) -- (p3);
		\draw [red] (p3)-- (p4);
		\draw[blue](p4) -- (p5);
		\draw[red](p5) -- (p6);

		\node[draw = none] at(.75,-0.5) {Rank 1};
		\node[draw = none] at(5.5,-0.5) {Rank $\le$ 3};
		\node[draw = none] at(12,-0.5) {Rank $\le$ 4};
	\end{tikzpicture}
	\caption{Construction showing that the ordered $P_6$ has rank at most $4$. The edge graph $K_2$ has rank $2$, which is why we go from rank $1$ to rank $\le3$.}
	\label{fig:P6}
\end{figure}

The central result of this paper is the following:

\begin{restatable}{theorem}{bdrk}\label{thm:boundedrank}
Every $K_t$-interval-minor-free ordered graph $(G, <)$ has bounded rank.
\end{restatable}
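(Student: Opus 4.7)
The plan is to prove the theorem by an outer induction on $t$, constructing a rank bound $f(t)$ from the bound $f(t-1)$ via a delayed-decomposition argument. The base cases $t \le 2$ are immediate, since a $K_2$-interval-minor-free ordered graph is edgeless and has rank at most $1$ (one independent concatenation from the empty graph).

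For the inductive step, fix $t \ge 3$ and a $K_t$-interval-minor-free ordered graph $(G,<)$. I would first reduce to the case where $(G,<)$ is \emph{prime} with respect to the modular decomposition: the modular quotient $G/\mathcal{M}$ is still $K_t$-interval-minor-free, each module inherits the property, and $G$ is recovered from these pieces at the cost of one application of substitution closure. Thus it suffices to bound the rank of prime $K_t$-interval-minor-free ordered graphs.

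The core argument treats the prime case by a delayed decomposition scanning the vertices left to right. At each step $i$, I would maintain a partition of $\{v_1,\ldots,v_i\}$ into \emph{atoms}, each labelled by the pattern of edges it currently sends to the not-yet-scanned vertices. An atom is \emph{closed} once its pattern is determined and can no longer be refined by any later vertex; closed atoms are grouped into monotone bipartite strata, each of rank at most $2$ (built by an independent concatenation on top of an edgeless graph) and combined with the intra-atom structure by edge union. The key claim is that the number of patterns alive at any moment, and the total number of closures, are both bounded by a function of $t$. If this claim holds, $G$ assembles from $O_t(1)$ rank-$f(t-1)$ pieces by $O_t(1)$ edge unions, independent concatenations, and substitutions, giving a recursion $f(t) = O_t(f(t-1))$ that closes the induction.

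The main obstacle is to bound the number of simultaneously live patterns using only the $K_t$-interval-minor-free hypothesis. The natural tool is a Marcus--Tardos style extraction on the pattern matrix: too many distinct patterns force a large monotone bipartite pattern, but, as the introduction points out, a plain monotone $K_{s,s}$ does not by itself yield even a $K_4$ interval minor. To bridge this gap I would exploit primeness to guarantee that each atom carries, internally, a pre-clique fragment of size roughly $t$: combining many such atoms across a large monotone pattern would then complete a $K_t$ interval minor, contradicting the assumption. Formalising this clique-promotion step and tuning the three nested parameters (atom size, pattern threshold, modular depth) is where I expect the bulk of the technical difficulty to lie, and where the triply exponential dependence in $t$ announced in the abstract naturally arises.
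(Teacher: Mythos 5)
Your proposal diverges significantly from the paper's proof, and unfortunately the divergence exposes a real gap rather than a genuinely alternative route.

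The paper does \emph{not} do an outer induction on $t$. It introduces a \emph{delayed rank} (defined via a canonical distinguishing delayed decomposition, with nodes labelled $R$/$L$/$O$ according to the direction of their cousin-edges, and four refined quotient graphs per quotient indexed by the edge types $R'R'$, $R'L'$, $L'R'$, $L'L'$), proves that delayed rank $r$ implies ordinary rank at most $7r+2$, and then — in the reverse direction — proves that delayed rank at least $3r-2$ forces a $K_r$ interval minor. The progress measure in that last step is the size of a \emph{looped} complete interval minor, tracked through three flavours (left-lazy, right-lazy, lazy), and the induction runs over the rank parameter, with monotone bipartite graphs at the base. There is no reduction to prime graphs and no counting of neighbourhood patterns.

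The concrete gap in your approach is the claim that ``the number of patterns alive at any moment, and the total number of closures, are both bounded by a function of $t$.'' This is false already for $t=4$: a \emph{prime} monotone $K_{n/2,n/2}$ with an arbitrary (connected, prime) bipartite structure is $K_4$-interval-minor-free, yet the set of right-neighbourhoods of the left side can have $\Theta(n)$ distinct patterns. Your proposed rescue — that primeness forces each atom to contain a ``pre-clique fragment of size roughly $t$'' that could be combined with a monotone Marcus--Tardos pattern to build a $K_t$ — cannot work here, because these graphs are triangle-free: there are no clique fragments to promote. More broadly, a prime $K_t$-interval-minor-free graph can be extremely sparse and locally structureless, so primeness does not buy the density you need. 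The paper sidesteps this obstruction precisely by \emph{not} trying to bound a pattern count; instead, it declares monotone bipartite graphs to be the base class (rank $\le 2$, delayed rank $0$) and then shows that each pass of the refined-quotient construction strictly grows a looped complete interval minor, so that bounded $K_t$-interval-minor-freeness caps the recursion depth. A secondary issue is that your outer induction on $t$ is never justified: you do not explain why the pieces produced by your decomposition would be $K_{t-1}$-interval-minor-free rather than still $K_t$-interval-minor-free, which is what would be needed for the recursion $f(t)=O_t(f(t-1))$ to close.
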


The cornerstone of \cref{thm:boundedrank} is the interplay between the two operations of substitution closure and edge union (as is the case for the construction of a path from two matchings). 
Before introducing the notion of delayed decomposition, let us first formalize what it means exactly that a graph $G$ belongs to the substitution closure of a class $\mathcal C$. 
The adequate representation of $G$ must take into account that some vertices have been repeatedly substituted by a graph of $\mathcal C$, hence $G$ can be expressed as a tree of substitutions. 
Formally, a \emph{$\mathcal C$-substitution tree} of $G$ is a rooted tree $T$ whose leaves are the vertices of $G$. 
Moreover, every internal node $x$ of $T$ is labeled by a graph $G_x$ of $\mathcal C$, whose vertices are the children of $x$ in $T$. 
Finally, two vertices $u, v$ of $G$ form an edge if and only if given their closest common ancestor $x$, the two children $u',v'$ of $x$ which are the respective ancestors of $u, v$ form an edge in $G_x$.
By construction, the graphs in the substitution closure of $\mathcal C$ are precisely the ones representable by a $\mathcal C$-substitution tree. 
These structured trees were introduced by Gallai to study partial orders~\cite{Gallai67}.
The most popular examples of substitution trees are those used to decompose cographs ($P_4$-free graphs) using binary trees in which the graphs $G_x$ are the edge and the non edge. 
These trees are well-suited for ordered graphs, as the order $<$ can be represented as the left-to-right order on the leaves.

However, most graphs $G$ do not admit a non-trivial substitution tree, i.e. one in which the root is not labeled by $G$. 
For instance paths of length at least 3 are indecomposable (or prime) with respect to substitutions.
A simple way to strengthen substitution trees is to delay the effect of the graphs $G_x$: instead of creating edges between children, they act on their grandchildren. 
Let us formalize this:

A \emph{delayed structured tree} is a rooted tree $T$ whose leaves are the vertices of a graph $G$ (the \emph{realization} of $T$).
Moreover, every internal node $x$ of $T$ is labeled by a \emph{quotient graph} $G_x$ whose vertices are the grandchildren of $x$ in $T$. 
Finally, two vertices $u,v$ of $G$ are connected if and only if given their closest common ancestor $x$, the two grandchildren $u',v'$ of $x$ which are the respective ancestors of $u,v$ are connected in $G_x$. For an example, see \cref{fig:delayed}. 
Authorizing this delay results in a tool which is much more expressive than substitution trees. 
In particular, given a class of graphs $\mathcal{C}$, the class $\mathcal C'$ of realizations of delayed structured trees whose quotient graphs $G_x$ belong to $\mathcal C$ is much harder to grasp than the mere substitution closure of $\mathcal C$.
 
However, the class $\mathcal C'$ is not too complex compared to $\mathcal C$: given the delayed structured tree $T$, consider the quotient graphs $G_x$ of the nodes $x$ with even depth, and of the ones with odd depth.
Now form two trees $T_e$ and $T_o$ from $T$, by setting in $T_e$ all quotient graphs $G_x$ at odd depth as edgeless graphs, and setting in $T_o$ all quotient graphs $G_x$ at even depth as edgeless graphs. 
The key observation is that the realization $G_e$ of $T_e$ belongs to the substitution closure of $\mathcal C$. 
The same holds for $G_o$, and thus $G$ is the edge union of two graphs in the substitution closure of $\mathcal C$.

In a nutshell, the proof of \cref{thm:boundedrank} is now straightforward: we just have to show that every ordered graph $(G,<)$ without $K_t$-interval minor is the realization of a delayed structured tree whose quotient graphs are \emph{simpler} than $G$, and that these simpler graphs are again decomposable into simpler objects, and that this process has a bounded height of recursion. This is the main technical part of the paper. 
Note that this process is too tame to create high-entropy objects such as monotone bipartite graphs. 
Here a choice has to be made: either declare that the basic class is indeed all monotone bipartite graphs, and just consider substitution closure and edge union, or set the empty graph as our basic class, and authorize independent set concatenation.
We chose the latter convention as it fits more in our algorithmic purpose, but we feel that the former is more in the spirit of classical graph decompositions, with only two tame operations, and the basic class  consisting of the indecomposable (or prime) monotone bipartite graphs. 

The proof of \cref{thm:boundedrank} is algorithmic, and a sequence of operations for constructing $(G, <)$ can be effectively computed in polynomial time by iterating delayed decompositions. 
This is the first phase of our algorithm to detect $K_t$ as an interval minor: either we fail to achieve bounded rank and then find $K_t$ as an interval minor, or we compute a sequence of operations achieving bounded rank for $(G, <)$. 
Unfortunately, a graph can have bounded rank and still contain a $K_t$ interval minor. 
Indeed, as we saw before, complete graphs have rank at most 3. The nice point is that large complete subgraphs are the only reason why an ordered graph of bounded rank can contain arbitrarily large complete interval minors. The second phase of the algorithm uses the decomposition of $(G, <)$ and either output a $K_t$ subgraph, or certifies that $G$ has no $K_{f(t)}$ interval minor. As a result, we show:

\begin{restatable}{theorem}{thealgo} \label{thm:the-algo}
There is a triply exponential function $f$ and a decision algorithm which, given as input an ordered graph $(G,<)$ with $n$ vertices and $m$ edges and an integer $t$, satisfies the following:\begin{itemize}
    \item If the algorithm returns Yes then $(G, <)$ contains $K_t$ as an interval minor.
    \item If the algorithm returns No then $(G, <)$ does not contain  $K_{f(t)}$ as an interval minor.
    \item The algorithm runs in time $O(t \cdot mn^2)$.
\end{itemize}
\end{restatable}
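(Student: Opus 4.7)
The plan is to combine the decomposition theorem (Theorem \ref{thm:boundedrank}) with a search for large cliques into a two-phase algorithm. In the first phase, I would make the proof of Theorem \ref{thm:boundedrank} constructive: by iterating delayed decompositions, we either build in polynomial time a representation of $(G,<)$ via a bounded number of applications of the three operations (substitution closure, edge union, independent concatenation), or we extract a $K_t$ interval minor from some dense monotone bipartite pattern discovered along the way. In the second phase, we search $G$ for a $K_t$ subgraph, which is automatically a $K_t$ interval minor; if none is found, the bounded-rank decomposition from the first phase is used to certify that no $K_{f(t)}$ interval minor exists.

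For the second phase, the key auxiliary statement I would prove by induction on the rank $r$ is that every ordered graph of rank at most $r$ with clique number less than $s$ contains no $K_{h(r,s)}$ interval minor, where $h$ grows exponentially in $r$. Each operation contributes a controlled blow-up. An independent concatenation cannot host any clique on the added side, so its contribution to a complete interval minor is limited to a bipartite-Ramsey factor. For an edge union $G \oplus G'$, a $K_s$ interval minor whose model intervals are identified in both factors yields a $2$-edge-coloring of $K_s$, so classical Ramsey gives a monochromatic $K_{\Omega(\log s)}$ interval minor in one of $G$ or $G'$, to which we recurse. For the substitution closure, a $K_s$ interval minor either concentrates in a single quotient with many model intervals (allowing recursion inside that quotient) or spreads across many quotients (producing a proper interval minor in the outer graph, to which we recurse). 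Since rank is itself bounded by a function of $t$ coming from Phase 1, composing these bounds yields the triply exponential $f$.

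Algorithmically, the first phase performs a bounded number of decomposition rounds; each round scans in $O(mn^2)$ time for an appropriate delayed structured tree by enumerating candidate ancestor pairs and verifying adjacency patterns, and whenever a round fails it produces in the same time a $K_t$ interval minor from the obstructing monotone dense pattern. The $K_t$-subgraph search of the second phase is realized by iterated neighborhood intersections starting from an edge and extending up to $t$ times, and fits within a $O(t \cdot mn)$ budget (or $O(t \cdot mn^2)$ with naive implementation). Combining both phases yields the stated $O(t \cdot m n^2)$ runtime.

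The main obstacle will be the effectiveness of the first phase: the existential arguments inside the proof of Theorem \ref{thm:boundedrank} must be refined so that each step explicitly produces either a delayed structured tree together with strictly simpler quotient graphs, or an explicit $K_t$ interval minor, all in polynomial time. This requires careful bookkeeping of which vertex subsets become the quotients, plus a detection routine for the monotone dense patterns obstructing decomposition, together with a lifting procedure turning each such pattern into a $K_t$ interval minor. Bounding $f(t)$ tightly is a secondary but non-automatic concern: the three sources of exponential blow-up (the rank as a function of $t$, the inductive Ramsey-style arguments of the second phase, and the Marcus--Tardos-type constants arising in the first phase) must each be tracked to produce the claimed triply exponential dependence.
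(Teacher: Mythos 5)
Your high-level plan (decompose, then detect the large-clique obstruction) matches the paper, but two of the load-bearing steps are wrong or infeasible as stated.

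\textbf{The Ramsey step is too weak for the claimed bound.} For the edge-union case you invoke classical Ramsey to pass from a $K_s$ interval minor of $G\oplus G'$ to a monochromatic $K_{\Omega(\log s)}$ interval minor of one factor. That is correct qualitatively, but each level of recursion then costs a full logarithm; iterated over the $O(t)$ levels of rank, the final $f(t)$ becomes a tower of height $\Theta(t)$, not triply exponential. The paper instead proves a dedicated Ramsey-type result for complete interval minors (\cref{lem:improved-ramsey}): every two-coloring of ordered $K_n$ contains a monochromatic complete interval minor of size $2^{\sqrt{\log n}-1}$, and a near-matching upper bound $2^{2\sqrt{\log n\log\log n}}$ showing this is essentially the correct order. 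This is the decisive advantage of interval minors over subgraphs (you may pick \emph{intervals}, not arbitrary vertices), and it is what makes the loss per level only a fourth root of the logarithm rather than a full logarithm, hence the doubly-exponential decay $g_t(r)\approx 4^{3t-2-r}$ in \cref{lem:technical} and the final triply exponential $f$. Without this lemma, the triply exponential claim has no support.

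\textbf{The clique-detection phase is not polynomial as described.} You propose to search $G$ for a $K_t$ subgraph directly, by ``iterated neighborhood intersections.'' Deciding the existence of a $K_t$ subgraph is $\NP$-hard, and the greedy/intersection scheme you sketch does not find $K_t$ when it exists, so this phase is either incorrect or super-polynomial. The paper sidesteps this: it never looks for a clique in $G$ directly. It computes the delayed decomposition trees of all graphs in $\mathcal G_0(G)\cup\cdots\cup\mathcal G_{3t-2}(G)$ and checks for a \emph{$(2t-3)$-heavy leaf} (\cref{lem:algo-heavy-leaf}), a purely local, linear-time test on each tree. A heavy leaf implies a $K_t$ subgraph (\cref{lem:cliquepath2}) so ``Yes'' is sound, while the absence of heavy leaves is exactly what rules out the long-interval-path outcome of the tree dichotomy (\cref{lem:branchtree,lem:cliquepath}), forcing the branching-node case and keeping the Ramsey recursion alive. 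So the heavy-leaf test is not merely a clique finder; it is the switch that makes the completeness argument go through, and it is what keeps the algorithm polynomial.

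Two smaller points: the paper's Phase 1 is not ``scan in $O(mn^2)$ time by enumerating ancestor pairs'' — the distinguishing delayed decomposition is computed in linear time (\cref{thm:compute-delayed}), and the quadratic factors come from iterating this over the at most $O(mn)$ graphs in each $\mathcal G_r(G)$ (\cref{lem:bound-size}). And independent concatenation is handled by an additive loss (removing an initial stable set can only delete a bounded number of intervals from a model), not by a ``bipartite-Ramsey factor.''
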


Before diving into the details, we now discuss two hardness results in the context of $K_t$-minors.

\subsection*{Hardness of ordering a graph}

A natural question regarding interval minors in ordered graphs is to find the minimum $t$, such that an input graph $G$ admits an ordering $<$ with no $K_t$ interval minor. Denote this value by $kim(G)$. This question is particularly interesting since the same problem for $K_{t,t}$ instead of $K_t$ amounts to approximating the twin-width of a sparse graph. Unfortunately the answer for $K_t$ is as hard as approximating the chromatic number $\chi$ of a graph:

\begin{lemma}\label{lem:orderable}
The parameters $kim(G)$ and $\chi(G)$ are functionally equivalent.
\end{lemma}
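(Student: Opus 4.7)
The plan is to establish the functional equivalence by proving two inequalities: $kim(G)\le 2\chi(G)$ by a direct ordering argument, and $\chi(G)\le h(kim(G))$ for some function $h$ using the byproduct of \cref{thm:boundedrank} announced in the abstract.

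For the first direction, set $k=\chi(G)$, fix a proper coloring with color classes $C_1,\dots,C_k$, and order the vertices of $G$ so that $C_1<C_2<\dots<C_k$ (with arbitrary order inside each class). I claim that $(G,<)$ admits no $K_{2k}$ interval minor. Given intervals $I_1<\dots<I_t$ realizing a $K_t$ interval minor, call $I_j$ \emph{pure} if it meets a single color class and \emph{straddling} otherwise. Since each $C_i$ is independent, two pure intervals contained in the same $C_i$ share no edge, so at most $k$ intervals are pure. For each of the $k-1$ consecutive boundaries $C_i\,|\,C_{i+1}$, at most one interval straddles it: if $I_j<I_{j'}$ both contained vertices in $C_i$ and in $C_{i+1}$, then $I_j$ would contain a vertex of $C_{i+1}$ placed before some vertex of $C_i$ belonging to $I_{j'}$, contradicting the color-block ordering. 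Every straddling interval straddles at least one boundary, so there are at most $k-1$ of them. Combining, $t\le k+(k-1)=2k-1$, which yields $kim(G)\le 2k=2\chi(G)$.

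For the reverse direction, the byproduct mentioned in the abstract asserts that every ordered graph avoiding $K_t$ as an interval minor has chromatic number bounded by some function $h(t)$. Picking an ordering $<$ achieving $kim(G)=t$, the ordered graph $(G,<)$ is $K_t$-interval-minor-free, and hence $\chi(G)\le h(kim(G))$.

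The only structural observation to make is the pure/straddling dichotomy of intervals with respect to the color-block ordering; once this is in place the first inequality is a one-line count. The genuine content lies on the reverse side, where the bound $h$ rests on \cref{thm:boundedrank} — but since that theorem is available to us, the lemma itself reduces to the elementary coloring-order argument above.
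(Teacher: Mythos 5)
Your forward direction is correct and in fact more detailed than what the paper gives: the paper dismisses this step in one sentence, whereas you give the clean pure/straddling dichotomy with the injective map from straddling intervals to color-class boundaries, yielding the bound $t \le 2k-1$. That is a nice clarification.

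However, your reverse direction has a genuine gap. You write that ``the byproduct mentioned in the abstract asserts that every ordered graph avoiding $K_t$ as an interval minor has chromatic number bounded by some function $h(t)$,'' and then simply apply that statement. But that statement \emph{is} the reverse direction of the lemma you are trying to prove; citing it as a known fact is circular. What the paper actually does is derive this from \cref{thm:boundedrank} in two nontrivial steps. First, it proves (by induction on $r$) that for each $r$ the class $\mathcal{C}_r$ of ordered graphs of rank at most $r$ is \emph{polynomially $\chi$-bounded}: independent concatenation increases $\chi$ by at most $1$; edge union multiplies the two chromatic numbers; and, crucially, substitution closure preserves polynomial $\chi$-boundedness, which is the theorem of Chudnovsky, Penev, Scott and Trotignon~\cite{CPST13}. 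Second, it observes that if $(G,<)$ has no $K_t$ interval minor then in particular $G$ has no $K_t$ subgraph, so $\omega(G) < t$, and combining this clique bound with $\chi$-boundedness of the bounded-rank class (the class $G$ lies in by \cref{thm:boundedrank}) yields the chromatic-number bound. You remark at the end that ``the bound $h$ rests on \cref{thm:boundedrank},'' but \cref{thm:boundedrank} alone only gives bounded rank; it says nothing about chromatic number without the $\chi$-boundedness claim and the $\omega(G)<t$ observation, both of which are missing from your argument.
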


\begin{proof}
If a graph has chromatic number $t$, ordering the vertices according to the color classes directly gives an order without $K_{2t}$ interval minor. Conversely, if a graph $G$ has an ordering $<$ such that $(G,<)$ does not contain a $K_t$ interval minor, \cref{thm:boundedrank} implies that $(G, <)$ has bounded rank.
Say that a class $\mathcal{C}$ of graphs is \emph{$\chi$-bounded} if there exists a function $h$ such that every graph $G$ in ${\mathcal C}$ satisfies $\chi(G)\leq h(\omega (G))$, where $\omega (G)$ is the maximum size of a clique of $G$. We speak of \emph{polynomial} $\chi$-boundedness if $h$ is a polynomial.

\begin{claim*}
For every $r \geq 0$, the class ${\mathcal C}_r$ of ordered graphs with rank at most $r$ is polynomially $\chi$-bounded.
\end{claim*}

\begin{proof}[\textit{Proof of the Claim}]
We prove it by induction on $r$. The statement is trivial for $r = 0$ since the empty graph has clique number $0$ and chromatic number $0$.
For the induction step, we just have to show that the three operations preserve polynomial $\chi$-boundedness.
Independent concatenation increases the chromatic number by at most one and cannot decrease the clique number. 
Similarly, the chromatic number of the edge union of two graphs is at most the product of their chromatic numbers, and the clique number of the edge union of two graphs is at least the maximum of the two clique numbers.
The fact that the substitution closure preserves polynomial $\chi$-boundedness was shown by Chudnovsky, Penev, Scott and Trotignon~\cite{CPST13}.
\end{proof}

Then, if $(G, <)$ does not contain a $K_t$ interval minor then $G$ does not contain a clique of size $t$, and therefore its chromatic number is bounded.
\end{proof}

The fact that $K_t$-interval-minor-free ordered graphs have bounded chromatic number 
directly implies that the class of ordered graphs which do not contain some (arbitrary but fixed) ordered matching as a subgraph has bounded chromatic number (the two statements are in fact easily equivalent). This problem was introduced in \cite{ARU16} by Axenovich, Rollin and Ueckerdt, where they ask the question for some specific matchings. A far-reaching generalization was announced by Bria\'nski, Davies and Walczak~\cite{BDW}: for any ordered matching $M$, the class of ordered graphs which do not contain $M$ as an induced ordered subgraph is $\chi$-bounded. A study of the list chromatic number of ordered graphs avoiding an induced subgraph was also conducted by Hajebi, Li and Spirkl~\cite{HLS24}.

\subsection*{Hardness of detecting complete interval minors in ordered graphs}

In our main result, we give a polynomial approximate algorithm to detect whether an ordered graph contains a $K_t$ interval minor. For the hardness part, we show that deciding whether the complete interval minor number is at least $t$ is $\NP$-complete in general. Formally, we consider the following problem.

\begin{tcolorbox}[colback=white,colframe=black]
\textsc{Complete Interval Minor} \\
\textbf{Input:} An ordered graph $G$ and an integer $t$. \\
\textbf{Question:} Is $K_t$ an interval minor of $G$?
\end{tcolorbox}

\begin{theorem} \label{thm:CIM-NPC}
\textsc{Complete Interval Minor} is $\NP$-complete.
\end{theorem}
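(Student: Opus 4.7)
A succinct certificate of "yes" consists of $t-1$ cut points in $(V(G),<)$ defining $t$ non-empty contiguous intervals $I_1 < \cdots < I_t$ partitioning $V(G)$, together with, for each pair $1 \le i < j \le t$, a witness edge $e_{ij} \in E(G)$ with one endpoint in $I_i$ and the other in $I_j$. This is valid because $K_t$ is an interval minor of $(G,<)$ if and only if such a contiguous partition with pairwise inter-part edges exists (one direction is the definition of interval contraction, and for the other we extend any $t$ disjoint intervals having pairwise edges to a partition by absorbing the gaps into adjacent intervals, which preserves the pairwise edge condition). The certificate has size $O(t^2 \log(n+m))$ and is checkable in $O(t^2 + n + m)$ time.

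\textbf{For $\NP$-hardness}, the plan is to reduce from the classical $\NP$-hard problem \textsc{Multicolored Clique} (or equivalently \textsc{Clique}): given a graph $G$ partitioned into $k$ color classes $V_1,\dots,V_k$, decide whether there is a $k$-clique selecting exactly one vertex from each class. Given such an input, I would construct in polynomial time an ordered graph $(H,<)$ and an integer $t$ (polynomial in $k$) so that $(H,<)$ has $K_t$ as an interval minor iff $G$ contains a multicolored clique. The natural skeleton is to place, in the order $<$, one "column gadget" $C_v$ for each $v \in V(G)$, grouped by color in the order $V_1 < V_2 < \cdots < V_k$; each gadget carries $k$ slot vertices indexed by the prospective clique position, and for each edge $v_iv_j$ of $G$ with $v_i \in V_p$, $v_j \in V_q$ ($p<q$) we place edges between slot $p$ of $C_{v_i}$ and slot $q$ of $C_{v_j}$. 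A multicolored clique $\{u_1,\dots,u_k\}$ then translates directly into $k$ singleton intervals realizing $K_k$ as an interval minor.

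\textbf{The main obstacle}, and the technical heart of the reduction, is to prevent "cheating" interval minor models whose intervals straddle two distinct column gadgets (or two distinct color classes), thereby pooling the neighborhoods of multiple vertices of $G$. Without additional care this is fatal: already $G = C_4$ produces a $K_3$ interval minor via three straddling intervals, although $C_4$ is triangle-free. To rule this out I would insert between consecutive column gadgets carefully engineered padding blocks equipped with guard edges, designed so that any interval straddling a gadget boundary either fails to provide a witnessing edge to some other interval of the model, or forces the total number of valid intervals of the model to drop below $t$. Establishing that such padding can actually enforce a "one interval per column" discipline — and hence a bijective correspondence between $K_t$ interval minors of $(H,<)$ and multicolored $k$-cliques of $G$ — is the delicate combinatorial point on which the hardness reduction stands or falls. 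Combined with the easy $\NP$ membership above, this yields $\NP$-completeness.
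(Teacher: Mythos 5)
Your $\NP$-membership argument is correct and matches the paper's. But the $\NP$-hardness half is not a proof: you explicitly leave open ``the delicate combinatorial point on which the hardness reduction stands or falls,'' namely designing padding that forces a one-interval-per-column discipline. As you yourself note with the $C_4$ example, straddling intervals are exactly what can make the naive reduction unsound, and you give no construction or argument that actually rules them out. A sketch that names its own crucial unsolved sub-problem is a plan, not a reduction, so this does not establish $\NP$-hardness.

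The paper's reduction is from plain \textsc{Clique} and avoids the straddling issue by a much lighter device: order the vertices of $G$ arbitrarily as $v_1 < \cdots < v_n$ and insert a \emph{universal} vertex $u_i$ between each consecutive pair, then ask for $K_t$ with $t = n-1+k$. Any interval containing some $u_i$ automatically has an edge to every other interval, so those $n-1$ intervals never need auditing; and since there are only $n-1$ vertices $u_i$, at least $k$ of the $t$ intervals must avoid them all, hence (because every gap between two $v_i$'s contains a $u_j$) be singletons $\{v_{i_\ell}\}$. Pairwise edges among those $k$ singleton intervals are exactly pairwise adjacency in $G$, so $K_t$ interval minors correspond precisely to $k$-cliques. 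In other words, rather than trying to \emph{forbid} straddling intervals with guard gadgets, the paper makes straddling intervals harmless (because universal) and uses a counting argument to force enough honest singletons. If you want to salvage your gadget-based approach you would need to actually exhibit the padding blocks and prove they enforce the discipline, which is precisely the work the universal-vertex trick sidesteps.
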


\begin{proof}
We first show that the problem is in $\NP$. To do so, observe that we can describe a $K_t$ interval minor model of $G$ by giving the intervals. Then, it can easily be checked in polynomial time that these intervals indeed form a model of $G$.

We show that the problem is $\NP$-hard by reduction from \textsc{Clique}. Consider an instance $(G, k)$ of \textsc{Clique}: we are asked whether $G$ contains a clique of size at least $k$.
We build an instance $(\hat{G}, t)$ of \textsc{Complete Interval Minor} as follows.
Write $V(G) = \{v_1, \ldots, v_n\}$. Consider the ordered graph $(\hat{G}, <)$ defined as \begin{itemize}
    \item $V(\hat{G}) = \{v_1, \ldots, v_n\} \cup \{u_1, \ldots, u_{n-1}\}$, where the $u_i$ are fresh vertices,
    \item $v_iv_j \in E(\hat{G})$ if and only if $v_iv_j \in E(G)$, $v_iu_j \in E(\hat{G})$ for every $i \in [n]$ and $j \in [n-1]$, and $u_iu_j \in E(\hat{G})$ for every $i \neq j \in [n-1]$,
    \item $v_1 < u_1 < v_2 < u_2 < \ldots < v_{n-1} < u_{n-1} < v_n$.
\end{itemize}
Informally, the ordered graph $(\hat{G}, <)$ is obtained from $G$ by arbitrarily ordering the vertices of $G$, and then adding a universal vertex between any two vertices of $G$. See \cref{fig:Ghat} for an illutration.
Finally, we set $t = n-1+k$.
Note that $((\hat{G}, <), t)$ can be built from $(G, k)$ in polynomial time.

We now prove that $G$ has a clique of size $k$ if and only if $K_t$ is an interval minor of $(\hat{G}, <)$.
Suppose first that $G$ has a clique of size $k$ induced by the vertices $v_{i_1}, \ldots, v_{i_k}$.
Then, the vertices $v_{i_1}, \ldots, v_{i_k}, u_1, \ldots, u_{n-1}$ induce a clique of size $k+n-1 = t$ in $\hat{G}$, hence $(\hat{G}, <)$ contains $K_t$ as an interval minor.
Conversely, suppose that $K_t$ is an interval minor of $(\hat{G}, <)$. Since there are only $n-1$ vertices $u_j$, there are at least $t - (n-1) = k$ intervals in the interval model of $K_t$ which do not contain a vertex $u_j$.
By definition of the order $<$, any interval that does not contain a vertex $u_j$ is of the form $\{v_i\}$.
Let $\{v_{i_1}\}, \ldots, \{v_{i_k}\}$ be these $k$ intervals. Then, the vertices $v_{i_1}, \ldots, v_{i_k}$ induce a clique of size $k$ in $G$.
\end{proof}

\begin{figure}[ht]
    \centering
    \begin{tikzpicture}[every node/.style={circle,draw,inner sep=1.5pt},
	arc/.style={red, bend left=30}
	]	
	
	\node (p1) at (-4,0) {};
	\node (p2) at (-3,0) {};
	\node (p3) at (-2,0) {};

	\draw[arc, black] (p1) to (p2);
	\draw[arc, black] (p2) to (p3);
	\node[draw = none] at (-3,-0.5){$G$};
	
	\node (1) at (0,0) {};
	\node[draw = none,fill = red]   (2) at (1,0) {};
	\node (3) at (2,0) {};
	\node[draw = none,fill = red]   (4) at (3,0) {};
	\node (5) at (4,0) {};
	
	\draw[arc, black] (1) to (3);
	\draw[arc, black] (3) to (5);
	\foreach \i in {2,4} 
		\foreach \j in {1,...,5} {
			\ifnum\i<\j
			\draw[arc] (\j.south) to (\i.south);
			\else
			\draw[arc] (\i.south) to (\j.south);
			\fi
		}
	\node[draw = none] at (2,-0.8){$\hat{G}$};
    \end{tikzpicture}
    \caption{A graph $G$ drawn with an arbitrary order and the corresponding ordered graph $\hat{G}$. The fresh vertices $u_1$ and $u_2$ are depicted in red, as well as the edges incident to them.}
    \label{fig:Ghat}
\end{figure}
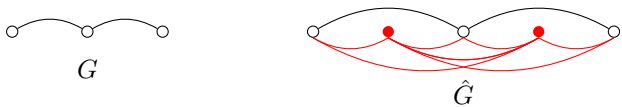

\subsection*{Perspectives}

The obvious next step is of course to obtain a more digest approximation factor than the triply exponential function $f$. We did not try very hard to show lower bounds, but failed to rule out constant factor approximation. A very natural problem to ask is the existence of an $\FPT$ algorithm to find a $K_t$ interval minor. This looks really challenging, as the only strategy seems to obtain a much more precise description of $K_t$-interval-minor-free ordered graphs which would allow dynamic programming. One of the main conclusions of this study is the versatility of delayed decompositions. It really suffices to apply them in a canonical way, and wonder whether the quotient graphs are simpler than the original one. For which (non necessarily ordered) graph classes does this machinery provide a bounded rank decomposition?

\subsection*{Organization of the paper}

In \cref{sec:delayed_decomp}, we formally define delayed decompositions, show how to compute them efficiently, and review some of their properties.
Then, in \cref{sec:delayed_rank}, we introduce a variant of rank tailored for algorithmic use, the \emph{delayed rank}, which is based on delayed decompositions. We study some of its basic properties, before proving that ordered graphs with large delayed rank contain large complete interval minors. 
Finally, in \cref{sec:algo}, we present the algorithm of \cref{thm:the-algo}. To bound its approximation factor, we prove a Ramsey-type result for interval minors. We also give a linear-time algorithm for testing whether an ordered graph contains $K_3$ as an interval minor.
\section{Delayed decomposition} \label{sec:delayed_decomp}

This section focuses on the notion of delayed decomposition, a key tool for our approximation algorithm.
We start by introducing delayed decompositions in \cref{subsec:def-delayed}.
Then, in \cref{subsec:compute-delayed}, we prove that all (ordered) graphs admit so-called distinguishing delayed decompositions, and that they can be computed in linear time.
Finally, in \cref{subsec:trees}, we prove a variant of K\H{o}nig's lemma on trees, and explore its consequences related to delayed decompositions.

\subsection{Definition} \label{subsec:def-delayed}

We consider rooted trees $T$, and call the vertices of $T$ \emph{nodes}.
The \emph{ancestors} of a node $x$ are the nodes in the unique path from $x$ to the root $r$ of $T$,
and the \emph{parent} of $x$ is the first node on this path (other than $x$). The root has no parent.
We also speak of \emph{grandparents}, \emph{descendants}, \emph{children},
\emph{grandchildren}, \emph{siblings} (nodes with same parent),
and \emph{cousins} (non-sibling nodes with the same grandparent).
The set of leaves of a tree~$T$ is denoted by~$L(T)$.
For any node~$x \in V(T)$, we denote by~$L(x) \subseteq L(T)$ the set of leaves which are descendants of $x$.

An \emph{ordered tree}~$(T,<)$ is a rooted tree~$T$ equipped with a linear order~$<$ on $L(T)$,
such that for each node~$x \in V(T)$, the leaves~$L(x)$ form an interval of~$<$.
It is natural to think of~$<$ as a \emph{left-to-right} order on the leaves of $T$.
If~$(T,<)$ is an ordered tree and $x,y \in V(T)$ are not in an ancestor--descendant relationship,
then $L(x)$ and $L(y)$ are disjoint, and each of them is an interval for~$<$,
hence either $L(x) < L(y)$, or $L(y) < L(x)$.
We then naturally extend~$<$ to~$x,y$ by~$x < y$ in the former case, and~$y < x$ in the latter.
In particular, $<$ induces a linear order~$<_x$ on the children of any internal node~$x$.
Therefore, given a child $y$ of a node $x$, we can speak of the \emph{predecessor} and the \emph{successor} of $y$, and of \emph{consecutive} children of $x$.
We can also consider the \emph{first child} of $x$, which is the $<_x$-minimum child of $x$, and the \emph{last child} of $x$, defined analogously.

A \emph{delayed structured tree}~$\left(T,<,\{G_x\}_{x \in V(T)}\right)$
is an ordered tree~$(T,<)$, equipped with, for each node~$x \in V(T)$,
a graph~$G_x$ on the \emph{grandchildren} of~$x$. We will often refer to these graphs $G_x$ as the \emph{quotient graphs}.
This is analogous to the trees describing substitutions (module decomposition trees),
except that the graphs~$G_x$ are defined on the grandchildren instead of the children, hence ``delayed''.
We add the technical requirement that each leaf is a~single child (with no siblings),
so that whenever~$x \neq y$ are leaves, their closest ancestor is at distance at least~2.

The \emph{realization} of a delayed structured tree~$\left(T,<,\{G_x\}_{x \in V(T)}\right)$ is the ordered graph ${G_T=((L(T),E_T),<)}$,
where for two leaves~$x,y$ of $T$, we have the edge $xy\in E_T$ if and only if $x'y'\in E(G_z)$,
where~$z$ is the closest ancestor of~$x,y$, and~$x',y'$ are the grandchildren of~$z$
which are ancestors of~$x,y$ respectively.
If $(G, <)$ is the realization~$\left(T,<,\{G_x\}_{x \in V(T)}\right)$, we say that~$\left(T,<,\{G_x\}_{x \in V(T)}\right)$ is a \emph{delayed decomposition} of $(G, <)$. See \cref{fig:delayed} for an example.
\begin{figure}[ht]
	\centering
	\begin{tikzpicture}[scale=0.4,auto=left,every node/.style={circle,draw,fill=black, inner sep=1.5pt }]

		\node (1) at (-4.5,11) {};
		\node (2) at (-8.5,9) {};
		\node (3) at (0,9) {};

        \node [fill=white] (4) at (-8.5,6) {};
		\node (5) at (-4.5,6) {};
		\node (6) at (0,6) {};
		\node (7) at (5,6) {};

		\node [fill=white] (8) at (-4.5,3) {};
		\node (9) at (-1.5,3) {};
		\node (10) at (0,3) {};
		\node (11) at (1.5,3) {};
		\node (12) at (4.25,3) {};
		\node (13) at (5.75,3) {};

		\node [fill=white] (14) at (-1.5,1) {};
		\node (15) at (-0.5,1) {};
		\node (16) at (0.5,1) {};
		\node [fill=white] (17) at (1.5,1) {};
		\node (18) at (3.75,1) {};
		\node (19) at (4.75,1) {};
		\node [fill=white] (20) at (5.75,1) {};

		\node [fill=white] (21) at (-0.5,-1) {};
		\node [fill=white] (22) at (0.5,-1) {};
		\node [fill=white] (23) at (3.75,-1) {};
		\node [fill=white] (24) at (4.75,-1) {};

		\draw(1) -- (2);
		\draw(1) -- (3);
		\draw(2) -- (4);
		\draw(3) -- (5);
		\draw(3) -- (6);
		\draw(3) -- (7);
		
		\draw(5) -- (8);
		\draw(6) -- (9);
		\draw(6) -- (10);
		\draw(6) -- (11);
		\draw(7) -- (12);
		\draw(7) -- (13);
		\draw(9) -- (14);
		\draw(10) -- (15);
		\draw(10) -- (16);
		\draw(11) -- (17);
		\draw(12) -- (18);
		\draw(12) -- (19);	
		\draw(13) -- (20);
		
		\draw(15) -- (21);
		\draw(16) -- (22);
		\draw(18) -- (23);
		\draw(19) -- (24);

		\draw[blue] (4) to [bend left=30] (5);
		\draw[blue] (4) to [bend left=30] (7);
		\draw[orange] (8) to[bend left=50] (12);
		\draw[orange] (9) to[bend left=50] (13);
		\draw[orange] (11) to[bend left=50] (12);
		\draw[orange] (11) to[bend left=50] (13);
		\draw[cyan] (14) to[bend left=50] (15);
		\draw[cyan] (16) to[bend left=50] (17);
		\draw[magenta] (21) to[bend left=50] (22);

		\foreach \i in {1,...,9} 
		\node[fill = white] (A\i) at (\i*2 + 8, 4) {};

		\foreach \i in {2,7,8,9}
		\draw[blue] (A1) to [bend left=50] (A\i);

		\foreach \i in {7,8}
		\draw[orange] (A2) to[bend right=50] (A\i);
		\draw[orange] (A3) to[bend right=50] (A9);
		\foreach \i in {7,8,9}
		\draw[orange] (A6) to[bend right=50] (A\i);

		\draw[cyan] (A3) to[bend left=50] (A4);
		\draw[cyan] (A5) to[bend left=50] (A6);

		\draw[magenta] (A4) to[bend left=50] (A5);        
		
	\end{tikzpicture}
	\caption{A delayed structured tree and its realization. The edges in the realization are colored according to their corresponding quotient graph.}
	\label{fig:delayed}
\end{figure}

A delayed decomposition $\left(T, <, \{G_x\}_{x \in V(T)}\right)$ of an ordered graph $(G, <)$ is \emph{distinguishing} if it satisfies the following property:
For every node $x$ which has at least $3$ children, for every two consecutive children $x_1, x_2$ of $x$, there exists some vertex $v \in V(G) \setminus L(x)$ which is adjacent to all vertices in one of $L(x_1), L(x_2)$ and to no vertex in the other.

As we shall see in \cref{thm:compute-delayed}, every graph has a distinguishing delayed decomposition, which can be computed in linear time. When we talk about \emph{the} distinguishing delayed decomposition of a graph, we mean the one computed by \cref{thm:compute-delayed}.

\begin{lemma}[{\cite[Lemma~2.1]{BT25}}] \label{lem:delayed=subst+edge}
Let $(G, <)$ be the realization of a delayed structured tree~$\left(T,<,\{G_x\}_{x \in V(T)}\right)$. 
Then, $(G, <)$ is the edge union of two ordered graphs, each of which can be obtained by substitutions from the quotient graphs $\{G_x\}_{x \in V(T)}$.
\end{lemma}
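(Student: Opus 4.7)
The plan is to build two substitution trees whose realizations partition the edges of $G$, and whose quotient labels all come from $\{G_x\}_{x\in V(T)}$.

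First, assign each node of $T$ a depth, with the root at depth~$0$. Every edge $uv$ of $G$ is witnessed by a unique $z(uv)\in V(T)$, namely the closest common $T$-ancestor of $u$ and $v$. I would partition $E(G)$ into $E_e$ (edges with $d(z(uv))$ even) and $E_o$ (edges with $d(z(uv))$ odd), and set $G_e := (V(G),E_e,<)$ and $G_o := (V(G),E_o,<)$. Then trivially $G = G_e \oplus G_o$.

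Next I would build a substitution tree $S_e$ whose ordinary (non-delayed) realization equals $G_e$. Its internal nodes are the even-depth nodes of $T$ that have at least one grandchild in $T$; the children in $S_e$ of such an internal node $x$ are its $T$-grandchildren, and the label at $x$ is $G_x$. Leaves of $S_e$ are put in bijection with leaves of $T$: an even-depth $T$-leaf is itself a leaf of $S_e$; an odd-depth $T$-leaf is identified with its $T$-parent, which by the single-child convention has no $T$-grandchildren and is therefore a leaf of $S_e$. Informally we ``skip every other layer'' of $T$, so that the quotient $G_x$, which originally acts on grandchildren of $x$, now labels a node whose substitution-tree children are exactly those grandchildren, as in classical substitution. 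All labels lie in $\{G_x\}_{x\in V(T)}$, so the realization of $S_e$ lies in the substitution closure of this family. A symmetric construction (introducing a trivial root above the depth-$1$ nodes if need be) yields $S_o$ realizing $G_o$.

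It remains to verify $\mathrm{real}(S_e) = G_e$. When the closest common $T$-ancestor $z^*$ of two leaves $u,v$ has even depth, $z^*$ is also their closest common $S_e$-ancestor, and the two relevant $S_e$-children of $z^*$ coincide with the two relevant $T$-grandchildren of $z^*$, so the ordinary substitution rule matches the delayed one. The delicate case is $d(z^*)$ odd: their closest common $S_e$-ancestor is then the $T$-parent $z$ of $z^*$ (even depth), and the two $S_e$-children of $z$ ancestral to $u,v$ are two $T$-grandchildren of $z$ that share $z^*$ as their $T$-child-of-$z$ parent. We need them non-adjacent in $\mathrm{real}(S_e)$, i.e.\ we need $G_z$ to have no edges between grandchildren of $z$ sharing a common $T$-child-of-$z$ parent. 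Such ``within-group'' edges of $G_z$ never contribute to $\mathrm{real}(T)$ in the first place---any pair of leaves beneath two such grandchildren has closest common $T$-ancestor equal to $z^*$, not $z$---so they may be stripped from each $G_z$ at the outset without altering $(G,<)$.

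The main obstacle is precisely this last point. Without the observation that within-group edges of the $G_z$'s are invisible to the delayed realization, the skipped tree $S_e$ would introduce spurious edges between cousins whose $T$-closest-common-ancestor is odd. Removing these irrelevant edges turns $S_e$ into an honest substitution tree realizing $G_e$, and the lemma follows.
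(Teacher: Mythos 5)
Your proof is correct and follows essentially the approach the paper sketches in its introduction (the lemma itself is cited from the reference and not reproved here): partition the edges of $G$ by the depth parity of the closest common ancestor in $T$, then realize each half by a substitution tree obtained by collapsing pairs of $T$-layers so that grandchildren become children. Your observation that edges of $G_z$ between siblings are never consulted by the delayed realization is exactly the point that makes the collapsed tree an honest substitution tree, and is the same fact the paper's constructive \cref{thm:compute-delayed} builds into its quotient graphs by only placing edges between cousins.
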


\subsection{Computing a distinguishing delayed decomposition} \label{subsec:compute-delayed}

In this section, we prove that every ordered graph has a distinguishing delayed decomposition, which can be computed in linear time.

To discuss the running time of the algorithm, we first detail how we store ordered graphs. Throughout this article, we assume that the edge set is stored as a list of edges, and the vertex set as a list of vertices. There are two natural ways to store the order on the vertex set. The first one is to store the \emph{ordered} set of vertices explicitly, for instance by assuming that the list of vertices is sorted according to the order. In particular, after a $O(n)$-time precomputation, all vertices can store their rank in the order. We call this representation \emph{explicit}. The second one is to assume that the order can be determined from the labels of the vertices, which is the case for instance if the vertices are ordered by increasing labels. We call this representation \emph{implicit}.
Observe that an explicit representation can be computed in time $O(n\log(n))$ from an implicit representation by sorting the vertex set according to the order, and then storing the sorted list explicitly.
Given an explicit representation of an ordered graph, we can relabel all the vertices in time $O(n)$ so that the vertex set is $[n]$, equipped with its natural linear order.

\begin{theorem}\label{thm:compute-delayed}
There is an algorithm which, given as input an explicit ordered graph~${(G,<)}$ with $n$ vertices and $m$ edges, computes in time $O(m+n)$ a distinguishing delayed decomposition $\left(T, <, \{G_x\}_{x \in V(T)}\right)$ of $(G, <)$.
\end{theorem}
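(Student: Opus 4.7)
The plan is to build a distinguishing delayed decomposition recursively, top-down. For each node $x$ with leaf set $L(x)=I$ (an interval of $<$), I would assign to each $u\in I$ its \emph{outside signature} $\sigma(u):=N(u)\setminus I$ and partition $I$ into maximal consecutive runs $I_1<\cdots<I_k$ on which $\sigma$ is constant. When $k\ge 2$, these runs become the children of $x$. When $k=1$ (all vertices of $I$ are mutually outside-twins) and $|I|\ge 2$, I would split $I$ into two consecutive halves arbitrarily, so that $x$ has exactly two children and the distinguishing condition is vacuous at $x$. A singleton $I$ becomes a leaf, which I would wrap under a unary parent to satisfy the technical ``single-child leaf'' requirement. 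The quotient graph $G_x$ on the grandchildren of $x$ is defined as follows: for grandchildren $y_1,y_2$ of $x$ in distinct children, set $y_1y_2\in E(G_x)$ iff there is any edge between $L(y_1)$ and $L(y_2)$ in $G$.

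For correctness, the central check is that this edge-indicator is well-defined, i.e.\ the pairs in $L(y_1)\times L(y_2)$ are all edges or all non-edges. The argument: $y_1$ was created one level down as a maximal outside-signature-uniform run inside some child $c_1$ of $x$, so all $u\in L(y_1)$ share the same $N(u)\setminus c_1$; since $L(y_2)\subseteq V(G)\setminus c_1$, all $u\in L(y_1)$ see $L(y_2)$ identically, and by the symmetric argument all $v\in L(y_2)$ see $L(y_1)$ identically. Hence the realization of the resulting delayed structured tree is exactly $(G,<)$. The distinguishing property follows from maximality: two consecutive runs $I_j,I_{j+1}$ produced in Case~1 have distinct signatures by construction, so some $v\in V(G)\setminus I$ lies in one signature but not the other, and by uniformity within a run this $v$ is adjacent to every vertex of one child's leaf set and to none of the other's.

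The main obstacle will be the running time: naively recomputing signatures per level costs $\Omega(mn)$, while the goal is $O(m+n)$. My plan is to use partition refinement throughout. I would maintain the current partition of $V(G)$ into blocks as a doubly-linked list preserving $<$, with each vertex carrying a pointer to its block. I would repeatedly select an unused pivot vertex $v$ and, for every block $B$ meeting $N(v)$, split $B$ into $B\cap N(v)$ and $B\setminus N(v)$ while preserving order. The standard amortization charges $O(1)$ to each edge per pivot, giving total refinement cost $O(m+n)$. I would then read off the tree $T$ from the sequence of splits by treating each split as the creation of a new child boundary inside the node currently being processed; scheduling pivots in the order dictated by the current partial tree ensures the flat split history aligns with the recursive description above.

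The trickiest sub-step will be Case~2: when no pivot can further split an outside-twin block $B$, the recursion still needs to halve $B$ so that deeper levels can activate (since within $B$, vertices of $V(G)\setminus B$ now serve as ``outside'' and can split it). An arbitrary halving at each level would cost $\Theta(|B|\log|B|)$ in the worst case. I would handle this by building, for each maximal outside-twin block $B$ encountered, a balanced binary recursion tree over $B$ in a single pass of cost $O(|B|)$, exploiting the explicit order $V(G)=[n]$ so that ranks and midpoints are computed in constant time per node. Since outside-twin blocks are disjoint, this contributes $O(n)$ overall, and combined with the $O(m+n)$ cost of the outside-signature refinements yields the claimed $O(m+n)$ total running time.
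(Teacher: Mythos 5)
Your tree construction and correctness argument match the paper's in spirit: both build $T$ top-down by partitioning each interval into maximal consecutive runs with constant "outside" neighborhood (the paper's local modules), both fall back to a trivial $2$-way split when the interval is a module (paper: peel off the first vertex; you: halve arbitrarily — both work since the distinguishing condition is vacuous for nodes with at most two children), and your well-definedness argument for the quotient edges is exactly the paper's consistency property. The structural half of the proposal is sound.

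Where you diverge is the linear-time implementation, and that is where there is a genuine gap. The paper does \emph{not} use partition refinement: it precomputes, for each consecutive pair $(i,i{+}1)$, the extremal positions $m(i),M(i)$ of a vertex distinguishing them, and then locates all local-module boundaries of an interval $[a,b]$ with $O(k)$ range-min/max queries via \cref{lem:rmq}; the quotient graphs are then filled in by one pass over the edges using constant-time extended-LCA queries (\cref{lem:LCA}). Your partition-refinement plan, as written, has two problems. First, a pivot $v$ must \emph{not} split any block contained in an interval $L(x)$ with $v\in L(x)$, since local modules of $L(x)$ depend only on adjacencies to $V(G)\setminus L(x)$ — but a global refinement step has no notion of "which current blocks lie inside $v$'s subtree at the relevant level." If you add a "don't split your own block" rule, refinement stalls whenever the current block is a module, Case~2 halving fires, and then the same vertices must be reused as pivots against their new sibling block. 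Under a halving scheme a vertex can need to act as a pivot $\Theta(\log n)$ times (once per level at which its block is halved), so the amortization you invoke gives $O(m\log n)$, not $O(m+n)$, and the sentence "scheduling pivots in the order dictated by the current partial tree ensures the flat split history aligns" is precisely where this difficulty is hidden and not resolved. (A smaller issue: splitting $B$ into $B\cap N(v)$ and $B\setminus N(v)$ does not preserve the interval structure; you want maximal consecutive runs, which is what local modules require.) Second, you never address computing the quotient graphs $\{G_x\}$ in $O(m+n)$: given the tree, one must route each edge $uv$ of $G$ to the node $z$ it lives at and to the correct pair of grandchildren of $z$; the paper does this with extended LCA queries, and this step is missing from your plan.
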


This algorithm was first described in \cite{BT25}, where delayed decompositions were introduced, and it was shown in \cite{BBGT24} how to implement this algorithm in linear time in the context of permutations. 
The proof of \cref{thm:compute-delayed} is just a translation of the implementation of \cite{BBGT24} in the context of ordered graphs. We first need some standard algorithmic results.

If $A$ is an array, an \emph{interval} of $A$ is a set of consecutive entries of $A$.

\begin{lemma}[\cite{bender2000LCA,harel1984LCA,Schieber1988LCA}] \label{lem:rmq}
There is an algorithm which, given as input an array $A$ of size $n$, after a $O(n)$-time preprocessing, can return the minimum and maximum (and their positions) of any interval of $A$ in constant time.
\end{lemma}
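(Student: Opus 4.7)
The plan is to handle minimum queries; maximum queries are symmetric (for instance via negation). The classical strategy is to reduce RMQ on $A$ to lowest common ancestor (LCA) queries on an auxiliary tree, and then reduce LCA back to a structured variant of RMQ that admits a direct $\langle O(n), O(1) \rangle$ algorithm.

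First, I would build the Cartesian tree $T$ of $A$ in $O(n)$ time using the standard right-spine stack algorithm. By construction, the position of the minimum of any interval $A[i..j]$ is exactly the LCA in $T$ of the nodes corresponding to indices $i$ and $j$. Second, I would reduce LCA on $T$ to $\pm 1$-RMQ by performing an Euler tour of $T$: this yields, in $O(n)$ time, a depth array $E$ of length $2n-1$ in which consecutive entries differ by exactly $\pm 1$, and an LCA query translates into an RMQ query on $E$ between the first occurrences of the two nodes (since the minimum-depth node visited on the tour between them is precisely their LCA).

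Third, I would solve $\pm 1$-RMQ on an array of length $N = 2n-1$ using the block decomposition trick. Partition $E$ into blocks of size $b = \lfloor (\log N)/2 \rfloor$, precompute the minimum of each block, and build a sparse table over the block minima, answering queries on whole-block intervals in $O(1)$ after $O((N/b)\log(N/b)) = O(N)$ preprocessing. For intra-block queries, the $\pm 1$ property implies that each block is determined, up to an additive shift, by a sign pattern of $b-1$ bits, so there are only $2^{b-1} = O(\sqrt{N})$ block types. Tabulating, for each type, the answers to all $O(b^2)$ possible intra-block RMQ queries costs $O(\sqrt{N} \cdot b^2) = o(N)$ in total. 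A general query then splits into at most two intra-block queries and one sparse-table query on the middle blocks.

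The main delicate point is pushing the preprocessing down to $O(n)$ rather than $O(n \log n)$: a plain sparse table built directly on $A$ already answers queries in $O(1)$ but needs $\Theta(n \log n)$ time and space, which is why the block decomposition is essential. Tracking the \emph{position} of the minimum, not merely its value, is a routine bookkeeping addition that each reduction preserves, since the Cartesian-tree, Euler-tour, and block-table steps all associate each extremum with a canonical index.
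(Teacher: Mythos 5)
Your proof is correct and follows exactly the standard Bender--Farach-Colton reduction chain (Cartesian tree $\to$ LCA $\to$ $\pm 1$-RMQ $\to$ block decomposition with the Four-Russians trick for in-block queries), which is the content of the references the paper cites for this lemma; the paper itself gives no proof and simply points to \cite{bender2000LCA,harel1984LCA,Schieber1988LCA}. Your account is a faithful reconstruction of that argument, including the essential observation that a sparse table alone would cost $\Theta(n\log n)$ and that the $\pm 1$ property bounds the number of block types by $O(\sqrt{N})$ so the in-block tables fit in $o(N)$.
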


If $T$ is a rooted tree and $u, v$ are nodes of $T$, the \emph{last common ancestor} (LCA) of $u$ and $v$ is the only node on the path between $u$ and $v$ in $T$ which is an ancestor of both $u$ and $v$.
Given two nodes $u$ and $v$ of $T$, the answer to the \emph{extended LCA query} for $u$ and $v$ is the data of the LCA $z$ of $u$ and $v$, together with the children and grandchildren of $z$ which are the ancestors of $u$ and of $v$ (if they exist).

\begin{lemma}[\cite{bender2000LCA}]\label{lem:LCA}
There is an algorithm which, given as input a rooted tree $T$ on $n$ vertices, after a $O(n)$-time preprocessing, can answer any extended LCA query on $T$ in constant time.
\end{lemma}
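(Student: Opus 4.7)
The plan is to reduce an extended LCA query to one standard LCA query plus a constant number of level-ancestor queries, both served by linear-preprocessing, constant-query oracles.

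First, I would invoke the Bender--Farach-Colton LCA data structure of \cite{bender2000LCA} to build, in $O(n)$ time, an oracle that returns $z=\mathrm{LCA}(u,v)$ for any pair $(u,v)$ in $O(1)$ time. In the same preprocessing phase, I would compute the depth $\mathrm{depth}(x)$ of every node $x$ via a single DFS, and set up a \emph{level-ancestor oracle}: a data structure that, given $x\in V(T)$ and $d\le\mathrm{depth}(x)$, returns the unique ancestor of $x$ at depth $d$ in $O(1)$. A level-ancestor oracle with $O(n)$ preprocessing and $O(1)$ query is also available from \cite{bender2000LCA} (for example via a ladder decomposition combined with a micro/macro jump-pointer scheme), so the total preprocessing remains $O(n)$.

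Given a query $(u,v)$, I would then compute $z=\mathrm{LCA}(u,v)$ via the first oracle and extract the desired child and grandchild of $z$ on the $u$-side as follows. If $u=z$, no such child or grandchild exists. Otherwise, the child of $z$ which is an ancestor of $u$ is the level-ancestor of $u$ at depth $\mathrm{depth}(z)+1$; if in addition $\mathrm{depth}(u)\ge\mathrm{depth}(z)+2$, the corresponding grandchild is the level-ancestor of $u$ at depth $\mathrm{depth}(z)+2$. The $v$-side is handled symmetrically. This amounts to a constant number of oracle calls and arithmetic comparisons per query, so the total query time is $O(1)$.

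The only nontrivial ingredient is the level-ancestor oracle with linear preprocessing and constant query time. Rather than rebuilding it, I would simply cite the construction from \cite{bender2000LCA}; no aspect of the ordered-tree setting used elsewhere in the paper interacts with this oracle, so everything else is a routine constant-time reduction, and correctness is immediate from the definition of the extended query.
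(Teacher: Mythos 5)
Your reduction is correct, and since the paper does not give an explicit proof---it simply cites Bender and Farach-Colton for the statement---your argument supplies precisely the missing glue: compute the ordinary LCA $z$, then recover the two children and two grandchildren of $z$ on the $u$- and $v$-sides via level-ancestor queries at depths $\mathrm{depth}(z)+1$ and $\mathrm{depth}(z)+2$, with the obvious degenerate cases when $u$ or $v$ is too shallow. This is the standard way to derive an extended LCA oracle, and it achieves the claimed $O(n)$ preprocessing and $O(1)$ query time.

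One small bibliographic point: the reference \cite{bender2000LCA} is the Bender--Farach-Colton paper on LCA/RMQ, which does not itself contain the linear-preprocessing, constant-query level-ancestor structure; that is due to Berkman and Vishkin, with a simplified treatment in a separate (later) paper of Bender and Farach-Colton. Since the paper's own citation is already a bit loose on this point, your attribution is no worse, but if you want to be precise you should cite a level-ancestor source for the second oracle rather than fold it into the LCA reference.
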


We now have all the tools to prove \cref{thm:compute-delayed}.

\begin{proof}[\textit{Proof of \cref{thm:compute-delayed}}]
We first describe the algorithm and prove its correctness. We then show how to implement it to run in time $O(m+n)$.

When we talk about intervals, we always mean intervals for the order $<$.
We start by building the tree $T$, whose leaves will be the vertices of $G$. We construct $T$ inductively starting from the root by specifying for every node $x$ of $T$ the interval $L(x) \subseteq V(G)$ corresponding to its eventual set of leaf descendants.
Throughout the construction, we ensure the following \emph{consistency property}: For every node $x \in V(T)$, if $v \in V(G) \setminus L(x)$ and $y$ is a descendant of $x$ then either $v$ is adjacent to all vertices in $L(y)$ or $v$ is adjacent to no vertex in $L(y)$.

We start with the root $r$ of $T$ and set $L(r) = V(G)$.
Suppose that we are considering a node $x$ for which the set $L(x)$ is already defined.
We distinguish several cases. \begin{itemize}
    \item If $|L(x)| = 1$, we add one child $y$ to $x$, set $L(y) = L(x)$ and stop the construction here for this branch (in particular, we will not consider the node $y$).
    Observe that the consistency property trivially holds for $x$ in that case.
    \item If $|L(x)| \geq 2$ and $L(x)$ is a module in $G$ (meaning that every $v \in V(G) \setminus L(x)$ is either adjacent to all vertices in $L(x)$ or to no vertex in $L(x)$), we add two children $y_1$ and $y_2$ to $x$ and set $L(y_1) = \{u\}$, where $u$ is the $<$-smallest element of $L(x)$, and $L(y_2) = L(x) \setminus \{u\}$.
    Observe that if $v \in V(G) \setminus L(x)$ then $v$ is either adjacent to all vertices in $L(x)$ or to no vertex in $L(x)$, so the same holds for $L(y_1)$ and $L(y_2)$ and the consistency property holds for $x$.
    Note also that this is what happens at the root $r$ of $T$.
    \item Otherwise, for $u_1, u_2 \in L(x)$, write $u_1 \sim u_2$ if every $w$ in the interval between $u_1$ and $u_2$ satisfies that $u_1, u_2$ and $w$ have the same neighborhood in ${V(G)\setminus L(x)}$.
    Observe that the relation $\sim$ is an equivalence relation on $L(x)$.
    The equivalence classes for $\sim$ form intervals of $L(x)$, called \emph{local modules}.
    Let $I_1, \ldots, I_k$ be the local modules of $L(x)$. Note that $k \geq 2$ otherwise $L(x)$ would be a module in $G$. We add $k$ children $y_1, \ldots, y_k$ to $x$, and set $L(y_i) = I_i$ for every $i \in [k]$.
    By definition of the local modules, for every $v \in V(G) \setminus L(x)$ and every child $y_i$ of $x$, either $v$ is adjacent to all vertices in $L(y_i)$ or $v$ is adjacent to no vertex in $L(y_i)$, so the consistency property holds for $x$.
    Finally, we prove the distinguishing property. Consider two consecutive children $y_i, y_{i+1}$ of $x$, and let $u_i$ be the $<$-largest vertex of $I_i$ and $u_{i+1}$ be the $<$-smallest vertex of $I_{i+1}$. Since $u_i \not \sim u_{i+1}$ and since $u_i$ and $u_{i+1}$ are consecutive, there exists some vertex $v \in V(G) \setminus L(x)$ such that $v$ is adjacent to exactly one of $u_i,u_{i+1}$.
    Since $I_i$ and $I_{i+1}$ are local modules in $L(x)$ and since $v \notin L(x)$ then $v$ is adjacent to all vertices in one of $I_i, I_{i+1}$ and to no vertex in the other, as desired.
\end{itemize}

Note that as long as $|L(x)| \geq 2$, the node $x$ has at least $2$ children, each with a strictly smaller set of leaf descendants. This proves that the construction of $T$ terminates.
It follows from the construction that for every vertex $u \in V(G)$, the set of nodes $\{x \in V(T) : u \in L(x)\}$ induces a root-to-leaf path in $T$.
Furthermore, if $x$ is a leaf of $T$ then $L(x)$ is a singleton. 
Therefore, the leaves of $T$ naturally correspond to the vertices of $G$. From now on, we consider that the set of leaves of $T$ is the set of vertices of $G$.
The order $<$ on $V(G)$ then naturally gives an order on $L(T)$.

We now define the quotient graphs $\{G_x\}_{x \in V(T)}$. The vertex set of the graph $G_x$ is the set of grandchildren of $x$ in $T$. If $y_1, y_2$ are grandchildren of $x$ which are not siblings, we add an edge $y_1y_2$ to $G_x$ if and only if there is an edge between $L(y_1)$ and $L(y_2)$. 

We now argue that $(G, <)$ is equal to the realization $G_T$ of $\left(T, <, \{G_x\}_{x \in V(T)}\right)$.
First, ${V(G_T) = L(T) = V(G)}$, and they are equipped with the same linear order.
Consider now an edge $uv \in E(G)$. The vertices $u$ and $v$ are leaves of $T$. Let $z$ be their last common ancestor in $T$, and let $y_u$ and $y_v$ be the grandchildren of $z$ which are the respective ancestors of $u$ and $v$. Then, $y_u$ and $y_v$ are cousins and the edge $uv$ is an edge between $L(y_u)$ and $L(y_v)$ so by definition of $G_z$, there is an edge $y_uy_v$ in $G_z$. Then, by definition of $G_T$ we have $uv \in E(G_T)$.
Conversely, consider an edge $uv \in E(G_T)$. The vertices $u$ and $v$ are leaves of $T$. Let $z, y_u, y_v$ be defined as previously. Then, $y_u$ and $y_v$ are cousins and since $uv \in E(G_T)$ then $y_uy_v \in E(G_z)$. Thus, by definition of $G_z$, there is an edge between some $u' \in L(y_u)$ and some $v' \in L(y_v)$.
By the consistency property for the parent $x_u$ of $y_u$, we get that $v'$ is adjacent in $G$ to every vertex in $L(y_u)$, and in particular to $u$. By the consistency property for the parent $x_v$ of $y_v$, $u$ is adjacent in $G$ to every vertex in $L(y_v)$, and in particular to $v$, so $uv \in E(G)$.

We now explain how to implement this algorithm to run in time $O(m+n)$.
Since the ordered graph is stored explicitly, we can relabel the vertices in time $O(n)$ so that $V(G) = [n]$.
Using bucket sort, we can compute the sorted adjacency lists of all vertices in time $O(m + n)$.
For every $i \in [n-1]$, let $m(i)$ (resp. $M(i)$) be the minimum (resp. maximum) vertex which is adjacent to one of $i, i+1$ and not adjacent to the other, or $\infty$ (resp. $0$) if there is no such vertex.
From the sorted adjacency lists, we can compute all $m(i)$ and $M(i)$ in time $O(m+n)$.
We then preprocess the arrays $m$ and $M$ as in \cref{lem:rmq} to be able to answer maximum and minimum queries on them in constant time.

We construct the tree $T$ as described above.
We start by creating the root node $r$, and set $L(r) = [n]$.
Suppose that we are considering some node $x$ with $L(x)$ already defined, say $L(x) = [a, b]$.
If $|L(x)| = 1$, we proceed as indicated.
Otherwise, we compute the local modules $I_1, \ldots, I_{k+1}$ of $L(x)$ in time $O(k)$.
To do so, let $\{j_1 < j_2 < \ldots < j_k\}$ be the set of all $i \in L(x) \setminus \{b\}$ such that either $m(i) < a$ or $M(i) > b$.
Using \cref{lem:rmq} on the arrays $m$ and $M$, we can find all $j_l$ in time $O(k)$.
If $k=0$ then $L(x)$ is a module and we proceed as described.
If $k \geq 1$ then $L(x)$ is not a module, and the local modules of $L(x)$ are $[a, j_1], [j_1+1, j_2], \ldots, [j_k+1, b]$.
In that case, we again proceed as explained above.

Since the leaves of $T$ are the vertices of $G$ then $T$ has $n$ leaves.
Furthermore, every internal node of $T$ has at least $2$ children, except for the parents of the leaves which all have $1$ child. 
Then, the total number of nodes of $T$ is at most $3n-1$.
Furthermore, if a node $x$ has $k$ children, the time spent while considering $x$ is $O(k)$.
Thus, the time spent for the creation of $T$ is $O(n)$.

We now build the graphs $\{G_x\}_{x \in V(T)}$.
For every $x \in V(T)$, initialize $G_x$ as being the edgeless graph whose vertex set is the set of grandchildren of $x$. Importantly, we can store the vertex set of $G_x$ sorted according to $<$, so that all graphs $G_x$ will be stored explicitly.
For every vertex $u \in V(G)$, we keep a pointer to the leaf of $T$ corresponding to $u$.
We preprocess $T$ as in \cref{lem:LCA} to be able to answer extended LCA queries in constant time.
Then, we iterate over all edges $uv$ of $G$. For each such edge, we find the last common ancestor $z$ of $u$ and $v$ in $T$, and the grandchildren $y_u$ and $y_v$ of $z$ which are the respective ancestors of $u$ and $v$ in $T$. This can be done in constant time using \cref{lem:LCA}. Then, we add an edge between $y_u$ and $y_v$ in $G_z$.
Overall, this takes time $O(m + n)$.
\end{proof}

\begin{remark}\label{rmk:nb-quotients}
If $(G, <)$ is an ordered graph with $n$ vertices and $m$ edges, the distinguishing delayed decomposition computed by \cref{thm:compute-delayed} has the property that the total number of edges over all quotient graphs is at most $m$. Furthermore, the number of quotient graphs is equal to the number of nodes of $T$ which have grandchildren, which is at most $n-1$. Observe also that, by construction, for every node $x$ of $T$ and every child $y$ of $x$, the children of $y$ form an independent set in $G_x$.
\end{remark}

\subsection{A result on trees and its consequences} \label{subsec:trees}

A classical result on trees states that every tree with a very large number of leaves contains either a node with large degree, or a path containing many nodes of degree at least 3. The goal of this section is to prove a generalization of this statement in the context of ordered trees. More precisely, consider an ordered tree whose leaves are grouped into disjoint intervals. We prove that if there is a very large number of intervals, either there is a node which ``splits'' a large number of intervals between its children, or there is a long path which progressively ``peels'' a large number of intervals.

Given an ordered tree $(T, <)$, an \emph{interval family} of $(T,<)$ is a set $\mathcal I$ of disjoint intervals of the linear order $(L(T),<)$. A node $x\in V(T)$ is \emph{$b$-branching} if there is a subset $\mathcal I' \subseteq \mathcal{I}$ of $b$ intervals such that every interval of $\mathcal I '$ is included in $L(x)$ and there is no child $y$ of $x$ such that $L(y)$ intersects two intervals of $\mathcal I '$. See \cref{fig:branching/interval} for an illustration. An \emph{$\ell$-interval path} is a sequence of intervals $I_1,\dots ,I_\ell$ such that there exists nodes $x_1,\dots ,x_\ell$ of $T$ such that:
\begin{itemize}
    \item $L(x_j)$ contains $I_j\cup \dots \cup I_\ell$ for all $j=1,\dots ,\ell$.
    \item $L(x_j)\cap I_{j-1}=\emptyset$ for all $j=2,\dots ,\ell$.
\end{itemize}

Note that the nodes $x_j$ are pairwise distinct, and $x_j$ is a descendant of $x_i$ whenever $i<j$. Also, when $j\geq3$, the parent $p(x_j)$ of $x_j$ satisfies $L(p(x_j))\cap I_{j-2}=\emptyset$ since $p(x_j)$ is a (not necessarily proper) descendant of $x_{j-1}$.

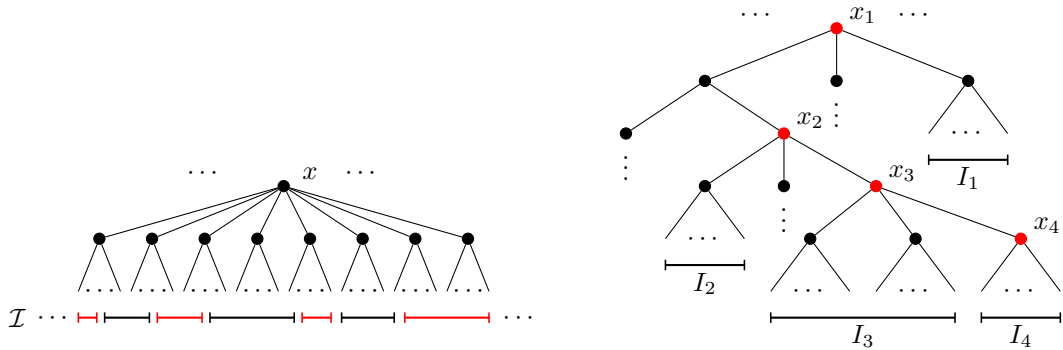
\begin{figure}[ht]
	\centering
	\begin{tikzpicture}[scale=0.35,auto=left,every node/.style={circle,draw,fill=black, inner sep=1.5pt }]

		\node (0) at (9,0) {};
		\draw (10,0.5) node[draw=none,fill=none] {$x$};
		\draw (12,0.5) node[draw=none,fill=none] {$\cdots$};
		\draw (6,0.5) node[draw=none,fill=none] {$\cdots$};

		\foreach \i in {1,...,8} 
		\node (\i) at (\i*2 , -2) {};

		\foreach \i in {1,...,8} {
		\draw (0) -- (\i);
		\draw (\i) -- (\i*2-0.8 , -4);
		\draw (\i) -- (\i*2+0.8 , -4);
		\draw (\i*2+0.1,-4) node[draw=none,fill=none] {$\cdots$};
		}

		\draw (-1.1,-5) node[draw=none,fill=none] {$\mathcal{I}$};
		\draw (0.3,-5) node[draw=none,fill=none] {$\cdots$};
		\draw (18,-5) node[draw=none,fill=none] {$\cdots$};
		\foreach \i in {1} {
		    \draw[red,line width=0.8pt] (\i*2-0.8,-5) --(\i*2-0.1,-5);
		    \draw[red,line width =0.6pt] (\i*2-0.8,-5.2) --(\i*2-0.8,-4.8);
		    \draw[red,line width = 0.6pt]
		    (\i*2-0.1,-5.2) -- (\i*2-0.1,-4.8);
		}
		\foreach \i in {3} {
			\draw[red,line width=0.8pt] (\i*2-1.8,-5) -- (\i*2-0.1,-5);
			\draw[red,line width=0.6pt] (\i*2-1.8,-5.2) -- (\i*2-1.8,-4.8);
			\draw[red,line width=0.6pt] ((\i*2-0.1,-5.2) -- (\i*2-0.1,-4.8);
		}
		\foreach \i in {2} {
		\draw[line width=0.8pt] (\i*2-1.8,-5) -- (\i*2-0.1,-5);
		\draw[line width=0.6pt] (\i*2-1.8,-5.2) -- (\i*2-1.8,-4.8);
		\draw[line width=0.6pt] ((\i*2-0.1,-5.2) -- (\i*2-0.1,-4.8);
		}
		\draw[line width=0.8pt] (5*2-3.8,-5) -- (5*2-0.6,-5);
		\draw[line width=0.6pt] (5*2-3.8,-5.2) -- (5*2-3.8,-4.8);
		\draw[line width=0.6pt] (5*2-0.6,-5.2) -- (5*2-0.6,-4.8);
		
		\draw[red,line width=0.8pt] (6*2-2.3,-5) -- (6*2-1.2,-5);
		\draw[red,line width=0.6pt] (6*2-2.3,-5.2) -- (6*2-2.3,-4.8);
		\draw[red,line width=0.6pt] (6*2-1.2,-5.2) -- (6*2-1.2,-4.8);
		
		\draw[line width=0.8pt] (6*2-0.8,-5) -- (6*2+1.2,-5);
		\draw[line width=0.6pt] (6*2-0.8,-5.2) -- (6*2-0.8,-4.8);
		\draw[line width=0.6pt] (6*2+1.2,-5.2) -- (6*2+1.2,-4.8);
		
		\draw[red,line width=0.8pt] (7*2-0.4,-5) -- (7*2+2.8,-5);
		\draw[red,line width=0.6pt] (7*2-0.4,-5.2) -- (7*2-0.4,-4.8);
		\draw[red,line width=0.6pt] (7*2+2.8,-5.2) -- (7*2+2.8,-4.8);

		\node[red] (T0) at (30,6) {};
		\draw (31,6.5) node[draw=none,fill=none] {$x_1$};
		\draw (33,6.5) node[draw=none, fill=none]{$\cdots$};
		\draw (27,6.5) node[draw=none,fill=none]{$\cdots$};
		\node (T1) at (25,4) {};
		\node (T2) at (35,4) {};
		\node (T3) at (30,4) {};

		\node (T4) at (22,2) {};
		\node[red] (T5) at (28,2) {};
		\draw (29,2.5) node[draw=none,fill=none] {$x_2$};

		\node[red] (T6) at (31.5,0) {};
		\draw (32.5,0.5) node[draw=none,fill=none] {$x_3$};
		\node (T7) at (25,0) {};
		\node (T8) at (28,0) {};

		\node (T9) at (33,-2) {};
		\node (T10) at (29,-2) {};
		\node[red] (T11) at (37,-2) {};
		\draw (38,-1.5) node[draw=none,fill=none] {$x_4$};

		\draw (T0) -- (T1);
		\draw (T0) -- (T2);
		\draw (T0) -- (T3);
		\draw (T1) -- (T4);
		\draw (T1) -- (T5);
		\draw (T5) -- (T6);
		\draw (T5) -- (T7);
		\draw (T5) -- (T8);
		\draw (T6) -- (T9);
		\draw (T6) -- (T10);
		\draw (T6) -- (T11);
		
		\draw (T9) -- (34.5,-4);
		\draw (T9) -- (31.5,-4);
		\draw (T10) -- (27.5,-4);
		\draw (T10) -- (30.5,-4) ;
		\draw (T11) -- (35.5,-4);
		\draw (T11) -- (38.5,-4) ;
		\draw (T2) -- (36.5,2);
		\draw (T2) -- (33.5,2) ;
		\draw (T7) -- (26.5,-2);
		\draw (T7) -- (23.5,-2) ;

		\draw (30,3) node[draw=none,fill=none] {$\vdots$};
		\draw (22,1) node[draw=none,fill=none] {$\vdots$};
		\draw (28,-1) node[draw=none,fill=none] {$\vdots$};

		\draw (37,-4) node[draw=none,fill=none] {$\cdots$};
		\draw (33,-4) node[draw=none,fill=none] {$\cdots$};
		\draw (29,-4) node[draw=none,fill=none] {$\cdots$};
		\draw (35,2) node[draw=none,fill=none] {$\cdots$};
		\draw (25,-2) node[draw=none,fill=none] {$\cdots$};

		\draw[line width=0.8pt] (35.5,-5) -- (38.5,-5);
		\draw[line width=0.6pt] (35.5,-5.2) -- (35.5,-4.8);
		\draw[line width=0.6pt] (38.5,-5.2) -- (38.5,-4.8);
		\draw (37,-5.7) node[draw=none,fill=none] {$I_4$};
		
		\draw[line width=0.8pt] (27.5,-5) -- (34.5,-5);
		\draw[line width=0.6pt] (27.5,-5.2) -- (27.5,-4.8);
		\draw[line width=0.6pt] (34.5,-5.2) -- (34.5,-4.8);
		\draw (31,-5.7) node[draw=none,fill=none] {$I_3$};
		
		\draw[line width=0.8pt] (33.5,1) -- (36.5,1);
		\draw[line width=0.6pt] (33.5,1.2) -- (33.5,0.8);
		\draw[line width=0.6pt] (36.5,1.2) -- (36.5,0.8);
		\draw (35,0.3) node[draw=none,fill=none] {$I_1$};
		
		\draw[line width=0.8pt] (23.5,-3) -- (26.5,-3);
		\draw[line width=0.6pt] (23.5,-3.2) -- (23.5,-2.8);
		\draw[line width=0.6pt] (26.5,-3.2) -- (26.5,-2.8);
		\draw (25,-3.7) node[draw=none,fill=none] {$I_2$};
	\end{tikzpicture}
	\caption{A 4-branching vertex $x$, with $\mathcal{I}'$ being the set of red intervals, and a 4-interval path $I_1,I_2,I_3,I_4$.}
	\label{fig:branching/interval}
\end{figure}

\begin{lemma}\label{lem:branchtree}
Given an interval family $\mathcal I$ of size $2(b+2)^{t}$ of an ordered tree $(T,<)$, there is either a $b$-branching node or a $t$-interval path in $T$.
\end{lemma}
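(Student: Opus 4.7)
The base cases $t \leq 1$ are trivial: any interval of $\mathcal{I}$ together with the root of $T$ yields a $1$-interval path. For the inductive step, let $|\mathcal{I}| \geq 2(b+2)^t$ with no $b$-branching node in $T$. The main tool is a K\H{o}nig-type duality for interval hypergraphs: since each interval $I \subseteq L(x)$ meets a contiguous range of children of $x$, the classical max-matching-equals-min-transversal property for intervals on a line implies that if $x$ is not $b$-branching, then there exist at most $b-1$ ``piercing'' children of $x$ whose leaf sets collectively touch every interval of $\mathcal{I} \cap L(x)$.

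\textbf{Starting from the LCA $r^*$ of $\bigcup \mathcal{I}$}, I construct a $t$-interval path. Applying the duality at $r^*$ and accounting for at most $2$ extra ``split'' intervals per piercing child, pigeonhole yields some piercing child $y$ with $|\{I \in \mathcal{I} : I \subseteq L(y)\}| \geq (|\mathcal{I}| - 2(b-1))/(b-1) \geq 2(b+2)^{t-1}$, the last inequality following from $3(b+2)^{t-1} \geq b-1$ for $t \geq 2$. I then distinguish two cases. In \textbf{Case~A} (min transversal at $r^*$ has size at least $2$), the singleton $\{y\}$ is not a transversal, so some interval $I_1 \in \mathcal{I}$ is disjoint from $L(y)$; applying the inductive hypothesis to $y$'s subtree yields a $(t-1)$-interval path with intervals $I_2,\dots,I_t$ and nodes $x_2,\dots,x_t$, to which $x_1=r^*$ and $I_1$ are prepended, producing the desired $t$-interval path in $T$.

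\textbf{The main obstacle is Case~B} (the min transversal has size exactly $1$, so $y$ alone pierces everything): then $|\{I \in \mathcal{I} : I \subseteq L(y)\}| \geq |\mathcal{I}| - 2$, and at least one split interval $I^* \in \mathcal{I}$ extends beyond $L(y)$ (which exists because $r^*$ is the LCA), with $I^* \cap L(y)$ equal to a prefix or suffix of $L(y)$. I set $I_1 = I^*$, $x_1 = r^*$, and choose $x_2$ to be a descendant of $y$ whose leaf set lies in the ``safe zone'' $L(y) \setminus I^*$ and whose subtree supports a $(t-1)$-interval path; that is, $x_2$'s subtree contains at least $2(b+2)^{t-1}$ intervals of $\mathcal{I}$, which are all disjoint from $I^*$ and hence contained in the safe zone. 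Such an $x_2$ is found by a nested descent inside $y$: if the heaviest piercing child of $y$ lies entirely within the safe zone, pick it and proceed; otherwise recurse into the unique ``boundary'' piercing child whose leaf set straddles the safe-zone boundary, treating the portion of $I^*$ inside that child as a strictly smaller unsafe prefix to avoid. This nested descent terminates because the tree is finite and the unsafe portion shrinks strictly at each recursive step; the counting bookkeeping, using $|\{I \in \mathcal{I} : I \subseteq L(y)\}| \geq 2(b+2)^t - 2$ together with the same pigeonhole argument as in Case~A, confirms that the final $x_2$ still carries $\geq 2(b+2)^{t-1}$ intervals, so that the inductive hypothesis applies and the resulting $(t-1)$-interval path combines with $(r^*, I^*)$ to yield the desired $t$-interval path.
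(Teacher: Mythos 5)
Your inductive strategy---induction on $t$, with a K\H{o}nig-type duality for the interval hypergraph of spans at the LCA---is genuinely different from the paper's proof. The paper instead defines, for each node $x$, the weight $w(x)$ counting intervals contained in $L(x)$, shows that if $x$ is not $b$-branching and $w(x)\geq 2b^2$ then some child $y$ has $w(y)\geq w(x)/b$, and then directly builds a root-to-leaf chain $x_1,\dots,x_t$ by always jumping to the shallowest descendant at which $w$ has dropped by at least $3$, extracting one interval per step. Your Case~A, together with the pigeonhole count giving a piercing child with at least $2(b+2)^{t-1}$ fully contained intervals, is correct.

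Case~B, however, contains a genuine gap. When the heaviest piercing child of $z_i$ is the boundary child $z_{i+1}$, the only guaranteed bound is $w(z_{i+1})\geq (w(z_i)-2(b-1))/(b-1)$, i.e.\ a loss by roughly a factor of $b-1$ at each level of the nested descent; this is tight when the interval mass is nearly evenly split over the (up to $b-1$) piercing children with the boundary child just barely heaviest. Nothing bounds the depth of the descent by a constant---only by the tree height---so after $k$ levels you may be left with roughly $2(b+2)^t/(b-1)^k$ intervals, which falls below the needed $2(b+2)^{t-1}$ as soon as $(b-1)^k>b+2$, already at $k=2$ for $b\geq 4$. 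The assertion that ``the final $x_2$ still carries $\geq 2(b+2)^{t-1}$ intervals'' is therefore unjustified, and ``the unsafe portion shrinks strictly'' only establishes termination, not the count at termination. Avoiding exactly this cul-de-sac is what the paper buys by not anchoring the construction at the LCA: the shallowest-descendant choice controls the aggregate loss over all $t$ steps globally, so the depth of any individual jump is irrelevant.
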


 \begin{proof}
For every node $x$ of $T$, let $w(x)$ be the number of intervals of $\mathcal I$ which are entirely contained inside $L(x)$. Let $B(x)$ be the set of children $y$ of $x$ such that $w(y)\neq 0$.
Note that $x$ is (at least) $|B(x)|$-branching, and so we can conclude if $|B(x)|\geq b$. Therefore, we can assume that $|B(x)|<b$ for every $x \in V(T)$. 

Consider a node $x$ with $w(x)\geq 2b^2$.
Let $\mathcal I_1$ be the set of intervals which are contained inside some $L(y)$ for $y\in B(x)$ and $\mathcal I_2$ be the set of intervals which are contained inside $L(x)$ but are not in $\mathcal I_1$. 
Note that if $|{\mathcal I_2}|\geq 2b-1$, the node $x$ is $b$-branching. Indeed, in that case it suffices to order
$\mathcal I_2$ with respect to $<$, and to select every other interval to form a $b$-branching subfamily $\mathcal I'$.

So there is a child $y$ of $x$ such that (using that $w(x) \geq 2b^2$ in the last inequality): $$w(y)\geq |\mathcal{I}_1|/|B(x)| \geq (w(x)-2b+1)/(b-1)\geq w(x)/b.$$
Starting from the root $r$, we define a sequence of vertices $(r=x_1,\dots,x_{t})$ forming a descending chain in $T$, such that $w(x_{i+1})\leq w(x_{i})-3$ for every $i \in [t-1]$ and $w(x_i)\geq 2(b+2)^{t+1-i}$ for every $i \in [t]$.

Start by setting $x_1$ to be the root $r$, and note that $w(x_1) = 2(b+2)^{t}$.
Suppose that we already defined $x_1, \ldots, x_{i-1}$ satisfying the desired property, with $i \leq t$. Consider a descendant $x_{i}$ of $x_{i-1}$ with maximum $w(x_i)$, such that ${w(x_i)\leq w(x_{i-1})-3}$, and among them one which is as close to the root as possible in $T$.
By definition of $x_i$, we have: $$w(p(x_i)) \geq w(x_{i-1}) -2 \geq 2(b+2)^{t+2-i} - 2 \geq 2(b+2)^2 -2 \geq 2b^2.$$ 
Thus, $p(x_i)$ has a child $y$ with $$w(y)\geq w(p(x_i))/b \geq (w(x_{i-1}) - 2)/b.$$ 
Therefore, by maximality of $w(x_i)$, we have: $$w(x_i) \geq (w(x_{i-1}) -2)/b \geq (2(b+2)^{t+2-i}-2)/b \geq 2(b+2)^{t+1-i}.$$

For every $i \in [t-1]$, since $w(x_{i+1}) \leq w(x_{i}) - 3$, there are at least three intervals which are entirely contained in $L(x_i)$ and which are not entirely contained in $L(x_{i+1})$. Therefore, there is an interval $I_i$ which is entirely contained in $L(x_i)$ and such that $I_i \cap L(x_{i+1}) = \emptyset$.
Finally, $w(x_t) \geq 2(b+2) > 0$ so there exists an interval $I_t \subseteq L(x_t)$.
\end{proof}

If $(T, <, \{G_x\}_{x \in V(T)})$ is a delayed structured tree, a leaf $y$ of $T$ is \emph{$h$-heavy} if there are at least $h$ ancestors $x$ of $y$ which are not isolated in the graph $G_{p^2(x)}$, where $p^2(x)$ is the grandparent of $x$ in $T$.

\begin{lemma}\label{lem:cliquepath}
Let $(T, <, \{G_x\}_{x \in V(T)})$ be a delayed structured tree and $(G_T, <)$ be its realization. Let $\mathcal{I}$ be an interval family of $(T, <)$ forming a complete interval minor in $(G_T, <)$. If $\mathcal{I}$ has a $(2t-1)$-interval path in $T$, then there is a $(2t-3)$-heavy leaf in $(T, <, \{G_x\}_{x \in V(T)})$.
\end{lemma}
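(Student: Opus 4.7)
The plan is to fix an arbitrary leaf $y \in I_{2t-1}$; since $L(x_j) \supseteq I_{2t-1}$ for every $j$, the leaf $y$ is a descendant of every node $x_j$ of the interval path. The goal is then to exhibit, for each $j \in \{2, \ldots, 2t-2\}$, a distinct ancestor $a_j$ of $y$ that is non-isolated in $G_{p^2(a_j)}$. This will give $2t-3$ such ancestors and prove that $y$ is $(2t-3)$-heavy.

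For a fixed $j$, the construction uses an edge $u_j v_j$ of $G_T$ guaranteed by the complete interval minor, with $u_j \in I_{j-1}$ and $v_j \in I_{j+1}$ (the choice of $I_{j+1}$ rather than $I_j$ for the second endpoint will be crucial). Let $z_j$ denote the last common ancestor of $u_j$ and $v_j$ in $T$. From $u_j \in L(x_{j-1}) \setminus L(x_j)$ (using $L(x_j) \cap I_{j-1} = \emptyset$) and $v_j \in L(x_{j+1}) \subseteq L(x_j)$, I would immediately deduce that $z_j$ is a proper ancestor of $x_j$ lying on the path from $x_{j-1}$ down to $x_j$. Define $a_j$ as the grandchild of $z_j$ on the path from $z_j$ down to $v_j$. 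By definition of the realization, the edge $u_j v_j$ produces an edge of $G_{z_j}$ incident to $a_j$, so $a_j$ is non-isolated in $G_{z_j} = G_{p^2(a_j)}$.

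The main (and essentially only) technical step is to show that $a_j$ is an ancestor of $y$, through a short case analysis on the depth of $z_j$. If $z_j$ sits at least two levels above $x_j$, then $a_j$ lies on the path from $z_j$ down to $x_j$, making $a_j$ an ancestor of $x_j$ and hence of $y$. Otherwise $z_j = p(x_j)$ and $a_j$ is a child of $x_j$; but since $v_j \in L(x_{j+1})$, the grandchild of $z_j$ on the path to $v_j$ is forced to be the child of $x_j$ on the path down to $x_{j+1}$, which is still an ancestor of $y \in L(x_{j+1})$. This is exactly where we use $v_j \in I_{j+1}$: had we picked $v_j \in I_j$ instead, the child of $x_j$ above $v_j$ might not contain $y$ in its subtree.

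Finally, for distinctness of the $a_j$'s, I would observe that since $z_j$ lies on the path from $x_{j-1}$ (inclusive) to $x_j$ (exclusive), the depth of $a_j$ falls in $[\text{depth}(x_{j-1}) + 2,\, \text{depth}(x_j) + 1]$. For $j < j'$ one has $\text{depth}(x_j) + 1 < \text{depth}(x_j) + 2 \leq \text{depth}(x_{j'-1}) + 2$ (because $x_{j'-1}$ is either $x_j$ or a descendant of $x_j$), so these depth ranges are disjoint and strictly increasing with $j$. The ancestors $a_2, \ldots, a_{2t-2}$ therefore have pairwise distinct depths and so are pairwise distinct nodes of $T$, completing the argument.
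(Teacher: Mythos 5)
Your proof is correct and follows essentially the same strategy as the paper: fix a leaf $y$ in the last interval, and for each step of the interval path select one edge of the complete interval minor, pass to the LCA (which lies strictly above $x_j$), and take the grandchild toward $y$ as a non-isolated ancestor of $y$ whose depth increases with the index. The only cosmetic difference is your choice of edges between $I_{j-1}$ and $I_{j+1}$ rather than the paper's edges between $I_i$ and $I_{2t-1}$, which trades the paper's immediate containment $I_{2t-1}\subseteq L(w_i)$ for your short two-case check that $a_j$ is still an ancestor of $y$.
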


\begin{proof}
    Let $I_1, \ldots, I_{2t-1}$ be the intervals of the $(2t-1)$-interval path of $\mathcal{I}$ in $T$, and $x_1, \ldots, x_{2t-1}$ be nodes of $T$ such that for all $j \in [2t-1]$, $L(x_j)$ contains $I_j\cup \dots \cup I_{2t-1}$, and for all $j \in [\![2, 2t-1]\!]$, $L(x_j)\cap I_{j-1}=\emptyset$.
    
    Let $y \in L(T)$ be any leaf in $I_{2t-1}$.
    We show that $y$ is $(2t-3)$-heavy.
    Let $i \in [2t-3]$. Since $\mathcal{I}$ forms a complete interval minor in $(G_T, <)$, there is an edge between some vertex $y_i \in I_i$ and $I_{2t-1}$. Let $z_i$ be the deepest node of $T$ such that $I_{2t-1} \cup \{y_i\} \subseteq L(z_i)$.
    Then, $z_i$ is a descendant of $x_i$ and a proper ancestor of $x_{i+1}$ (hence all $z_i$ are distinct).
    Furthermore, since $z_i$ is a proper ancestor of $x_{i+1}$ then the grandchild $w_i$ of $z_i$ such that $y \in L(w_i)$ is an ancestor of $x_{i+2}$.
    In particular, $I_{2t-1} \subseteq L(x_{2t-1}) \subseteq L(x_{i+2}) \subseteq L(w_i)$.
    Let $w'_i$ be the grandchild of $z_i$ such that $y_i \in L(w'_i)$. 
    By maximality of the depth of $z_i$, $w_i$ and $w'_i$ are cousins in $T$.
    Since there is an edge between $y_i$ and $I_{2t-1}$ in $G_T$, there is an edge between $w_i$ and $w'_i$ in $G_{z_i}$.
    Thus, we found $2t-3$ distinct ancestors $w$ of $y$ which are not isolated in $G_{p^2(w)}$, i.e. $y$ is $(2t-3)$-heavy.
\end{proof}

\begin{lemma}\label{lem:cliquepath2}
Let $(T, <, \{G_x\}_{x \in V(T)})$ be a delayed structured tree containing a $(2t-3)$-heavy leaf.
Then, its realization $(G_T, <)$ contains a clique of size $t$ as a subgraph.
\end{lemma}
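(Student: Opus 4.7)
The plan is to extract from the $(2t-3)$-heavy leaf $y$ a family of $t-1$ leaves $z_1,\dots,z_{t-1}$ such that $\{y,z_1,\dots,z_{t-1}\}$ is a clique of size $t$ in $G_T$.

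Write the ancestors of $y$ from the root to $y$ as $a_0, a_1, \dots, a_D = y$, indexed by depth. By hypothesis, there are at least $2t-3$ indices $k \geq 2$ for which $a_k$ is not isolated in $G_{a_{k-2}}$; since sibling edges in the quotient graphs do not contribute to the realization, we may assume each non-isolation witness is a cousin of $a_k$. A pigeonhole argument on the parity of $k$ then produces $\lceil(2t-3)/2\rceil = t-1$ such indices $k_1 < k_2 < \dots < k_{t-1}$ all of the same parity, so that $k_{s+1} - k_s \geq 2$ for every $s$. For each $s$, pick a cousin $a'_s$ of $a_{k_s}$ in $T$ (so $a'_s$ is a grandchild of $a_{k_s - 2}$ whose parent differs from $a_{k_s - 1}$) with $a_{k_s} a'_s \in E(G_{a_{k_s - 2}})$, and then an arbitrary leaf $z_s \in L(a'_s)$.

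Verifying that $\{y, z_1, \dots, z_{t-1}\}$ is a clique reduces to two LCA computations in $T$. For the edge $y z_s$, the leaves $y$ and $z_s$ lie in different subtrees of $a_{k_s-2}$ (one rooted at $a_{k_s-1}$, the other at $p(a'_s)$), so their LCA is exactly $a_{k_s-2}$, the two grandchildren of $a_{k_s-2}$ containing them are $a_{k_s}$ and $a'_s$, and by construction these are adjacent in $G_{a_{k_s-2}}$. For the edge $z_i z_s$ with $i < s$, the crucial observation is that the gap $k_s - k_i \geq 2$ forces $a_{k_s-2}$ to be a descendant of (or equal to) $a_{k_i}$, so $z_s$ descends from $a_{k_i}$. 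Thus $z_s$ lies in the subtree of $a_{k_i-2}$ rooted at $a_{k_i-1}$ while $z_i$ lies in the different subtree rooted at $p(a'_i)$; the LCA is $a_{k_i-2}$, the two relevant grandchildren of $a_{k_i-2}$ are $a_{k_i}$ and $a'_i$, and their adjacency in $G_{a_{k_i-2}}$ yields the edge. The same subtree-disjointness also ensures pairwise distinctness of the chosen vertices.

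The only real obstacle, and the reason why the definition of heavy uses $2t-3$ rather than $t-1$, is precisely this gap-$2$ condition. Without it, when $k_s = k_i + 1$, the grandchild of $a_{k_i-2}$ containing $z_s$ would be some sibling of $a_{k_i}$ rather than $a_{k_i}$ itself, so one could no longer invoke the chosen edge in $G_{a_{k_i-2}}$. Extracting a same-parity subfamily, thereby halving the count, is exactly the maneuver that sidesteps this issue.
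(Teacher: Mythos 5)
Your proof is correct and follows essentially the same strategy as the paper's: extract from the $2t-3$ heavy ancestors a subfamily of $t-1$ with pairwise depth gap at least two, take a leaf below a cousin neighbor of each in the relevant quotient graph, and verify that these, together with a vertex of the deepest selected ancestor's leaf set (which you take to be $y$ itself), form a clique. The only differences are presentational: the paper sorts the $2t-3$ ancestors by depth and selects the odd-indexed ones instead of using a parity pigeonhole, and it states two invariants from which the clique property follows rather than verifying each pair via an explicit LCA computation. One remark worth flagging: the sentence ``since sibling edges in the quotient graphs do not contribute to the realization, we may assume each non-isolation witness is a cousin'' is not a legitimate reduction --- if the only neighbors of some $a_k$ in $G_{a_{k-2}}$ were its siblings, there would be no cousin witness and the count of usable indices could fall below $2t-3$. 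The paper's proof makes the same tacit assumption when it picks a neighbor $x'_i$ and asserts $y_i$ is adjacent to all of $L(x_{2i+1})$; it is harmless here only because the lemma is applied exclusively to distinguishing delayed decompositions, in which, by \cref{rmk:nb-quotients}, siblings always form an independent set in every quotient graph. Your proof inherits the same implicit hypothesis, so this is a shared imprecision rather than an error relative to the paper.
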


\begin{proof}
    Let $y$ be a $(2t-3)$-heavy leaf of $T$.
    Let $x_1, \ldots, x_{2t-3}$ be ancestors of $y$ such that each $x_i$ is not isolated in the graph $G_{p^2(x_i)}$. Up to renaming them, we can assume that $x_i$ is a proper ancestor of $x_j$ whenever $i < j$.
    We build a sequence $(y_0, \ldots, y_{t-1})$ of vertices of $G_T$ with the following properties:
    \begin{itemize}
        \item For every $i \in [\![0, t-2]\!]$, $y_i$ is adjacent to every vertex in $L(x_{2i+1})$, and
        \item For every $i \in [t-1]$, $y_i \in L(x_{2i-1})$.
    \end{itemize}
    Note that these properties immediately imply that $y_0, \ldots, y_{t-1}$ induce a clique of size $t$ in $G_T$.
    
    Let $i \in [\![0, t-2]\!]$ and let $x'_i$ be a neighbor of $x_{2i+1}$ in the graph $G_{p^2(x_{2i+1})}$.
    Let $y_{i}$ be an arbitrary vertex in $L(x'_{i})$. Then, $y_i$ is adjacent to all vertices in $L(x_{2i+1})$.
    If $i > 0$ then $p^2(x_{2i+1})$ is a descendant (not necessarily proper) of $x_{2i-1}$ so $y_i \in L(x_{2i-1})$.
    Finally, we choose $y_{t-1}$ to be any vertex of $L(x_{2t-3})$.
\end{proof}
\section{Delayed rank} \label{sec:delayed_rank}

The notion of rank is natural and convenient for structural analysis, but not so much for algorithmic purposes, since there does not seem to be a simple way of computing it, or even approximating it. 
To overcome this issue, we introduce an alternative to the rank, based on delayed decompositions, which we call the \emph{delayed rank}. This delayed rank can easily be computed, and shares some key structural properties with the rank. It is the key notion for our approximation algorithm.

\subsection{Definition and first properties}

We add some information to the delayed structured trees. Consider a delayed structured tree $(T,<,\{G_x\}_{x \in V(T)})$. We label the nodes $x$ of $T$ as follows (see \cref{fig:delayed_labeling}):

\begin{itemize}
    \item If $x$ does not have a grand-parent, we label it with $\emptyset$ (the first two layers of the tree).
    \item Else, if there is a cousin $x'$ of $x$ such that $x < x'$ and there is an edge $xx'$ in $G_{p^2(x)}$ then we label $x$ with $R$.
    \item Otherwise, if there is a cousin $x'$ of $x$ such that $x' < x$ and there is an edge $xx'$ in $G_{p^2(x)}$ then we label $x$ with $L$.
    \item Else, we label $x$ with $O$.
\end{itemize}

\begin{figure}[ht]
    \centering
    \begin{tikzpicture}[scale=0.5,auto=left,every node/.style={circle,draw,inner sep=2.6pt}]

		\node (1) at (-5,10) {\tiny{$\emptyset$}};
		\node (2) at (-9,8) {\tiny{$\emptyset$}};
		\node (3) at (-1,8) {\tiny{$\emptyset$}};

		\node (4) at (-9,5) {\tiny{$R$}};
		\node (5) at (-6,5) {\tiny{$L$}};

        \node (6) at (-1,5) {\tiny{$O$}};
		\node (7) at (4,5) {\tiny{$L$}};

		\node (8) at (-6,2) {\tiny{$R$}};
		\node (9) at (-3,2) {\tiny{$R$}};

        \node (10) at (-1,2) {\tiny{$O$}};
		\node (11) at (1,2) {\tiny{$R$}};
		\node (12) at (3,2) {\tiny{$L$}};
		\node (13) at (5,2) {\tiny{$L$}};

		\node (14) at (-3,0) {\tiny{$R$}};
		\node (15) at (-1.6,0) {\tiny{$L$}};
		\node (16) at (-0.4,0) {\tiny{$R$}};
		\node (17) at (1,0) {\tiny{$L$}};
		\node (18) at (2.4,0) {\tiny{$O$}};
		\node (19) at (3.6,0) {\tiny{$O$}};
		\node (20) at (5,0) {\tiny{$O$}};

		\node[fill=white] (21) at (-1.6,-2) {\tiny{$R$}};
		\node[fill=white] (22) at (-0.4,-2) {\tiny{$L$}};
		\node[fill=white] (23) at (2.4,-2) {\tiny{$O$}};
		\node[fill=white] (24) at (3.6,-2) {\tiny{$O$}};

		\draw(1) -- (2);
		\draw(1) -- (3);
		\draw(2) -- (4);
		\draw(3) -- (5);
		\draw(3) -- (6);
		\draw(3) -- (7);
		
		\draw(5) -- (8);
		\draw(6) -- (9);
		\draw(6) -- (10);
		\draw(6) -- (11);
		\draw(7) -- (12);
		\draw(7) -- (13);
		\draw(9) -- (14);
		\draw(10) -- (15);
		\draw(10) -- (16);
		\draw(11) -- (17);
		\draw(12) -- (18);
		\draw(12) -- (19);	
		\draw(13) -- (20);
		
		\draw(15) -- (21);
		\draw(16) -- (22);
		\draw(18) -- (23);
		\draw(19) -- (24);

		\draw[blue] (4) to [bend left=30] (5);
		\draw[blue] (4) to [bend left=30] (7);
		\draw[orange] (8) to[bend left=50] (12);
		\draw[orange] (9) to[bend left=50] (13);
		\draw[orange] (11) to[bend left=50] (12);
		\draw[orange] (11) to[bend left=50] (13);
		\draw[cyan] (14) to[bend left=50] (15);
		\draw[cyan] (16) to[bend left=50] (17);
		\draw[yellow] (21) to[bend left=50] (22);
        
    \end{tikzpicture}
    \caption{Labeling of a delayed structured tree.}
    \label{fig:delayed_labeling}
\end{figure}

\begin{lemma}\label{lem:no-consec-O}
Let $(T, <, \{G_x\}_{x \in V(T)})$ be a distinguishing delayed decomposition of an ordered graph $(G, <)$. If $x$ is a node of $T$ with at least $3$ children, there are no consecutive children $y_1, y_2$ of $x$ both labelled with $O$.
\end{lemma}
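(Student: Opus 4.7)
The approach is proof by contradiction. Suppose $y_1, y_2$ are consecutive children of $x$ both labeled $O$. An $O$ label is only assigned to nodes that have a grandparent, so $x$ must have a parent and $p^2(y_i) = p(x)$ is well-defined. The first step is to invoke the distinguishing property of the decomposition on the pair $(y_1, y_2)$: it yields a vertex $v \in V(G) \setminus L(x)$ that is adjacent to every vertex of one of $L(y_1), L(y_2)$ and to no vertex of the other. By symmetry, assume $v$ is adjacent to every vertex of $L(y_1)$ and to no vertex of $L(y_2)$.

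The second step is to locate the last common ancestor $a$ of $v$ and $x$ in $T$; since $v \notin L(x)$, the node $a$ is a proper ancestor of $x$. The key intermediate claim is that $a = p(x)$. To prove it, suppose instead that $a$ is a strict ancestor of $p(x)$, so $\mathrm{depth}(a) \leq \mathrm{depth}(x) - 2$. Then the grandchild of $a$ lying on the path from $a$ down to $x$, call it $a_x$, is also an ancestor of every leaf in $L(x)$, so it is the grandchild of $a$ above any $u \in L(y_1)$ and above any $u' \in L(y_2)$. Letting $a_v$ denote the grandchild of $a$ that is an ancestor of $v$, the realization rule applied at $a$ gives
\[
vu \in E(G) \iff a_x a_v \in E(G_a) \iff vu' \in E(G),
\]
so $v$ is either adjacent to all of $L(y_1) \cup L(y_2)$ or to none of it, contradicting our choice of $v$.

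With $a = p(x)$ in hand, the last step is to derive a contradiction with the label $O$ of $y_1$. Let $a_v$ be the grandchild of $p(x)$ that is an ancestor of $v$. Since $v \notin L(x)$, the node $a_v$ is not a child of $x$, hence $a_v$ is a cousin of $y_1$ in $T$. Picking any $u \in L(y_1)$ and using the realization at $a = p(x) = p^2(y_1)$, the edge $vu \in E(G)$ translates into $y_1\, a_v \in E(G_{p^2(y_1)})$. This exhibits a cousin of $y_1$ joined to $y_1$ in $G_{p^2(y_1)}$, so $y_1$ must carry label $L$ or $R$, contradicting the assumption that $y_1$ is labeled $O$.

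The main obstacle I anticipate is purely notational: carefully tracking, for each relevant leaf, which grandchild of $a$ sits above it so that the realization rule can be applied correctly. Once this bookkeeping is in place, the argument reduces to a clean dichotomy on the depth of $a$, with no deeper combinatorial input required.
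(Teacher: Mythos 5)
Your proof is correct and follows essentially the same route as the paper's: invoke the distinguishing property to get $v$, locate the relevant ancestor (your $a$, the paper's $z$), rule out $a$ being strictly above $p(x)$ via the realization rule, and then read off a nontrivial adjacency for $y_1$ in $G_{p^2(y_1)}$, contradicting label $O$. The only cosmetic difference is your WLOG on which of $L(y_1), L(y_2)$ is fully adjacent to $v$, whereas the paper carries both cases to the end.
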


\begin{proof}
    Let $x$ be such a node, and $y_1, y_2$ be consecutive children of $x$.
    Since the delayed decomposition $(T, <, \{G_x\}_{x \in V(T)})$ is distinguishing, there is some vertex $v \in V(G) \setminus L(x)$ which is adjacent to all vertices in one of $L(y_1), L(y_2)$ and none in the other.
    Let $u_1$ be any vertex in $L(y_1)$ and $u_2$ any vertex in $L(y_2)$.
    Since $v \notin L(x)$ then $u_1$ and $v$ have the same last common ancestor $z$ as $u_2$ and $v$, and $z$ is a proper ancestor of $x$.
    Suppose by contradiction that $z$ is a proper ancestor of $p(x)$. 
    Then, the grandchild of $z$ which is an ancestor of $u_1$ is also the grandchild of $z$ which is an ancestor of $u_2$, call it $x'$. Let $v'$ be the ancestor of $v$ which is a grandchild of $z$.
    If $x'v' \in E(G_z)$ then $u_1v \in E(G)$ and $u_2v \in E(G)$, a contradiction. 
    Similarly, if $x'v' \notin E(G_z)$ then $u_1v \notin E(G)$ and $u_2v \notin E(G)$, again a contradiction. 
    Therefore, $z$ is the parent of $x$.
    Let $v'$ be be the ancestor of $v$ which is a grandchild of $z$. Then, $v'$ is a cousin of $y_1$ and $y_2$.
    If $i \in \{1, 2\}$ is such that $v$ is adjacent to all vertices in $L(y_i)$ then $v'y_i \in E(G_{p^2(y_i)})$ so one of $y_1$ and $y_2$ is not labelled with $O$.
\end{proof}

In view of \cref{lem:no-consec-O}, if $x$ is a node of $T$ with at least 3 children, and $y$ is a child of $x$ labelled $O$ which is not the first child of $x$, then the predecessor $y'$ of $y$ has either label $L$ or label $R$. If $y'$ has label $L$, we refine the label of $y$ to $O_L$ and if $y'$ has label $R$, we refine the label of $y$ to $O_R$.

Given a node $x$ of a delayed structured tree, if $y$ is a vertex of $G_x$ whose parent is labelled with $R$ then there exists some vertex $y' > V(G_x)$ which is adjacent to $y$. This observation will be our main tool to prove that graphs with large delayed rank have large complete interval minors (see \cref{thm:big-cano-rk-big-kt}). For this reason, we define the \emph{type} of a node $x \in V(T)$ as the label of its \emph{parent} in $T$.

The \emph{delayed rank} of an ordered graph $(G, <)$ is defined as follows: \begin{itemize}
    \item If $(G, <)$ is monotone bipartite, $(G, <)$ has delayed rank $0$,
    \item Otherwise, we compute the distinguishing delayed decomposition of $(G, <)$. For each quotient graph $(G_x, <)$, if $(G_x, <)$ is monotone bipartite we say that $(G_x, <)$ is a \emph{refined quotient graph} of $(G, <)$. Otherwise, note that $x$ has at least $3$ children (since there are only edges between cousins in $G_x$). In that case, if the first child of $x$ is labelled with $O$, we remove all its children from $G_x$. Then, we partition the vertices into types $R, L, O_R$ and $O_L$.
    Set $R' = \{R, O_R\}$ and $L' = \{L, O_L\}$.
    We partition the edges into four types, depending on whether the type of their left endpoint is in $R'$ or $L'$, and similarly for their right endpoint, denote these four types by $R'R', R'L', L'R'$ and $L'L'$.
    The \emph{refined quotient graphs} of $(G_x, <)$ are then defined as follows: \begin{itemize}
        \item The graph induced by the edges $R'R'$, to which we remove all vertices before the first vertex of type $R$ and after the last vertex of type $R$.
        \item The graph induced by the edges $R'L'$, to which we remove all vertices before the first vertex of type $L$ and after the last vertex of type $L$. 
        \item The graph induced by the edges $L'R'$, to which we remove all vertices before the first vertex of type $R$ and after the last vertex of type $R$.
        \item The graph induced by the edges $L'L'$, to which we remove all vertices before the first vertex of type $L$ and after the last vertex of type $L$.
    \end{itemize}
    Then, the delayed rank of $(G, <)$ is $1$ more than the maximum delayed rank of all its refined quotient graphs.
\end{itemize}

Observe first that siblings form an independent set in $G_x$, so removing all the children of the first child of $x$ is just removing an independent set, at the beginning of the order of $G_x$.
Similarly, the vertices before the first vertex of type $R$ are at the beginning of the order of $G_x$ and all have type either $L$ or $O_L$, so they cannot induce any edge of type $R'R'$ or $L'R'$.
As for the vertices after the last vertex of type $R$, they are at the end of the order of $G_x$ and all have type either $L, O_L$ or $O_R$, and in the latter case they are siblings and come before the vertices of type $L$ and $O_L$. Thus, all these vertices cannot induce any edge of type $R'R'$ or $L'R'$.
Similar observations hold for the vertices before the first vertex of type $L$ and after the last vertex of type $L$.

We now give some simple properties of the delayed rank. The first one gives a simple way of computing the delayed rank of an ordered graph. We first need some definitions.

If $G$ is an ordered graph, we define a sequence $(\mathcal{G}_i(G))_{i \geq 0}$ of sets of graphs as follows.
Set $\mathcal{G}_0(G) = \{G\}$. Suppose that $\mathcal{G}_i(G)$ is defined. 
Then, for every $H \in \mathcal{G}_i(G)$, add all the refined quotient graphs of $H$ to $\mathcal{G}_{i+1}(G)$.

\begin{lemma}\label{lem:large-rank-Gr}
An ordered graph $G$ has delayed rank at least $r$ if and only if ${\mathcal{G}_r(G) \neq \emptyset}$.
\end{lemma}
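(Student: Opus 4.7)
The plan is to prove the equivalence by induction on $r$, relying on the recursive structure of both the delayed rank and the sequence $(\mathcal{G}_i(G))_{i \geq 0}$.

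The base case $r = 0$ is immediate: $\mathcal{G}_0(G) = \{G\}$ is always non-empty, and the delayed rank of any ordered graph is at least $0$ by definition.

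For the inductive step, the key observation I would first establish is the identity
\[
\mathcal{G}_{r+1}(G) \;=\; \bigcup_{H \text{ refined quotient graph of } G} \mathcal{G}_r(H).
\]
This follows directly from unwinding the definition: $\mathcal{G}_{r+1}(G)$ is obtained by applying the "take all refined quotient graphs" operation $r+1$ times starting from $\{G\}$, while the right-hand side first applies it once to obtain $\mathcal{G}_1(G)$, and then applies it $r$ more times to each $H \in \mathcal{G}_1(G)$. A trivial sub-induction on $r$ confirms this identity.

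With this identity in hand, I would then split into two cases to prove the equivalence at level $r+1$. If $G$ is monotone bipartite, then by definition $G$ has no refined quotient graphs, so $\mathcal{G}_1(G) = \emptyset$, and hence $\mathcal{G}_{r+1}(G) = \emptyset$ by the identity; simultaneously, the delayed rank of $G$ equals $0 < r+1$, so both sides of the equivalence are false. If $G$ is not monotone bipartite, then by the recursive definition, the delayed rank of $G$ is at least $r+1$ if and only if some refined quotient graph $H$ of $G$ has delayed rank at least $r$. By the inductive hypothesis applied to $H$, this is equivalent to $\mathcal{G}_r(H) \neq \emptyset$ for some refined quotient graph $H$ of $G$, which by the identity above is equivalent to $\mathcal{G}_{r+1}(G) \neq \emptyset$.

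There is no genuine obstacle here; the argument is a straightforward structural induction. The only subtlety worth being careful about is ensuring that the monotone bipartite case is handled consistently on both sides, which the definitions make automatic since monotone bipartite graphs produce no refined quotient graphs and simultaneously have delayed rank exactly $0$.
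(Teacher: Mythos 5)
Your proof is correct and follows essentially the same route as the paper: an induction on $r$ grounded in the identity $\mathcal{G}_{r+1}(G) = \bigcup_{H \in \mathcal{G}_1(G)}\mathcal{G}_r(H)$, combined with the recursive definition of delayed rank. The only difference is that you explicitly separate out the monotone bipartite case, which the paper leaves implicit in its chain of equivalences; this is a minor stylistic refinement, not a different argument.
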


\begin{proof}
    We prove it by induction on $r$.
    The property is trivial for $r = 0$: every ordered graph $G$ has delayed rank at least $0$ and $\mathcal{G}_0(G) = \{G\} \neq \emptyset$.
    Suppose now that the property holds for some $r \geq 0$.
    The key observation is that for $s \geq 1$ we have $\mathcal{G}_{s}(G) = \bigcup_{H \in \mathcal{G}_1(G)}\mathcal{G}_{s-1}(H)$.
    Thus, using the induction hypothesis, we get: $G$ has delayed rank at least $r+1$ $\iff$ there is a refined quotient $H$ of $G$ (i.e. $H \in \mathcal{G}_1(G)$) of delayed rank at least $r$ $\iff$ there is a refined quotient $H$ of $G$ such that $\mathcal{G}_{r}(H) \neq \emptyset$ $\iff$ $\mathcal{G}_{r+1}(G) \neq \emptyset$.
\end{proof}

\begin{lemma}\label{lem:Gr-subgraph}
If $G, H$ are ordered graphs such that $H \in \mathcal{G}_r(G)$ for some $r \geq 0$ then $H$ is a subgraph of $G$.
\end{lemma}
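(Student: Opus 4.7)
The plan is to prove this by induction on $r$, using the recursive identity $\mathcal{G}_r(G) = \bigcup_{H' \in \mathcal{G}_1(G)} \mathcal{G}_{r-1}(H')$ (for $r \geq 1$) that was already extracted in the proof of \cref{lem:large-rank-Gr}. Given this identity and the transitivity of the subgraph relation, the inductive step reduces entirely to the case $r = 1$: it suffices to prove that every refined quotient graph of $G$ is (isomorphic to) an ordered subgraph of $G$. The base case $r = 0$ is immediate since $\mathcal{G}_0(G) = \{G\}$.

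First I would handle the quotient graphs $G_x$ themselves, before refinement. Let $(T, <, \{G_x\}_{x \in V(T)})$ be the distinguishing delayed decomposition of $G$, and fix an internal node $x$ whose grandchildren are the vertices of $G_x$. Define an order-preserving map $\phi$ from $V(G_x)$ to $V(G)$ by picking, for each grandchild $y$ of $x$, an arbitrary vertex $\phi(y) \in L(y)$. Since the sets $L(y)$ are pairwise disjoint intervals of $<$, the map $\phi$ is injective and order-preserving. If $y_1 y_2 \in E(G_x)$, then $y_1$ and $y_2$ are cousins and, by definition of $G_x$, there is at least one edge between $L(y_1)$ and $L(y_2)$ in $G$; by the consistency property proved during the construction in \cref{thm:compute-delayed}, this actually means that every vertex of $L(y_1)$ is adjacent in $G$ to every vertex of $L(y_2)$, so in particular $\phi(y_1)\phi(y_2) \in E(G)$. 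Hence $G_x$ embeds as an ordered subgraph of $G$.

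Next I would observe that every refined quotient graph of $G$ is, by construction, obtained from some $G_x$ by deleting some vertices and some edges: either $G_x$ is monotone bipartite and is itself declared a refined quotient graph (no deletion), or we remove the children of the first child of $x$ (when it is labelled $O$), partition the remaining edges into the four types $R'R', R'L', L'R', L'L'$, and then delete some leading/trailing vertices in each of the four resulting graphs. In either case the refined quotient graph is an ordered subgraph of $G_x$, hence, by composing with $\phi$, an ordered subgraph of $G$. This completes the $r = 1$ case.

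Finally, for the inductive step with $r \geq 1$, given $H \in \mathcal{G}_r(G)$, the identity above yields some $H' \in \mathcal{G}_1(G)$ with $H \in \mathcal{G}_{r-1}(H')$; by induction $H$ is a subgraph of $H'$, and by the $r=1$ case $H'$ is a subgraph of $G$, so $H$ is a subgraph of $G$. No step is really an obstacle here; the only subtle point to state cleanly is the use of the consistency property of the distinguishing delayed decomposition to promote a single cross-edge between $L(y_1)$ and $L(y_2)$ into an edge between any chosen representatives $\phi(y_1)$ and $\phi(y_2)$.
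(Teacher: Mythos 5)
Your proof is correct and takes essentially the same route as the paper's: induction on $r$ reducing everything to the observation that each quotient graph $G_x$ embeds as an ordered subgraph of $H'$ by choosing one representative vertex per grandchild, and that a refined quotient graph is obtained from some $G_x$ by vertex and edge deletions. The only cosmetic difference is that you unwind the recursion from the top (using $\mathcal{G}_r(G) = \bigcup_{H' \in \mathcal{G}_1(G)} \mathcal{G}_{r-1}(H')$) while the paper peels off the last level (taking $H' \in \mathcal{G}_r(G)$ with $H$ a refined quotient of $H'$); you also spell out the consistency-property argument promoting a single cross-edge between $L(y_1)$ and $L(y_2)$ to full adjacency, which the paper leaves implicit in the parenthetical remark about ``taking a vertex in $L(t)$''.
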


\begin{proof}
    We prove it by induction on $r$.
    If $H \in \mathcal{G}_0(G)$ then $H = G$ so $H$ is a subgraph of $G$.
    Suppose that the property holds for $r$ and that $H \in \mathcal{G}_{r+1}(G)$.
    By definition, there exists $H' \in \mathcal{G}_r(G)$ such that $H$ is a refined quotient graph of $H'$.
    By induction hypothesis, $H'$ is a subgraph of $G$.
    Thus, to conclude it suffices to prove that $H$ is a subgraph of $H'$.
    To obtain $H$, we start from some quotient graph $G_x$ in the delayed decomposition of $H'$ (which is a subgraph of $H'$, as can be seen by taking a vertex in $L(t)$ for every $t \in V(G_x)$), and then possibly remove some vertices, take a subset of the edges and remove some more vertices. Thus, $H$ is a subgraph of $H'$.
\end{proof}

The following result bounds the size of each $\mathcal{G}_r(G)$. It will be important in the analysis of the running time of the algorithm in \cref{thm:the-algo}.

\begin{lemma} \label{lem:bound-size}
If $G$ is an ordered graph with $n$ vertices and $m$ edges, for every $r \geq 1$, the total number of edges of all graphs in $\mathcal{G}_r(G)$ is at most $m$.
Thus, for every $r \geq 1$, there there are at most $4mn$ graphs in $\mathcal{G}_r(G)$.
\end{lemma}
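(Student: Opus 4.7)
The plan is to prove both bounds simultaneously by induction on $r\geq 1$, relying on the recursive identity $\mathcal G_s(G)=\bigcup_{H\in\mathcal G_1(G)}\mathcal G_{s-1}(H)$ for $s\geq 1$ noted in the proof of \cref{lem:large-rank-Gr}, and on the two key facts from \cref{rmk:nb-quotients}: the distinguishing delayed decomposition of an $n$-vertex ordered graph has at most $n-1$ quotient graphs, and their total edge count is at most $m$.

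For the base case $r=1$, I would observe that each quotient graph $G_x$ yields either a single refined quotient (when $G_x$ is itself monotone bipartite) or four refined quotients indexed by the edge types $R'R',\,R'L',\,L'R',\,L'L'$. In the second case, the four resulting edge sets are disjoint subsets of $E(G_x)$ obtained by first removing the children of the first child of $x$ when it is labelled $O$, and then partitioning the surviving edges by endpoint type; in particular their total edge count is at most $|E(G_x)|$. Summing over the at most $n-1$ quotient graphs yields a total of at most $m$ edges in $\mathcal G_1(G)$. For the count, we have $|\mathcal G_1(G)|\leq 4(n-1)$, which is bounded by $4mn$ whenever $m\geq 1$; and the case $m=0$ is immediate since an edgeless graph is monotone bipartite, so $\mathcal G_1(G)=\emptyset$.

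For the inductive step $r\geq 2$, I would apply the inductive hypothesis to each $H\in\mathcal G_1(G)$, which by \cref{lem:Gr-subgraph} is a subgraph of $G$ and hence has $n_H\leq n$ vertices and some number $m_H$ of edges. The hypothesis provides $\sum_{H'\in\mathcal G_{r-1}(H)}|E(H')|\leq m_H$ and $|\mathcal G_{r-1}(H)|\leq 4n_H m_H\leq 4nm_H$. Summing these inequalities over $H\in\mathcal G_1(G)$, and using the base case inequality $\sum_H m_H\leq m$, yields that the total edge count of $\mathcal G_r(G)$ is at most $m$ and that $|\mathcal G_r(G)|\leq 4n\sum_H m_H\leq 4mn$, as desired.

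The only subtle point is the base case observation that the vertex-deletion step in the definition of the four refined quotients only removes edges rather than creating them, so the four edge-type graphs genuinely form a partition of a subset of $E(G_x)$; this keeps the edge-counting argument tight and makes both induction steps work with the multiplicative factor $4n$ rather than something larger.
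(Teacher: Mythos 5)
Your proof is correct and rests on the same two ingredients as the paper's: \cref{rmk:nb-quotients} for the base case (at most $n-1$ quotient graphs, at most $m$ edges total, at most $4$ refined quotients each) and \cref{lem:Gr-subgraph} for the $n_H \leq n$ bound. The only genuine difference is organizational: the paper proves the edge bound by induction using $\mathcal{G}_{r+1}(G)=\bigcup_{H\in\mathcal{G}_r(G)}\mathcal{G}_1(H)$ and then derives the cardinality bound \emph{non-inductively}, noting that at most $m$ graphs of $\mathcal{G}_{r-1}(G)$ carry any edge and each contributes at most $4n$ refined quotients; you instead peel $\mathcal{G}_1(G)$ off from the other end of the recursion and carry both bounds through a single induction. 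Both routes are sound and of comparable length; yours is a little more uniform, the paper's makes the cardinality bound a one-shot corollary of the edge bound.
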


\begin{proof}
    We prove the first property by induction on $r$.
    For $r = 1$, \cref{rmk:nb-quotients} implies that the total number of edges of all quotient graphs of $G$ is at most $m$.
    Since the refined quotient graphs of $G$ are obtained from the quotient graphs by partitioning the edges and removing some edges, the total number of edges of all graphs in $\mathcal{G}_1(G)$ is at most $m$.
    Suppose now that the property holds for some $r \geq 1$.
    By definition, $\mathcal{G}_{r+1}(G) = \bigcup_{H \in \mathcal{G}_r(G)}\mathcal{G}_{1}(H)$.
    By the induction hypothesis at rank $1$, if $H$ has $m'$ edges then the total number of edges of all graphs in $\mathcal{G}_1(H)$ is at most $m'$.
    By the induction hypothesis at rank $r$, the total number of edges over all $H \in \mathcal{G}_r(G)$ is at most $m$. 
    Thus, the total number of edges of all graphs in $\mathcal{G}_{r+1}(G)$ is at most $m$.
    This proves the first part of the statement.
    
    If $m = 0$ then $G$ is monotone bipartite so $\mathcal{G}_1(G) = \emptyset$ and $|\mathcal{G}_1(G)| \leq 4mn$. Otherwise, by \cref{rmk:nb-quotients}, $G$ has at most $n-1$ quotient graphs, each of which gives rise to at most $4$ refined quotient graphs, so $|\mathcal{G}_{1}(G)| \leq 4(n-1) \leq 4mn$. 
    Now, suppose $r > 1$. By the first part of the statement, there are at most $m$ graphs $H \in \mathcal{G}_{r-1}(G)$ with at least one edge.
    All other graphs in $\mathcal{G}_{r-1}(G)$ are monotone bipartite, hence they have no refined quotient graphs.
    If $H \in \mathcal{G}_{r-1}(G)$ then $H$ has at most $n$ vertices by \cref{lem:Gr-subgraph} so, by \cref{rmk:nb-quotients}, $H$ has at most $n$ quotient graphs.
    Each of these quotient graphs gives rise to at most $4$ refined quotient graphs, so $|\mathcal{G}_{1}(H)| \leq 4n$. 
    Taking the union over all $H \in \mathcal{G}_{r-1}(G)$ which have at least one edge, we get $|\mathcal{G}_r(G)| \leq 4mn$.
\end{proof}

The next result shows that the notion delayed rank is simply a refinement of the notion of rank.

\begin{lemma}\label{lem:bddrk-bdrk}
If $G$ has delayed rank $r$ then $G$ has rank at most $7r+2$.
\end{lemma}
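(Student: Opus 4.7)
The plan is a straightforward induction on the delayed rank $r$. The base case $r=0$ is immediate: a monotone bipartite graph has rank at most $2 = 7\cdot 0 + 2$, obtained from the empty graph by two rounds of independent concatenation.

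For the inductive step with $r\geq 1$, apply the distinguishing delayed decomposition of $(G,<)$ to obtain quotient graphs $\{G_x\}_x$. By \cref{lem:delayed=subst+edge}, $(G,<)$ is the edge union of two ordered graphs in the substitution closure of $\{G_x\}_x$. One substitution closure step adds at most $1$ to the rank bound, and the outer edge union adds $1$ more; hence it suffices to bound the rank of each quotient graph $G_x$ by $7r$.

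To bound the rank of a single quotient graph $G_x$, split into two cases. If $G_x$ is monotone bipartite, its rank is at most $2\leq 7r$. Otherwise, the definition of delayed rank provides four refined quotient graphs $Q_1,\dots,Q_4$ whose edge union is the subgraph of $G_x$ on $V(G_x)\setminus I_0$, where $I_0$ is the (possibly empty) set of children of the first child of $x$; that $I_0$ is independent in $G_x$ is the sibling remark of \cref{rmk:nb-quotients}. By the induction hypothesis each $Q_i$ has rank at most $7(r-1)+2$. I then rebuild $G_x$ in three steps: (i) extend each $Q_i$ to the common vertex set $V(G_x)\setminus I_0$ by prepending and appending the missing vertices as isolated, via at most two independent concatenations per $Q_i$, raising the rank bound by $2$; (ii) form the edge union of the four extensions by two parallel binary edge unions $Q_1'\oplus Q_2'$ and $Q_3'\oplus Q_4'$ followed by a third edge union of the two results, raising the rank bound by $2$ more; (iii) prepend $I_0$ via one independent concatenation, raising the rank by $1$. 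Summing yields a rank of at most $7(r-1)+2+2+2+1 = 7r$ for $G_x$, which completes the induction. The only subtle point is alignment with the formal definitions of the three operations: edge union requires a common vertex set (which is exactly the purpose of step (i)), while independent concatenation requires the added set to be independent (which holds for $I_0$ and trivially for the isolated padding in step (i)); no deeper obstacle arises, and the ``$7$ per level'' bookkeeping falls out directly.
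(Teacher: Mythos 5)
Your proof is correct and follows essentially the same route as the paper's: induction on delayed rank, using \cref{lem:delayed=subst+edge} to reduce to the quotients, then reconstructing each $G_x$ from its four refined quotients via padding by stable intervals (two independent concatenations), two rounds of edge union, and a final independent concatenation of the removed type-$O$ prefix, for a total cost of $+7$ per level. The bookkeeping matches the paper's exactly, so there is nothing to add.
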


\begin{proof}
    We prove it by induction on $r$.
    If $G$ has delayed rank $0$ then $G$ is monotone bipartite so $G$ has rank at most $2$.
    Suppose now that $G$ has delayed rank $r$ and that the property holds for rank $r-1$.
    Compute the delayed decomposition of $G$. 
    By \cref{lem:delayed=subst+edge}, $G$ can be obtained from the quotient graphs by doing one step of substitution closure, followed by one step of edge union.
    Thus, it suffices to prove that the quotient graphs have rank at most $7r$.
    Each quotient graph can be obtained from the graphs with edges $R'R', R'L', L'R'$ and $L'R'$ by doing two steps of edge unions followed by the addition of a stable set at the beginning of the order (accounting for the possible removal of the first vertices of type $O$). 
    Thus, it suffices to prove that each of the graphs with edges $R'R', R'L', L'R'$ and $L'R'$ has rank at most $7(r-1) + 4$.
    By definition of the delayed rank, the graph induced by the edges $R'R'$ can be obtained from a refined quotient graph, of delayed rank at most $r-1$, by adding all vertices before the first vertex of type $R$ (which form a stable set), and all vertices after the last vertex of type $R$ (which also form a stable set). Thus, using the induction hypothesis, the graph induced by the edges $R'R'$ has rank at most $7(r-1) + 4$.
    Similarly, the graphs induced by the edges $R'L', L'R'$ and $L'L'$ each have rank at most $7(r-1) + 4$.
    This proves that $G$ has rank at most $7r+2$.
\end{proof}

\subsection{Delayed rank and complete interval minors}

We now prove the key result about graphs of large delayed rank: they contain large complete interval minors. 
The next lemma is the engine of our proof, it allows us to measure the progress we do in building the large complete interval minor when going from delayed rank $r$ to delayed rank $r+1$.

We consider \emph{looped} interval minors: an ordered graph $(H, <)$ (possibly with a loop on each vertex) with vertex set $v_1 < \ldots < v_h$ is an interval minor of an ordered graph $(G, <)$ if there exists a partition of $V(G)$ into intervals $I_1, \ldots, I_h$ such that whenever $v_iv_j \in E(H)$, there is an edge in $G$ between $I_i$ and $I_j$.
A \emph{looped $K_t$} is an ordered clique of size $t$, with a loop on every vertex.
A \emph{left-lazy looped $K_t$} is an ordered clique of size $t$, with a loop on every vertex, except the first one.
A \emph{right-lazy looped $K_t$} is an ordered clique of size $t$, with a loop on every vertex, except the last one.
A \emph{lazy looped $K_t$} is an ordered clique of size $t$, with a loop on every vertex, except the first one and the last one.
See~\cref{figKT} for an illustration of all these graphs.

\begin{figure}[!ht]
	\centering
		\begin{tikzpicture}[every node/.style={circle, draw, minimum size=4mm}, scale=1]

			\foreach \i in {1,2,3,4} {
				\node (A\i) at (\i*2, 0) {
				};
			}
			\foreach \a/\b in {1/2,1/3,1/4,2/3,2/4,3/4} {
				\draw[bend left=20] (A\a) to (A\b);
			}
			\foreach \i in {1,2,3,4} {
				\draw[-] (A\i) to[loop below] (A\i);
			}
			\node[draw=none] at (5, -1.2) {Looped $K_t$};

			\foreach \i in {1,2,3,4} {
				\node (B\i) at (\i*2, -3) {};
			}
			\foreach \a/\b in {1/2,1/3,1/4,2/3,2/4,3/4} {
				\draw[bend left=20] (B\a) to (B\b);
			}
			\foreach \i in {2,3,4} {
				\draw[-] (B\i) to[loop below] (B\i);
			}
			\node[draw=none] at (5, -4.2) {Left-lazy Looped $K_t$};

			\foreach \i in {1,2,3,4} {
				\node (C\i) at (\i*2, -6) {};
			}
			\foreach \a/\b in {1/2,1/3,1/4,2/3,2/4,3/4} {
				\draw[bend left=20] (C\a) to (C\b);
			}
			\foreach \i in {1,2,3} {
				\draw[-] (C\i) to[loop below] (C\i);
			}
			\node[draw=none] at (5, -7.2) {Right-lazy Looped $K_t$};

			\foreach \i in {1,2,3,4} {
				\node (D\i) at (\i*2, -9) {};
			}
			\foreach \a/\b in {1/2,1/3,1/4,2/3,2/4,3/4} {
				\draw[bend left=20] (D\a) to (D\b);
			}
			\foreach \i in {2,3} {
				\draw[-] (D\i) to[loop below] (D\i);
			}
			\node[draw=none] at (5, -10.2) {Lazy Looped $K_t$};

			\draw[very thick, -stealth] (8.5, -0.5) to [out=0, in=0] (8.5, -2.5);
			\draw[very thick, -stealth] (8.5, -0.5) to [out=0, in=0] (8.5, -5.5);
			\draw[very thick, -stealth] (8.5, -3.5) to [out=0, in=0] (8.5, -8.5);
			\draw[very thick, -stealth] (8.5, -6.5) to [out=0, in=0] (8.5, -8.5);
			
			\draw[very thick, -stealth] (1.5, -2.5) to [out=180, in=180] (1.5, -0.5);
			\draw[very thick, -stealth] (1.5, -5.5) to [out=180, in=180] (1.5, -0.5);
			\draw[very thick, -stealth] (1.5, -8.5) to [out=180, in=180] (1.5, -3.5);
			\draw[very thick, -stealth] (1.5, -8.5) to [out=180, in=180] (1.5, -6.5);

			\node[draw=none] at (9, -7.5) {\textbf{\small {Size +1}}};
			\node[draw=none] at (9.9, -6) {\textbf{\small {Size +1}}};
			\node[draw=none] at (9, -1.5) {\textbf{\small {Size +1}}};
			\node[draw=none] at (9.9, -3) {\textbf{\small {Size +1}}};
			
		\end{tikzpicture}
	\caption{The various looped interval minors we consider. The arrows indicate the possible outcomes of \cref{lem:increase-Kt-delayed}.}
	\label{figKT}
\end{figure}
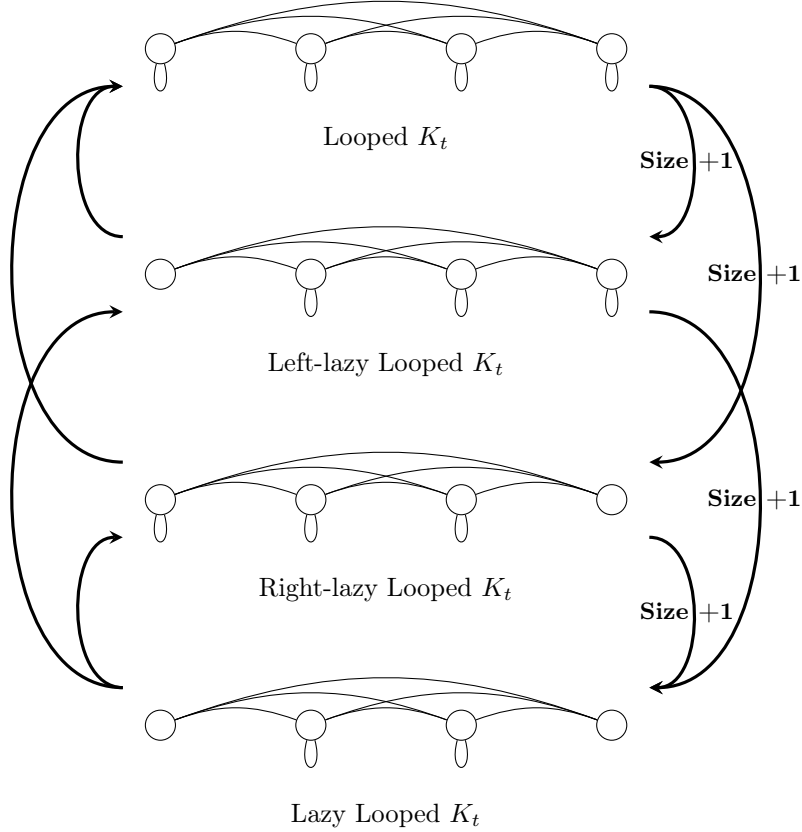

\begin{lemma}\label{lem:increase-Kt-delayed}
Let $r \geq 1$ and let $\mathcal{C}_r$ be the class of ordered graphs with delayed rank $r$. Then, the following assertions hold.
\begin{enumerate}
    \item If every ordered graph in $\mathcal{C}_r$ contains a looped $K_t$ interval minor, then every ordered graph in $\mathcal{C}_{r+1}$ contains a left-lazy or a right-lazy looped $K_{t+1}$ interval minor. \label{item:a}
    \item If every ordered graph in $\mathcal{C}_r$ contains a left-lazy or a right-lazy looped $K_t$ interval minor, then every ordered graph in $\mathcal{C}_{r+1}$ contains either a lazy looped $K_{t+1}$ interval minor or a looped $K_t$ interval minor. \label{item:b}
    \item If every ordered graph in $\mathcal{C}_r$ contains a lazy looped $K_t$ interval minor, then every ordered graph in $\mathcal{C}_{r+1}$ contains a left-lazy or a right-lazy looped $K_t$ interval minor. \label{item:c}
\end{enumerate}
\end{lemma}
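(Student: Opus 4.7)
The proof of all three parts will follow a common template. Given $(G, <) \in \mathcal{C}_{r+1}$, by the definition of delayed rank its distinguishing delayed decomposition contains a node $x$ whose quotient graph $G_x$ admits a refined quotient graph $H$ of delayed rank exactly $r$. The hypothesis of the case being proved provides some looped $K_t$-variant as an interval minor of $H$, given by intervals $I_1 < \ldots < I_t$ of $V(H) \subseteq V(G_x)$; these lift naturally to disjoint intervals $\hat{J}_1 < \ldots < \hat{J}_t$ of $V(G)$ sitting inside $L(x)$ (each $\hat{J}_i$ being the union of $L(v)$ for $v$ in the convex hull of $I_i$ in $V(G_x)$).

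The key tool is the consequence of the labeling of $T$: if a vertex $u$ of $G_x$ has type $R$ (meaning that its parent $y$ in $T$ has a cousin $y'$ with $y < y'$ and $yy' \in E(G_{p(x)})$), then every leaf in $L(u)$ is adjacent in $G$ to every leaf in $L(y')$, and $L(y')$ lies strictly to the right of $L(x)$. Symmetric statements hold for type $L$, and types $O_R, O_L$ inherit access through their predecessor or successor sibling respectively. We say that a vertex of type in $R' = \{R, O_R\}$ has \emph{right-access} and one of type in $L' = \{L, O_L\}$ has \emph{left-access}. Two kinds of moves then become available. The \emph{extend} move creates a new interval $J_0$ immediately left of $L(x)$ (resp.\ $J_{t+1}$ immediately right), which is edge-connected to every $\hat{J}_i$ that possesses left-access (resp.\ right-access). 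The \emph{upgrade} move expands $\hat{J}_1$ to include a vertex strictly left of $L(x)$ (resp.\ expands $\hat{J}_t$ strictly right), thereby creating an internal edge — i.e. a loop — in that interval, provided the relevant extremal interval has the corresponding one-sided access.

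The bulk of the proof will then be a case analysis on the edge type of $H$ (one of $R'R'$, $R'L'$, $L'R'$, $L'L'$) and, within each case, on the loop configuration supplied by the hypothesis. For instance, in part (a) with edge type $R'R'$, every $I_i$ carries an internal $R'R'$-edge, hence contains an $R'$-type vertex granting each $\hat{J}_i$ right-access; an \emph{extend} move on the right then yields a right-lazy looped $K_{t+1}$. In part (b) with $R'R'$ and right-lazy input, $I_t$ has no loop but receives incoming $R'R'$-edges so still contains an $R'$-vertex, and an \emph{upgrade} of $\hat{J}_t$ to the right produces a looped $K_t$. Symmetric $L'L'$ cases, and the mixed cases $R'L'$ and $L'R'$ where each edge contributes both an $R'$- and an $L'$-vertex to the adjacent intervals, are handled by analogous short arguments, toggling between \emph{extend} and \emph{upgrade} on the appropriate side.

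The main obstacle will be the careful handling of configurations in which the extremal intervals $I_1$ and $I_t$ possess access only on one (possibly the wrong) side — most notably in part (c) with mixed edge types $R'L'$ or $L'R'$, where $I_1$ can carry only right-access and $I_t$ only left-access. In such cases, neither standard move directly applies, and one has to exploit the range restriction in the definition of the refined quotient graph (for $R'L'$, all vertices before the first vertex of type $L$ and after the last one are removed) to slightly enlarge $\hat{J}_1$ (or $\hat{J}_t$) so that it absorbs an $L$-vertex lying in $V(G_x) \setminus V(H)$, thereby acquiring the missing left-access. Once this enrichment of the lift is made, the \emph{extend}/\emph{upgrade} toolbox closes every remaining sub-case, delivering the desired looped $K$-variant as an interval minor of $G$.
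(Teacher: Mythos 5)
Your high-level structure matches the paper's: decompose $G$, pick a refined quotient $H \in \mathcal C_r$, lift the interval minor of $H$, do a case analysis on the edge type of $H$, and trade a fresh interval $I_0$ or $I_{t+1}$ outside $L(x)$ against the loop pattern via ``extend'' and ``merge/upgrade'' moves. But there are two genuine gaps in how you propose to supply the access needed for those moves.

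First, a vertex $u$ of $G_x$ of type $O_R$ or $O_L$ does \emph{not} have one-sided access. Its parent $y$ is labelled $O$, i.e.\ isolated in $G_{p(x)}$, so $L(u) \subseteq L(y)$ has no edges (through $G_{p(x)}$) to leaves outside $L(x)$; edges through higher quotient graphs exist but give no control on the side they land. So ``contains an $R'$-type vertex granting right-access'' is false: only a vertex of type $R$ (not $O_R$) grants right-access. The paper closes this gap with a nontrivial observation you omit: a loop in $I_j$ gives two $R'$-vertices with distinct parents, and between those parents there must sit a child of $x$ labelled $R$, all of whose children lie in $I_j$; hence $I_j$ actually contains a genuine $R$-vertex. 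Your ``types $O_R, O_L$ inherit access through their predecessor or successor sibling'' gestures at this, but the inherited $R$-vertex is a \emph{cousin}, not a sibling, and you never establish that it lies inside the same interval $I_j$ --- which is exactly the point that needs a proof.

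Second, your treatment of the boundary intervals $I_1, I_t$ reverses the role of the range restriction. For a refined quotient of type $R'L'$ the definition removes all vertices before the \emph{first} vertex of type $L$ and after the \emph{last} vertex of type $L$; consequently the first and last vertices of $H$ \emph{are} of type $L$, so $I_1$ and $I_t$ automatically contain an $L$-vertex and come with left-access. Your proposal claims the opposite --- that $I_1$ ``carries only right-access'' and that one must enlarge $\hat J_1$ to ``absorb an $L$-vertex lying in $V(G_x)\setminus V(H)$.'' But $V(G_x)\setminus V(H)$ consists precisely of vertices that are \emph{not} of type $L$ (they lie before the first or after the last $L$), so no $L$-vertex can be absorbed from there; the enrichment move you describe cannot supply the missing access. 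The paper's argument for $I_1, I_t$ is the direct observation about the first and last vertex of $H$, and without it your case analysis does not close.
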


\begin{proof}
    Consider an ordered graph $(G,<)\in \mathcal{C}_{r+1}$. Compute the delayed decomposition of $(G, <)$. By definition, there is a refined quotient graph $H$ of $G$ some type ($R'R', R'L', L'R'$ or $L'L'$) which is in $\mathcal{C}_r$.
    We consider several cases depending on the type of $H$.
    \begin{itemize}
        \item If $H$ has type $R'R'$ then $H$ is the graph induced by the edges $R'R'$, to which we remove all vertices before the first vertex of type $R$ and after the last vertex of type $R$.
        Suppose that $H$ contains a lazy looped $K_t$ interval minor, and let $I_1 < \ldots < I_t$ be the intervals corresponding to this interval minor. Recall that $(I_1, \ldots, I_t)$ is a partition of $V(H)$.
        We argue that each interval $I_j$ contains a vertex of type $R$.
        
        Since the first and the last vertex of $H$ are vertices of type $R$, this is true for $I_1$ and $I_t$.
        Consider now any $I_j$ with $1 < j < t$.
        Since the $K_t$ interval minor is looped, there is an edge between two vertices of $I_j$, which means that there are two vertices in $I_j$ which are of type $R'$ but don't have the same parent (since siblings form an independent set in the quotient graphs). Therefore, there is a vertex of type $R$ in $I_j$.
        Let $I_{t+1} = \{x \in V(G) : V(H) < x\}$. If $y_j \in I_j$ is a vertex of type $R$, it follows from the definition of the type $R$ that $y_j$ has a neighbor in $I_{t+1}$.
        
        Using $I_{t+1}$, we can then extend any looped $K_t$ to a right lazy looped $K_{t+1}$, any left-lazy looped $K_t$ to a lazy looped $K_{t+1}$, any right-lazy looped $K_{t}$ to a looped $K_t$ and any lazy looped $K_{t}$ to a right-lazy looped $K_t$.
        
        \item If $H$ has type $R'L'$ then $H$ is the graph induced by the edges $R'L'$, to which we remove all vertices before the first vertex of type $L$ and after the last vertex of type $L$.
        Suppose that $H$ contains a lazy looped $K_t$ interval minor, and let $I_1 < \ldots < I_t$ be the intervals corresponding to this interval minor. Recall that $(I_1, \ldots, I_t)$ is a partition of $V(H)$.
        We argue that each interval $I_j$ contains a vertex of type $L$.
        
        Since the first and the last vertex of $H$ are vertices of type $L$, this is true for $I_1$ and $I_t$.
        Consider now any $I_j$ with $1 < j < t$.
        Since the $K_t$ interval minor is looped, there is an edge between two vertices of $I_j$, which means that there exist $u < v \in I_j$ such that the type of $u$ is in $R' = \{R, O_R\}$, and the type of $v$ is in $L' = \{L, O_L\}$. 
        Thus, there exist consecutive vertices $u < v \in I_j$ such that the type of $u$ is in $R'$ and the type of $v$ is in $L'$. 
        This implies that the type of $v$ is $L$. 
        Therefore, there is a vertex with type $L$ in $I_j$.
        Let $I_{0} = \{x \in V(G) : x < V(H)\}$. If $y_j \in I_j$ is a vertex of type $L$, it follows from the definition of the type $L$ that $y_j$ has a neighbor in $I_{0}$.
        
        Using $I_{0}$, we can extend any looped $K_t$ to a left lazy looped $K_{t+1}$, any left-lazy looped $K_t$ to a looped $K_{t}$, any right-lazy looped $K_{t}$ to a lazy looped $K_{t+1}$ and any lazy looped $K_{t}$ to a left-lazy looped $K_t$.
        
        \item The case $L'R'$ is similar to the case $R'L'$ (except that we prove that each $I_j$ contains a vertex of type $R$), and the case $L'L'$ is similar to the case $R'R'$ (except that we prove that each $I_j$ contains a vertex of type $L$).
    \end{itemize}
    
    The result then follows immediately since we can extend any lazy looped $K_t$ interval minor of $H$ as desired, and since $H \in \mathcal{C}_r$.
\end{proof}

We easily deduce the main result of this section from \cref{lem:increase-Kt-delayed}.

\begin{theorem}\label{thm:big-cano-rk-big-kt}
Every ordered graph with delayed rank at least $3r-2$ contains a $K_r$ interval minor.
\end{theorem}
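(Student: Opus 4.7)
The plan is to prove by induction on $r \geq 1$ the stronger statement that every ordered graph of delayed rank at least $3r-2$ contains a \emph{looped} $K_r$ as an interval minor. This implies the theorem, since ignoring the loops yields an ordinary $K_r$ interval minor. For the base case $r = 1$, a graph of delayed rank at least $1$ is not monotone bipartite and in particular has at least one edge, which immediately provides a looped $K_1$ interval minor (via the single interval $V(G)$).

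For the inductive step, assume the claim at rank $r$ and consider a graph $G$ of delayed rank at least $3r+1$. The idea is to apply the three parts of \cref{lem:increase-Kt-delayed} in succession, each time climbing one level of delayed decomposition. By part (a) together with the inductive hypothesis, every graph of delayed rank at least $3r-1$ contains a left-lazy or right-lazy looped $K_{r+1}$; by part (b), every graph of delayed rank at least $3r$ contains a lazy looped $K_{r+2}$ or a looped $K_{r+1}$. For the last rank increment, consider a refined quotient $H$ of $G$ of rank at least $3r$, which contains one of these two patterns. If $H$ has a lazy looped $K_{r+2}$, the lifting procedure in the proof of part (c) produces a left-lazy or right-lazy looped $K_{r+2}$ in $G$; if instead $H$ has a looped $K_{r+1}$, the lifting in the proof of part (a) also produces a left-lazy or right-lazy looped $K_{r+2}$ in $G$. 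In either case, $G$ contains a left-lazy or right-lazy looped $K_{r+2}$, and keeping only the $r+1$ intervals that carry a loop extracts a looped $K_{r+1}$ interval minor, completing the induction.

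The main subtlety is that at rank $3r$ the pattern is only guaranteed as a disjunction (lazy looped $K_{r+2}$ or looped $K_{r+1}$), while the hypotheses of \cref{lem:increase-Kt-delayed} are uniform. I resolve this in the final rank step by handling the two cases graph-by-graph, exploiting the fact that parts (a) and (c) both yield the same output — a left-lazy or right-lazy looped $K_{r+2}$ — so that the disjunction collapses. This is what makes three successive applications of the lemma advance the clique size by one, matching the $3r-2$ bound.
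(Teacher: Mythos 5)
Your proof is correct and takes the same route as the paper: an induction on $r$ that threads the three parts of \cref{lem:increase-Kt-delayed} through three consecutive rank increments, starting from a looped $K_1$ at delayed rank $1$. Your explicit case split at the third increment (handling the disjunction ``lazy looped $K_{r+2}$ or looped $K_{r+1}$'' graph-by-graph, observing that the liftings in parts (a) and (c) both produce a left-/right-lazy looped $K_{r+2}$) is a slightly more careful articulation of a point the paper glosses over, but the argument is identical in substance.
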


\begin{proof}
    We prove by induction on $r$ that every ordered graph with delayed rank at least $3r-2$ has a looped $K_{r}$ interval minor.
    For $r = 1$, if $G$ has delayed rank at least $1$ then $G$ contains at least one edge so $G$ has a looped $K_1$ interval minor.
    Suppose that the property holds for $3r-2$.
    By \cref{lem:increase-Kt-delayed}, every graph of delayed rank at least $3r-1$ contains a left-lazy or a right-lazy looped $K_{t+1}$ interval minor.
    Applying \cref{lem:increase-Kt-delayed} again, every graph of delayed rank at least $3r$ contains either a lazy looped $K_{t+2}$ interval minor or a looped $K_{t+1}$ interval minor.
    Using \cref{lem:increase-Kt-delayed} once more, every graph of delayed rank at least $3r+1$ contains either a left-lazy or a right-lazy looped $K_{t+2}$ interval minor, which itself contains a looped $K_{t+1}$ interval minor.
\end{proof}

Now, \cref{thm:boundedrank} follows immediately from combining \cref{thm:big-cano-rk-big-kt} with \cref{lem:bddrk-bdrk}.
\section{Approximating the complete interval minor number} \label{sec:algo}

In this section, we present an algorithm to approximate the size of a largest complete interval minor in an ordered graph $G$. 
In \cref{subsec:algo-tools}, we give some implementation details on parts of the algorithm. 
In \cref{subsec:ramsey}, we give a Ramsey-type theorem in the context of interval minors, which is crucial for bounding the approximation factor of the algorithm.
In \cref{subsec:the-algo}, we describe and analyse the algorithm.
Finally, in \cref{subsec:algo-K3}, we give a $O(n)$-time algorithm to decide whether an $n$-vertex ordered graph contains a $K_3$ interval minor.

\subsection{More algorithmic tools}\label{subsec:algo-tools}

By \cref{thm:compute-delayed}, the delayed decomposition of an explicit ordered graph with $n$ vertices and $m$ edges can be computed in time $O(n+m)$. From there, it is simple to compute the refined quotients in the same running time.

\begin{lemma}\label{lem:compute-refined-quotients}
There is an algorithm which, given as input an explicit ordered graph $G$ with $n$ vertices and $m$ edges, computes all the refined quotients of $G$ in time $O(n + m)$.
\end{lemma}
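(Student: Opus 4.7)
The plan is to invoke \cref{thm:compute-delayed} to obtain, in time $O(n+m)$, the distinguishing delayed decomposition $(T, <, \{G_x\}_{x\in V(T)})$ of $(G,<)$, with every quotient graph stored explicitly. By \cref{rmk:nb-quotients}, the total number of edges summed over all quotient graphs is at most $m$, and each node of $T$ appears in at most one quotient graph $G_{p^2(z)}$, so the total vertex count summed over all quotient graphs is at most $|V(T)| \le 3n-1$. The task therefore reduces to processing each $G_x$ in time $O(|V(G_x)| + |E(G_x)|)$ and emitting its refined quotients in the explicit format demanded by later recursive use; summing then yields $O(n+m)$.

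For each quotient $G_x$, I would first test whether it is monotone bipartite, which reduces to checking $\max\{u : uv \in E(G_x),\, u<v\} < \min\{v : uv \in E(G_x),\, u<v\}$ and costs $O(|E(G_x)|)$. If the test succeeds, $G_x$ is itself a refined quotient and nothing further is needed. Otherwise $x$ has at least three children and all four refined quotients must be produced.

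I would then label each relevant node $z$ of $T$ (those with a grandparent) with one of $R, L, O$: label $z$ with $R$ if it has any right neighbor in $G_{p^2(z)}$, otherwise $L$ if it has any left neighbor there, otherwise $O$. This is read off $z$'s adjacency list in the quotient graph of its grandparent, in time proportional to its degree. The refinement $O \to O_R/O_L$ then consults the sibling predecessor's label, which by \cref{lem:no-consec-O} is always $R$ or $L$ in the cases we need. With the types in hand, I scan $E(G_x)$ once to bucket each edge into one of $R'R', R'L', L'R', L'L'$ according to the endpoints' types; then, for each bucket, I locate the first and last vertex of the relevant type ($R$ or $L$) by a single scan of $V(G_x)$, truncate the resulting subgraph to the corresponding vertex interval, and if the first child of $x$ is labelled $O$ additionally drop its children (an initial independent block). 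Writing each refined quotient out as an explicit ordered graph, with a sorted vertex list relabelled onto $[k]$, completes the processing of $G_x$.

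The main obstacle is pure bookkeeping: the four refined quotients of a single $G_x$ share their ambient vertex set, and care is needed to avoid duplicated work while still producing each graph in the explicit representation required downstream. Each subroutine is a single linear pass over either $V(G_x)$ or $E(G_x)$; summing via \cref{rmk:nb-quotients} yields the claimed $O(n+m)$ total.
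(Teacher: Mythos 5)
Your proposal is essentially correct and follows the same approach as the paper: compute the distinguishing delayed decomposition via \cref{thm:compute-delayed}, label nodes in $O(n+m)$ by reading adjacency lists in the quotient graph of their grandparent, partition edges into the four type classes, truncate vertex intervals, and charge all work against the totals in \cref{rmk:nb-quotients}. One small omission: the paper's algorithm first checks whether $G$ itself is monotone bipartite (a single $O(m)$ scan) and if so returns the empty set, since by the definition of delayed rank a monotone bipartite graph has no refined quotient graphs at all; your proposal jumps directly to the delayed decomposition, so you would need to add this base-case test, but the remainder of your argument is the same as the paper's.
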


\begin{proof}
First, we start by checking whether $G$ is monotone bipartite. 
To do so, we iterate over all edges to find the largest left endpoint of an edge, call it $\ell$, and the smallest right endpoint of an edge, call it $r$. If $\ell < r$ then $G$ is monotone bipartite and has no refined quotients, so we return the empty set. Otherwise, $G$ is not monotone bipartite.

In that case, we compute the delayed decomposition $(T, <, \{G_x\}_{x \in V(T)})$ of $G$ in time $O(n+m)$ using \cref{thm:compute-delayed}, with all the $G_x$ stored explicitly. 
Then, we compute the label of every node $x$ ($L, R$ or $O$) by looking at its neighborhood in the graph $G_{p^2(x)}$.
Thus, in time $O(n+m)$, we can compute the label of all nodes $x \in V(T)$.
From this, we can compute the type ($L, R, O_L, O_R$ or $O$) of every node of $T$, in time $O(n+m)$ overall.
Then, for each quotient graph $G_x$, we check whether it is monotone bipartite. If so, we add it to the set of refined quotient graphs. Otherwise, we continue. Using the same method as above, this can be done in time linear in the size of $G_x$, so in time $O(n+m)$ overall.

We then remove from each $G_x$ the first vertex as long as it is of type $O$.
Then, for every edge of every $G_x$, we can compute its type by looking at the types of its endpoints. 
Once this is done, it is simple to compute the graphs induced by each of the four edge types. 
Finally, removing all vertices up to the first vertex of some type and after the last vertex of some type can also be done in time $O(n + m)$ overall. Note that the refined quotients are all stored explicitly.
\end{proof}

The \emph{total size} of a delayed structured tree $(T, <, \{G_x\}_{x \in V(T)})$ is ${|V(T)| + \sum_{x \in V(T)}|E(G_x)|}$.

\begin{lemma}\label{lem:algo-heavy-leaf}
There is an algorithm which, given as input a delayed structured tree ${(T, <, \{G_x\}_{x \in V(T)})}$ of total size $s$, computes in time $O(s)$ whether there is a $h$-heavy leaf in $T$.
\end{lemma}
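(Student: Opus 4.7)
The plan is to reduce the problem to a single DFS traversal, after an $O(s)$ preprocessing that marks which nodes are non-isolated in their grandparent's quotient graph. Recall that $y \in L(T)$ is $h$-heavy iff the number of ancestors $x$ of $y$ with $p^2(x)$ defined and $x$ non-isolated in $G_{p^2(x)}$ is at least $h$, so it suffices to compute this count for every leaf and take the maximum.

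First I would build, for every node $x \in V(T)$ with a grandparent, a boolean flag $\mathrm{nonIso}(x)$ indicating whether $x$ has at least one incident edge in $G_{p^2(x)}$. To do this, initialize all flags to false, then iterate over every node $z \in V(T)$ and over every edge $uv \in E(G_z)$, setting $\mathrm{nonIso}(u) = \mathrm{nonIso}(v) = \mathtt{true}$. The total work is $O\!\left(|V(T)| + \sum_{z \in V(T)} |E(G_z)|\right) = O(s)$. This step assumes that we can, in constant time per edge, locate the endpoints $u,v$ as nodes of $T$; since the quotient graph $G_z$ is stored on the grandchildren of $z$, its vertices are already identified with tree nodes, so no extra indirection is needed.

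Next I would run a DFS from the root of $T$, maintaining an integer $c(x)$ equal to the number of ancestors $a$ of $x$ (including $x$ itself) such that $\mathrm{nonIso}(a) = \mathtt{true}$. Concretely, set $c(r) = [\mathrm{nonIso}(r)]$ for the root and, when descending from $p(x)$ to $x$, set $c(x) = c(p(x)) + [\mathrm{nonIso}(x)]$. Each update is $O(1)$, for a total of $O(|V(T)|) = O(s)$. Along the way, whenever we reach a leaf $y$, we compare $c(y)$ to $h$ and return Yes the first time $c(y) \geq h$; if the DFS terminates without such a leaf, return No.

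Correctness is immediate from the definitions: $c(y)$ counts exactly the ancestors of $y$ (leaf included, but a leaf is its own ancestor only in the trivial sense and contributes only if it is non-isolated in its grandparent's quotient, which matches the definition of $h$-heavy). The running time is $O(s)$ for preprocessing plus $O(|V(T)|)$ for the traversal, hence $O(s)$ overall. There is no real obstacle here: the only subtlety is making sure the tree representation lets us enumerate edges of the $G_z$ with their endpoints identified as tree nodes, which is already guaranteed by the way delayed structured trees are stored (as produced by \cref{thm:compute-delayed} and its refinements in \cref{lem:compute-refined-quotients}).
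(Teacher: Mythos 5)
Your proposal is correct and follows essentially the same approach as the paper: mark in $O(s)$ time which nodes are non-isolated in their grandparent's quotient graph, then do a single top-down traversal accumulating a per-node count of non-isolated ancestors, and compare the count at each leaf to $h$.
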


\begin{proof}
By iterating over all the edges of all the $G_x$, we can mark in time $O(s)$ all the nodes $x \in V(T)$ which are the endpoint of an edge (in which case it will be in $G_{p^2(x)}$).
Then, a simple top-down dynamic programming algorithm can compute for every node $x \in V(T)$ the number $a(x)$ of ancestors $x'$ of $x$ which are not isolated in $G_{p^2(x')}$, in total time $O(s)$.
Finally, by iterating over all the leaves of $T$, it is easy to compute the maximum value of $a(x)$ over all leaves $x$ of $T$, and return Yes if this maximum is at least $h$, and No otherwise.
\end{proof}

\subsection{A Ramsey-type result for complete interval minors}\label{subsec:ramsey}

Before we move on to the algorithm, we present a Ramsey-type result for complete interval minors, which will be crucial for the bound on the ``approximation factor'' of our algorithm. Ramsey's theorem states that in every red/blue coloring of the edges of $K_n$, there is a monochromatic clique of size $\log(n)/2$, and this bound is essentially tight up to constant factors.
We consider the analog question in the context of interval minors for ordered graphs.
A red/blue coloring of the edges of the ordered graph $K_n$ contains a \emph{red complete interval minor} of size $t$ if there exists a partition of $V(G)$ into $t$ intervals such that there is a red edge between any two of these intervals. We define similarly a \emph{blue complete interval minor} of size $t$, and a \emph{monochromatic complete interval minor} of size $t$ is either a red complete interval minor of size $t$ or a blue complete interval minor of size $t$.

We now prove that the bounds are much better for monochromatic complete interval minors than for monochromatic cliques.

\begin{lemma}\label{lem:improved-ramsey}
For every red/blue coloring of the edges of the ordered graph $K_n$, there is a monochromatic complete interval minor of size $2^{\sqrt{\log(n)}-1}$.
\end{lemma}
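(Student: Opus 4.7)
The plan is to prove the bound by establishing the recursion $f(2t) \leq 2t^{2} \cdot f(t)$, where $f(t)$ denotes the least integer $N$ such that every $2$-coloring of the edges of the ordered $K_N$ admits a monochromatic complete interval minor of size $t$. Iterating this recursion from $f(1)=1$ yields $\log f(2^{k}) \leq \sum_{j=0}^{k-1}(1+2j) = k^{2}$, hence $f(t) \leq 2^{(\log t + 1)^{2}}$ for all $t\geq 1$. Inverting this estimate shows that any $2$-coloring of the ordered $K_n$ contains a monochromatic complete interval minor of size at least $2^{\sqrt{\log n} - 1}$, which is the statement.

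To prove the recursion I would take a $2$-coloured ordered $K_N$ with $N = 2t^{2}\cdot f(t)$ and partition its vertex set into $2t^{2}$ consecutive intervals of size $f(t)$. By the induction hypothesis each interval contains a monochromatic complete interval minor of size $t$, and by pigeonhole at least $t^{2}$ of them realize a minor of the same colour, say red; denote them $I_{1} < \ldots < I_{t^{2}}$, with sub-intervals $J_{i,1} < \ldots < J_{i,t}$ inside each $I_{i}$ witnessing the red minor. Call a pair $(i,j)$ \emph{red-good} if for every $(\alpha,\beta) \in [t]^{2}$ there is a red edge between $J_{i,\alpha}$ and $J_{j,\beta}$. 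If some pair $(i,j)$ is red-good, then the concatenation $J_{i,1} < \ldots < J_{i,t} < J_{j,1} < \ldots < J_{j,t}$ is a red complete interval minor of size $2t$, completing the recursion step.

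If no pair is red-good, then every pair $(i,j)$ admits a \emph{blue witness} $(\alpha_{ij},\beta_{ij}) \in [t]^{2}$ such that all edges between $J_{i,\alpha_{ij}}$ and $J_{j,\beta_{ij}}$ are blue. A pigeonhole on $\binom{t^{2}}{2}$ pairs against $t^{2}$ possible labels then isolates a single label $(\alpha,\beta)$ attained by at least $t^{2}/2$ pairs, giving a dense graph $H$ on the $t^{2}$ red intervals whose edges correspond to all-blue bipartite ``slots'' of a uniform shape $(\alpha,\beta)$. The main obstacle of the proof is to convert this dense blue witness structure into a blue complete interval minor of size $2t$: my plan is to apply a further pigeonhole or Ramsey-style extraction argument on $H$, possibly after enlarging the partition from $2t^{2}$ intervals to a polynomially larger number of intervals in $t$. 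Since any recursion of the form $f(2t) \leq t^{O(1)} \cdot f(t)$ still unfolds to $\log f(t) = O((\log t)^{2})$, the scheme is robust to such parameter inflation, so the delicate technical point is purely to make this blue-extraction step precise.
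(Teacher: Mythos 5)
Your approach is genuinely different from the paper's. The paper proves the bound directly: assuming there is no red complete interval minor of size $K := 2^{\sqrt{\log n}-1}$, it repeatedly takes an interval, cuts it into $K$ pieces, observes that (because there is no red $K_K$ interval minor) two pieces must have only blue edges between them, and iterates this splitting $\sqrt{\log n}-1$ times to produce $K$ pairwise-blue intervals. You instead set up a doubling recursion $f(2t)\le 2t^{2}f(t)$ on the Ramsey-type threshold $f$; the arithmetic you do to unfold it is correct and does yield $2^{\sqrt{\log n}-1}$.

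However, your proof of the recursion has a genuine gap in the blue case, and the route you sketch to close it would fail. After the pigeonhole on $\binom{t^2}{2}$ pairs against $t^2$ labels, the graph $H$ you obtain on the $t^2$ red blocks has only about $t^2/2$ edges, i.e. average degree about $1$. No further pigeonhole, Tur\'an, or Ramsey extraction on such a sparse graph can produce a clique of size $2t$; and inflating the number of blocks from $2t^2$ to $t^c$ does not help, because the density of $H$ stays at roughly a $1/t^2$ fraction of the complete graph while Tur\'an would require all but a $1/(2t-1)$ fraction. So the ``uniform shape'' extraction you propose is not just technically delicate but actually unavailable.

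The good news is that the gap can be filled much more simply than you expected, and without any uniform label. Once no pair $(i,j)$ is red-good, every pair $(i,j)$ has \emph{some} blue witness $(\alpha_{ij},\beta_{ij})$, and that already gives a blue edge between $I_i$ and $I_j$ (it lies between $J_{i,\alpha_{ij}}\subseteq I_i$ and $J_{j,\beta_{ij}}\subseteq I_j$). Hence the $t^2$ same-coloured blocks $I_1<\dots<I_{t^2}$, extended to a partition of $[N]$, already form a blue complete interval minor of size $t^2\ge 2t$ for $t\ge 2$. There is no need to reconcile the labels across pairs. With this correction, your recursion $f(2t)\le 2t^{2}f(t)$ is established for $t\ge 2$ (and checked directly for $t=1$), and your computation then gives the stated bound.
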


\begin{proof}
    Suppose that there does not exist a red complete interval minor of size $2^{\sqrt{\log(n)}-1}$.
    We prove that for every $i \leq \sqrt{\log(n)}-1$, we can find $2^i$ intervals, each of size at least $n/2^{i\sqrt{\log(n)}}$, with only blue edges between any two of these intervals.
    The property is trivial for $i = 0$.
    Suppose that the property holds for some $i < \sqrt{\log(n)}-1$.
    Then, there exist $2^i$ intervals, each of size at least $n/2^{i\sqrt{\log(n)}}$, with only blue edges between any two of them.
    Consider one such interval $I$, and cut it into $2^{\sqrt{\log(n)}-1}$ subintervals, each of size at least $\lfloor |I|/2^{\sqrt{\log(n)}-1}\rfloor$.
    Then, each subinterval has size at least ${\lfloor|I|/2^{\sqrt{\log(n)}-1}\rfloor \geq n/2^{(i+1)\sqrt{\log(n)}-1} - 1 \geq n/2^{(i+1)\sqrt{\log(n)}}}$.
    Indeed: \begin{align*}
        \frac{n}{2^{(i+1)\sqrt{\log(n)}-1}} - 1 \geq \frac{n}{2^{(i+1)\sqrt{\log(n)}}} &\iff \frac{2n-2^{(i+1)\sqrt{\log(n)}}}{2^{(i+1)\sqrt{\log(n)}}} \geq \frac{n}{2^{(i+1)\sqrt{\log(n)}}} \\
        &\iff 2n-2^{(i+1)\sqrt{\log(n)}} \geq n \\
        &\iff n \geq 2^{(i+1)\sqrt{\log(n)}} \\
        &\iff \log(n) \geq (i+1)\sqrt{\log(n)} \\
        &\iff i \leq \sqrt{\log(n)}-1.
    \end{align*}
    Since there does not exist a red complete interval minor of size $2^{\sqrt{\log(n)}-1}$, there are two subintervals with no red edge between them, hence only blue edges between them.
    Doing this in each of the $2^i$ intervals, we find $2^{i+1}$ subintervals, each of size at least $n/2^{(i+1)\sqrt{\log(n)}}$, with only blue edges between any two of them.
    This result for $i = \sqrt{\log(n)}-1$ proves the existence of a blue complete interval minor of size $2^{\sqrt{\log(n)}-1}$.
\end{proof}

The next result shows that the previous bound is almost sharp.

\begin{lemma}
    For $n$ large enough, there exists a red/blue coloring of the edges of the ordered graph $K_n$ for which the largest monochromatic complete interval minor has size $2^{2 \cdot \sqrt{\log(n) \cdot \log\log(n)}}$.
\end{lemma}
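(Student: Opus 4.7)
The plan is to present an explicit recursive construction and control the maximum monochromatic interval minor through a clean recurrence. I would first fix $b := \lceil 2^{\sqrt{\log n \cdot \log \log n}}\rceil$ and invoke the usual probabilistic Ramsey bound on $K_b$ to get a $2$-coloring $c_b \colon \binom{[b]}{2} \to \{R, B\}$ in which every monochromatic clique has size at most $t_b := \lceil 2\log_2 b \rceil$. I would then build a $2$-coloring $\chi_n$ of $K_n$ recursively: partition $[n]$ into $b$ consecutive blocks $B_1 < \cdots < B_b$ of size $\lceil n/b \rceil$; color every edge $uv$ with $u \in B_i$, $v \in B_j$, $i\neq j$, by $c_b(i,j)$; and color within each $B_i$ using $\chi_{\lceil n/b \rceil}$. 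Any coloring serves as the base case for $n \leq b$.

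The core of the argument is the recurrence $f(n) \leq (t_b - 1)\,f(\lceil n/b \rceil) + (b - 1)$, where $f(n)$ denotes the largest monochromatic interval minor of $\chi_n$. Starting from any red interval minor $I_1 < \cdots < I_T$ (the blue case is symmetric), I would call $I_j$ \emph{straddling} if it meets at least two blocks, and first check that straddling intervals number at most $b - 1$, since each contains at least one of the $b - 1$ block boundaries and these boundaries are pairwise disjoint. Next, for two non-straddling intervals $I_j \subseteq B_a$ and $I_{j'} \subseteq B_{a'}$ with $a \neq a'$, any red edge between them is a between-block edge of color $c_b(a, a')$, forcing $c_b(a, a') = R$; hence the blocks containing some non-straddling interval form a red clique in $c_b$, of size at most $t_b - 1$. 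Finally, inside each such block the non-straddling intervals sit as consecutive sub-intervals, are pairwise red-connected by within-block edges, and therefore form a red interval minor in $\chi_{\lceil n/b \rceil}$ whose size is at most $f(\lceil n/b \rceil)$.

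Unrolling the recurrence over $d = \lceil \log_b n \rceil$ levels yields $f(n) = O(b \cdot t_b^{\,d})$. Writing $L = \log_2 n$ and $x = \log_2 b = \sqrt{L \log_2 L}$, one has $d = \sqrt{L/\log_2 L}$ and $\log_2 t_b \leq \tfrac{1}{2} \log_2 L + O(1)$, which yields $\log_2 f(n) \leq x + d \log_2 t_b \leq \tfrac{3}{2}\sqrt{L \log_2 L}\,(1+o(1))$, safely below $2\sqrt{\log n \cdot \log \log n}$ once $n$ is large enough. The main obstacle is securing the recurrence, and specifically the two structural claims about straddling intervals and the Ramsey-clique of non-straddling blocks; once these are in hand, the choice $b = 2^{\sqrt{\log n \log \log n}}$ is dictated by optimizing the unrolling and the rest is a routine computation.
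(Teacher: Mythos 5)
Your proposal is correct and follows essentially the same route as the paper: both build the coloring by iterated substitution of a probabilistically Ramsey-colored $K_b$ (the paper writes this bottom-up as $K_q \to K_{q^2} \to \cdots \to K_{q^k}$, you write it top-down as blocks plus recursion, but it is the same construction), and both analyze it via the same recurrence obtained by splitting intervals into those that straddle a block boundary (at most $b-1$, or $q$ in the paper) and those confined to a single block (whose blocks form a monochromatic clique in $c_b$, hence at most $O(\log b)$ of them, each contributing at most $f$ of the next level). The small differences — your tighter Ramsey constant giving $\tfrac{3}{2}(1+o(1))$ in place of the paper's $2$ in the exponent — do not change the argument.
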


\begin{proof}
    Set $q = 2^{\sqrt{\log(n) \cdot \log\log(n)}}$. 
    Consider a red/blue edge coloring of the edges of the ordered graph $K_q$ with no monochromatic clique or independent set of size $3\log(q)$ (a random coloring satisfies this property with high probability if $q$ is large enough).
    Then, consider the red/blue coloring of the edges of the ordered graph $K_{q^2}$ obtained by substituting every vertex of $K_q$ by a copy of the ordered clique $K_q$ with the previous coloring (the order on the vertices of $K_{q^2}$ can then be seen as the lexicographic order on the vertices of $K_q$).
    Repeat this process to obtain a red/blue coloring of the edges of the ordered graphs $K_{q^3}, K_{q^4}$, and so on until $K_{q^k} = K_n$.
    Note that $$q^k = n \iff k \cdot \sqrt{\log(n) \cdot \log\log(n)} = \log(n) \iff k = \sqrt{\log(n) / \log\log(n)}.$$
    
    For every $i \in [k]$, let $f(i)$ be the size of a largest monochromatic complete interval minor in $K_{q^i}$.
    First, observe that $f(1) \leq q$.
    Then, consider $i > 1$.
    Denote by $v_1, \ldots, v_q$ the vertices of $K_q$. Observe that $K_{q^i}$ can be obtained from $K_q$ by substituting each vertex by a copy of $K_{q^{i-1}}$.
    Let $\mathcal{I}$ be the set of intervals in a monochromatic complete interval minor of $K_{q^{i}}$ of size $f(i)$.
    Let $v_{i_1}, \ldots, v_{i_{\ell}}$ be the set of vertices $v$ of $K_q$ such that some interval of $\mathcal{I}$ is entirely contained into the copy of $K_{q^{i-1}}$ which was substituted for $v$.
    Then, $v_{i_1}, \ldots, v_{i_{\ell}}$ induce a monochromatic clique in the original $K_q$, so $\ell \leq 3\log(q)$.
    For each of them, the restriction of $\mathcal{I}$ to the intervals entirely inside the corresponding copy of $K_{q^{i-1}}$ contains at most $f(i-1)$ intervals.
    Furthermore, for every vertex $v$ of $K_q$, there is at most one interval of $\mathcal{I}$ which starts in the copy of $K_{q^{i-1}}$ which was substituted for $v$, and which does not end in that copy.
    Thus, there are at most $q$ intervals of $\mathcal{I}$ which are not entirely contained inside a copy of $K_{q^{i-1}}$.
    Overall, this yields $f(i) = |\mathcal{I}| \leq 3\log(q) \cdot f(i-1) + q$.
    A straightforward induction then yields $f(i) \leq q \cdot \sum_{j=0}^{i-1}(3\log(q))^j$, which in turn implies $f(i) \leq q \cdot (3\log(q))^i$.
    
    For $i=k=\sqrt{\log(n) / \log\log(n)}$, using that $n$ is large, we get that the largest monochromatic complete interval minor in this red/blue coloring of the edges of $K_n$ has size at most 
    \begin{align*}
        f(k) &\leq q \cdot (3\log(q))^k \\
            &= 2^{\sqrt{\log(n) \cdot \log\log(n)}} \cdot \left(3\sqrt{\log(n) \cdot \log\log(n)}\right)^{\sqrt{\log(n) / \log\log(n)}} \\
            &\leq 2^{\sqrt{\log(n) \cdot \log\log(n)}} \cdot \log(n)^{\sqrt{\log(n) / \log\log(n)}} \\
            &= 2^{\sqrt{\log(n) \cdot \log\log(n)}} \cdot 2^{\log\log(n) \cdot \sqrt{\log(n) / \log\log(n)}} \\
            &= 2^{2 \cdot \sqrt{\log(n) \cdot \log\log(n)}}. \qedhere
    \end{align*}
\end{proof}

It would be interesting to determine more precisely the largest function $f(n)$ such that every every red/blue coloring of the edges of the ordered graph $K_n$ contains a monochromatic complete interval minor of size $f(n)$. The results of this section show that ${2^{\sqrt{\log(n)}-1} \leq f(n) \leq 2^{2 \cdot \sqrt{\log(n) \cdot \log\log(n)}}}$.

\subsection{The algorithm} \label{subsec:the-algo}

We now move to the description of the algorithm.
We will need the following technical lemma to bound the ``approximation factor'' of our algorithm. Its proof is deferred to \cref{app:proof_technical}.

\begin{restatable}{lemma}{technical} \label{lem:technical}
For every $t \geq 1$, there exists a function $h_t : \{0, 1, \ldots, 3t-2\} \to \mathbb{R}^+$, such that the following holds: 
\begin{enumerate}
    \item $f: t \mapsto h_t(0)$ satisfies $f(t) = 2^{2^{2^{O(t)}}}$, 
    \item For every $r \in [3t-2]$, $h_t(r) \leq 2^{\sqrt{\sqrt{\log\left(\left(h_t(r-1)/2\right)^{1/(2t-1)}-4\right)}-1}-1}-4$,
    \item For every $r \in \{0, \ldots, 3t-2\}$, $h_t(r) \geq 4$.
\end{enumerate}
\end{restatable}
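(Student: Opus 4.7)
The plan is to define $h_t$ by a \emph{backward} recursion on $r$, starting from $h_t(3t-2) := 4$ (the minimum allowed by condition~(3)) and, for $r$ running from $3t-2$ down to $1$, choosing $h_t(r-1)$ to be the smallest real number forcing the inequality of condition~(2) to hold with equality at index~$r$. The first concrete step is to invert the recursive bound: writing $\phi_t(x) := 2^{\sqrt{\sqrt{\log((x/2)^{1/(2t-1)}-4)}-1}-1}-4$, solving the equation $y = \phi_t(x)$ explicitly for $x$ gives
\[
h_t(r-1) \;=\; 2 \cdot \Bigl(2^{\bigl((\log_2(h_t(r)+4)+1)^2+1\bigr)^{2}} + 4\Bigr)^{2t-1}.
\]
Since $\phi_t$ is strictly increasing on its natural domain and $\phi_t^{-1}(4) \gg 4$, a trivial induction shows that the sequence is well-defined and satisfies $h_t(r-1) > h_t(r) \geq 4$, so conditions~(2) and~(3) are automatic from the construction.

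The bulk of the work is to verify condition~(1). I would introduce the substitution $a_r := \log_2(h_t(r)+4)$ and observe that, for $a_r$ larger than a fixed absolute threshold, the inverted recursion above translates into the clean asymptotic inequality
\[
a_{r-1} \;\leq\; 1 + (2t-1)\bigl((a_r+1)^2+1\bigr)^{2} \;\leq\; C\, t\, a_r^{4}
\]
for some absolute constant $C$. Iterating this $3t-2$ times from the base $a_{3t-2} = \log_2 8 = 3$ unfolds a tower of fourth powers: a direct induction gives $a_0 \leq (Ct)^{(4^{3t-2}-1)/3} \cdot a_{3t-2}^{4^{3t-2}}$, whence
\[
\log_2 a_0 \;=\; O\bigl(4^{3t-2}\, \log t\bigr) \;=\; 2^{O(t)},
\]
which yields $h_t(0) \leq 2^{a_0} = 2^{2^{2^{O(t)}}}$, as required.

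The main obstacle is the careful bookkeeping of the small additive constants ($+4$, $+1$, $-4$, $-1$) appearing at every layer of the nested radicals and powers: one must show they do not accumulate into a worse-than-triply-exponential bound. The key observation keeping the tower at height three is that, once $a_r$ exceeds a fixed absolute constant, the dominant behaviour is the iteration $a_{r-1} \lesssim Ct \cdot a_r^{4}$, and the number of levels of this iteration is exactly $3t-2$, linear in $t$; this linearity is precisely what produces a tower of height three rather than four. The finitely many base cases where $a_r$ is small can be handled by direct computation, and the rest is a routine but error-prone algebraic verification.
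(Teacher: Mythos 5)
Your proposal is correct and is, at its core, the same argument as the paper's: the paper defines $h_t(r) = 2^{2^{g_t(r)}}$ with $g_t(r) = 4^{3t-2-r} + \frac{4^{3t-2-r}-1}{3}\log(512t)$, which is precisely the closed-form solution of the linear recurrence $g_t(r-1) = 4\,g_t(r) + \log(512t)$, $g_t(3t-2)=1$; that recurrence is nothing other than the $\log$ of your iteration $a_{r-1} \lesssim C t\, a_r^4$ under the substitution $g_t \approx \log_2 a_r$, and the paper also anchors the recursion at $h_t(3t-2) = 2^{2^1} = 4$. The only presentational difference is that you build $h_t$ backward by inverting the bound with equality (so (2) and (3) come for free and the work is in proving (1)), whereas the paper writes the closed form up front and verifies (2) by a chain of generous sufficient inequalities; this changes nothing essential, and your claimed bound $a_0 \le (Ct)^{(4^{3t-2}-1)/3} a_{3t-2}^{4^{3t-2}}$, hence $h_t(0) = 2^{2^{2^{O(t)}}}$, matches the paper's. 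The one caveat is that your proof is explicitly a plan rather than a full verification (the additive constants in the iteration $a_{r-1} \le 1 + (2t-1)((a_r+1)^2+1)^2$ are slightly off: a correct bound is $a_{r-1} \le 2 + (2t-1)\bigl(1+((a_r+1)^2+1)^2\bigr)$, but this is absorbed into $C$ and does not affect the conclusion), so the ``routine but error-prone algebraic verification'' still needs to be carried out; the paper shows one way to do it cleanly by working with $g_t$ directly and reducing everything to the single identity $4g_t(r) + \log(512t) = g_t(r-1)$.
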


We are now ready to prove \cref{thm:the-algo}. In the following statement, we assume that the ordered graph $(G, <)$ is given explicitly.

\thealgo*

\begin{proof}
The high-level description of the algorithm is extremely simple: we compute the delayed rank of $(G, <)$. If this rank is at least $3t-2$, we return Yes. Otherwise, we look whether there is a $(2t-3)$-heavy leaf in one of the delayed structured trees that we computed. If so, we return Yes. Otherwise, we return No.

More formally, we compute the sets $\mathcal{G}_0(G), \ldots, \mathcal{G}_{3t-2}(G)$.
If $\mathcal{G}_{3t-2}(G) \neq \emptyset$, we return Yes. Otherwise, if there exists some $H \in \mathcal{G}_0(G) \cup \ldots \cup \mathcal{G}_{3t-2}(G)$ whose delayed decomposition tree contains a $(2t-3)$-heavy leaf then we return Yes.
Otherwise, we return No.

For every $t \geq 1$, let $h_t$ be the function provided by \cref{lem:technical}. 
Then, define $f : t \mapsto h_t(0)$, and note that $f(t) = 2^{2^{2^{O(t)}}}$.

\begin{claim*}
If the algorithm returns Yes then $(G, <)$ has a $K_t$ interval minor.
\end{claim*}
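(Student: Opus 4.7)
The plan is to observe that the algorithm returns Yes in exactly two situations, and to handle each by invoking the structural lemmas already established in the paper. Case~1 is when $\mathcal{G}_{3t-2}(G) \neq \emptyset$, and Case~2 is when some $H \in \mathcal{G}_0(G) \cup \dots \cup \mathcal{G}_{3t-2}(G)$ has a $(2t-3)$-heavy leaf in its delayed decomposition tree. In both cases the goal is to produce an explicit $K_t$ interval minor of $(G,<)$.

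For Case~1, I will invoke \cref{lem:large-rank-Gr}, which says that $\mathcal{G}_{3t-2}(G)\neq \emptyset$ is equivalent to $(G,<)$ having delayed rank at least $3t-2$. Then \cref{thm:big-cano-rk-big-kt} immediately yields a $K_t$ interval minor of $(G,<)$, concluding this case.

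For Case~2, fix such an $H \in \mathcal{G}_i(G)$ with a $(2t-3)$-heavy leaf in its distinguishing delayed decomposition tree $(T,<,\{G_x\}_{x\in V(T)})$. Since $(G_T,<) = (H,<)$, \cref{lem:cliquepath2} produces a clique of size $t$ as a subgraph of $H$. Next, by \cref{lem:Gr-subgraph}, $H$ is itself a subgraph of $G$, so $G$ contains a $K_t$ subgraph. Since any $K_t$ subgraph is trivially a $K_t$ interval minor (by taking each vertex of the clique as its own singleton interval, with all other vertices of $G$ distributed arbitrarily), we are done.

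The argument is essentially a routing exercise through previously proved results, so there should be no genuine obstacle; the only thing to be careful about is stating precisely that the returned $K_t$ (be it a subgraph or an interval minor from bounded delayed rank) corresponds to an interval minor of the original $(G,<)$, not of an intermediate refined quotient graph. For Case~1 this is automatic from \cref{thm:big-cano-rk-big-kt} applied directly to $(G,<)$; for Case~2 it follows from the subgraph transitivity in \cref{lem:Gr-subgraph}.
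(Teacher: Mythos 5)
Your proof is correct and follows essentially the same route as the paper: Case~1 via \cref{lem:large-rank-Gr} and \cref{thm:big-cano-rk-big-kt}, and Case~2 via \cref{lem:cliquepath2} and \cref{lem:Gr-subgraph} to produce a $K_t$ subgraph of $G$. The only cosmetic difference is the order in which you invoke the two lemmas in Case~2, which does not affect the argument.
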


\begin{proof}[\textit{Proof of the Claim}]
By \cref{lem:large-rank-Gr}, if $\mathcal{G}_{3t-2}(G) \neq \emptyset$ then $G$ has delayed rank at least $3t-2$ so \cref{thm:big-cano-rk-big-kt} implies that $G$ has a $K_t$ interval minor.
If there exists some $H \in \mathcal{G}_0(G) \cup \ldots \cup \mathcal{G}_{3t-2}(G)$ whose delayed decomposition tree contains a $(2t-3)$-heavy leaf then by \cref{lem:Gr-subgraph}, $H$ is a subgraph of $G$ and $H$ is the realization of a delayed structured tree that contains a $(2t-3)$-heavy leaf so by \cref{lem:cliquepath2}, $H$ contains a clique of size $t$. Therefore, $G$ itself contains a clique of size $t$, hence a $K_t$ interval minor.
\end{proof}

\begin{claim*}
If $G$ has a $K_{f(t)}$ interval minor then the algorithm returns Yes.
\end{claim*}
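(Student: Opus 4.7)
The plan is to prove the contrapositive: assuming the algorithm returns No, so that $\mathcal{G}_{3t-2}(G) = \emptyset$ and no delayed decomposition tree computed during the execution contains a $(2t-3)$-heavy leaf, I will show that $G$ has no $K_{h_t(0)} = K_{f(t)}$ interval minor. The core of the argument is an induction on $r \in \{0, 1, \ldots, 3t-2\}$ proving: if $G$ has a $K_{h_t(0)}$ interval minor, then for every $r$ there exists $H_r \in \mathcal{G}_r(G)$ with a $K_{h_t(r)}$ interval minor. The base case $r=0$ is the assumption, since $\mathcal{G}_0(G) = \{G\}$. Taking $r = 3t-2$ then yields $H_{3t-2} \in \mathcal{G}_{3t-2}(G)$ (nonempty since $h_t(3t-2) \geq 4$), contradicting $\mathcal{G}_{3t-2}(G) = \emptyset$.

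For the inductive step, let $H_r \in \mathcal{G}_r(G)$ carry a $K_{h_t(r)}$ interval minor realized by an interval family $\mathcal{I}$ in its delayed decomposition tree $T_r$. Since the algorithm returned No, $T_r$ has no $(2t-3)$-heavy leaf, so by the contrapositive of \cref{lem:cliquepath} the family $\mathcal{I}$ has no $(2t-1)$-interval path in $T_r$. Choosing $b$ maximal with $2(b+2)^{2t-1} \leq h_t(r)$, \cref{lem:branchtree} supplies a $b$-branching node $x$ of $T_r$. From the $b$ branching intervals $I_i \subseteq L(y_i)$ with distinct children $y_i$ of $x$, define $A_i$ as the set of children of $y_i$ whose leaf-set meets $I_i$: the $A_i$ are pairwise disjoint intervals in the order on the grandchildren of $x$ (vertices of $G_x$), and because $\mathcal{I}$ is a clique interval minor and the relevant last common ancestors are exactly $x$, there is an edge of $G_x$ between $A_i$ and $A_j$ for every $i \neq j$. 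Hence $G_x$ itself contains a $K_b$ interval minor. To transport this minor inside a single refined quotient graph, three reductions are needed. First, if the first child of $x$ has label $O$, its children are deleted from $G_x$, losing at most one of the $A_i$. Second, each edge of the resulting $K_{b'}$ interval minor carries one of the four types $R'R', R'L', L'R', L'L'$; two successive applications of \cref{lem:improved-ramsey} (first grouping the four types into two pairs, then refining inside the chosen pair) extract a monochromatic interval minor of size at least $2^{\sqrt{\sqrt{\log b'}-1}-1}$. Third, clipping the endpoints of this monochromatic interval minor so as to land inside the refined quotient loses at most a constant number of intervals. The recursive inequality of \cref{lem:technical} precisely records this bookkeeping: the outer $(\cdot/2)^{1/(2t-1)}$ inverts \cref{lem:branchtree}, the two nested $\sqrt{\log}$-contractions come from the two Ramsey steps, and the inner and outer $-4$'s absorb the constant boundary losses. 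This delivers $H_{r+1} \in \mathcal{G}_{r+1}(G)$ with a $K_{h_t(r+1)}$ interval minor, closing the induction.

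For the running time: by \cref{lem:bound-size}, $|\mathcal{G}_r(G)| \leq 4mn$ for every $r$ and the total edge count across all graphs in $\mathcal{G}_r(G)$ is at most $m$; moreover, every such graph is a subgraph of $G$ (\cref{lem:Gr-subgraph}) and so has at most $n$ vertices. By \cref{lem:compute-refined-quotients}, computing the refined quotients of a graph with $n'$ vertices and $m'$ edges takes time $O(n'+m')$; summing over the $\leq 4mn$ graphs in $\mathcal{G}_r(G)$ gives a per-round cost of $O(mn^2)$, and therefore a total cost of $O(t \cdot mn^2)$ for computing all the $\mathcal{G}_r(G)$. The heavy-leaf checks via \cref{lem:algo-heavy-leaf} are linear in the total size of each delayed structured tree encountered, and are absorbed into the same bound.

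The main obstacle is the careful bookkeeping in the inductive step: one must verify that the two successive Ramsey applications over the four-type edge coloring, together with the branching extraction and the boundary clippings, fit exactly into the recursive inequality defining $h_t$ in \cref{lem:technical}. Once that alignment is established, the rest is a direct synthesis of the tools developed in \cref{sec:delayed_decomp,sec:delayed_rank,subsec:ramsey}.
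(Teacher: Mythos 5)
Your proposal follows the paper's proof essentially verbatim: same induction on $r$ maintaining a $K_{h_t(r)}$ interval minor in some $H_r \in \mathcal{G}_r(G)$, same use of \cref{lem:cliquepath} and \cref{lem:branchtree} to extract a branching node, same two-fold application of \cref{lem:improved-ramsey}, same boundary-clipping losses absorbed by \cref{lem:technical}. One small imprecision worth noting: in the branching step you write ``$I_i \subseteq L(y_i)$ with distinct children $y_i$'' and take $A_i$ to be children of a single $y_i$, but $b$-branching only guarantees that no child of $x$ meets two of the intervals, not that each interval lies under a single child; the correct move (which the paper silently uses) is to set $A_i$ equal to all grandchildren of $x$ whose leaf set meets $I_i$, which is still a contiguous interval of $V(G_x)$ because the children meeting $I_i$ form a contiguous block.
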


\begin{proof}[\textit{Proof of the Claim}]
Suppose that there is no graph $H \in \mathcal{G}_0(G) \cup \ldots \cup \mathcal{G}_{3t-2}(G)$ whose delayed decomposition tree contains a $(2t-3)$-heavy leaf (otherwise we return Yes).
We prove by induction that for every $0 \leq r \leq 3t-2$, there exists a graph $H_r \in \mathcal{G}_r(G)$ which has a complete interval minor of size $h_t(r)$. In particular, this will imply that $\mathcal{G}_{3t-2}(G) \neq \emptyset$, so the algorithm returns Yes.
For $r = 0$, set $H_0 = G \in \mathcal{G}_0(G)$, which has a complete interval minor of size $f(t) = h_t(0)$ by assumption.
Suppose that the property holds for some $r \in \{0, 1, \ldots, 3t-3\}$.
Consider an interval family $\mathcal{I}$ which realizes the complete interval minor of size $h_t(r)$ in $H_r$. 
In particular, since $h_t(r) \geq 4$ then $H_r$ is not bipartite so the refined quotient graphs of $H_r$ are in $\mathcal{G}_{r+1}(G)$.
Let $(T, <, \{G_x\}_{x \in V(T)})$ be the delayed decomposition of $H_r$.
Set $b_r = (h_t(r)/2)^{1/(2t-1)} -2$, so that $h_t(r) = 2(b_r+2)^{2t-1}$.
By \cref{lem:cliquepath}, since $H_r$ doesn't contain a $(2t-3)-$heavy leaf, there is no $(2t-1)$-interval path in $H_r$. Then, \cref{lem:branchtree} implies that there is a $b_r$-branching node $x$ in $T$, which means that the corresponding quotient graph $G_x$ has a complete interval minor of size $b_r$.
After possibly removing the first vertices of type $O$, the resulting subgraph of $G_x$ still has a complete interval minor of size $b_r-2$ since we only removed an independent set.
Applying \cref{lem:improved-ramsey} twice, we get that one of the types $R'R', R'L', L'R'$ and $L'R'$ satisfies that the subgraph induced by the edges of this type has a complete interval minor of size at least $2^{\sqrt{\sqrt{\log(b_r-2)}-1}-1}$.
Removing the first vertices and the last vertices, which both form a stable set, decreases the size of the complete interval minor by at most $4$, so one of the refined quotient graphs $H_{r+1}$ of $H_r$ has a complete interval minor of size at least $2^{\sqrt{\sqrt{\log(b_r-2)}-1}-1} - 4 = 2^{\sqrt{\sqrt{\log\left((h_t(r)/2)^{1/2t-1} -4\right)}-1}-1} - 4 \geq h_{t}(r+1)$ by definition of $h_t$.
Thus, $H_{r+1} \in \mathcal{G}_{r+1}(G)$ has a complete interval minor of size at least $h_t(r+1)$, as desired.
\end{proof}

\begin{claim*}
This algorithm can be implemented to run in time $O(t \cdot mn^2)$.
\end{claim*}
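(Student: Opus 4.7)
The plan is to bound the work done at each level $r \in \{0, 1, \ldots, 3t-2\}$ by $O(mn^2)$ and then sum over levels. At level $r$, the algorithm must do two things for each graph $H \in \mathcal{G}_r(G)$: compute the refined quotient graphs of $H$ (to construct $\mathcal{G}_{r+1}(G)$), and decide whether the delayed decomposition tree of $H$ contains a $(2t-3)$-heavy leaf. By \cref{lem:compute-refined-quotients}, the refined quotients can be computed in time $O(|V(H)|+|E(H)|)$, and the same routine internally builds the delayed decomposition tree of $H$. Then \cref{lem:algo-heavy-leaf} checks for a $(2t-3)$-heavy leaf in time linear in the total size of this tree, which by \cref{rmk:nb-quotients} is again $O(|V(H)|+|E(H)|)$.

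The key combinatorial input will be \cref{lem:bound-size}, which tells us that for $r \geq 1$ there are at most $4mn$ graphs in $\mathcal{G}_r(G)$ and their edges sum to at most $m$. Together with \cref{lem:Gr-subgraph}, which guarantees $|V(H)| \leq n$ for every $H \in \mathcal{G}_r(G)$, this will yield a per-level bound of
$$\sum_{H \in \mathcal{G}_r(G)} O(|V(H)|+|E(H)|) \;\le\; O\!\left(n\cdot |\mathcal{G}_r(G)| + m\right) \;=\; O(mn^2).$$
Level $0$ contains only the input graph $G$ and costs $O(n+m)$, so summing over the $3t-1$ levels yields the claimed total of $O(t \cdot mn^2)$.

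I do not anticipate a real obstacle; the argument is essentially bookkeeping. The one thing I will be careful about is that every refined quotient graph produced is stored in the explicit representation required by \cref{lem:compute-refined-quotients}, so that the procedure can be recursively applied at the next level. The two natural charges --- one to the edges of the graphs at level $r$ (summing to $m$), the other to the vertex-graph pairs (bounded by $n \cdot |\mathcal{G}_r(G)| \le 4mn^2$) --- together account for all work done at each level, and the extra factor of $t$ is simply the number of levels.
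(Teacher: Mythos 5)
Your proposal is correct and follows essentially the same argument as the paper: the same lemmas (\cref{lem:compute-refined-quotients}, \cref{lem:algo-heavy-leaf}, \cref{lem:bound-size}, \cref{lem:Gr-subgraph}, and \cref{rmk:nb-quotients}) are combined with the same $O(mn^2)$-per-level bound, summed over $O(t)$ levels. The only cosmetic difference is that you fold the refined-quotient construction and the heavy-leaf check into a single per-level pass, whereas the paper treats them as two sequential phases, but the accounting is identical.
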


\noindent\textit{Proof of the Claim.}
By \cref{lem:compute-refined-quotients}, the set of refined quotient graphs (stored explicitly) of an explicit ordered graph with $v$ vertices and $e$ edges can be computed in time $O(v+e)$.
Thus, $\mathcal{G}_1(G)$ can be computed in time $O(n+m)$, with all graphs being stored explicitly.

Suppose that we already computed $\mathcal{G}_r(G)$ for some $1 \leq r < 3t-2$, with all graphs being stored explicitly. Then, \cref{lem:Gr-subgraph,lem:bound-size} imply that it contains at most $4mn$ graphs, each with at most $n$ vertices, and with at most $m$ edges in total over all graphs in $\mathcal{G}_r(G)$.
Thus, $\mathcal{G}_{r+1}(G)$ can be computed in time $\sum_{H \in \mathcal{G}_r(G)}O(|V(H)|+|E(H)|) = O(4mn \cdot n + m) = O(mn^2)$, with all graphs being stored explicitly.
Therefore, computing $\mathcal{G}_0(G) \cup \ldots \cup \mathcal{G}_{3t-2}(G)$ can be done in time $O(t \cdot mn^2)$.

Finally, given an explicit ordered graph with $v$ vertices and $e$ edges, its delayed decomposition can be computed in time $O(v+e)$, hence the corresponding delayed structured tree has total size $O(v+e)$. Then, by \cref{lem:algo-heavy-leaf}, the algorithm can compute in time $O(v+e)$ whether there is a $(2t-3)$-heavy leaf in it.
Thus, by iterating over all graphs in $\mathcal{G}_0(G) \cup \ldots \cup \mathcal{G}_{3t-2}(G)$, the algorithm can check whether there exists some $H \in \mathcal{G}_0(G) \cup \ldots \cup \mathcal{G}_{3t-2}(G)$ whose delayed structured tree contains a $(2t-3)$-heavy leaf.
Since by \cref{rmk:nb-quotients} there are $O(t \cdot mn)$ such graphs and together they contain at most $O(t \cdot m)$ edges, this can be done in time \begin{equation*}
    \sum_{H \in \mathcal{G}_0(G) \cup \ldots \cup \mathcal{G}_{3t-2}(G)}O(|V(H)| + |E(H)|) = O(t \cdot mn \cdot n + t \cdot m) = O(t \cdot mn^2). \qedhere \qedsymbol
\end{equation*}
\end{proof}

\begin{remark}
Observe that the proofs of \cref{lem:cliquepath2,lem:increase-Kt-delayed,thm:big-cano-rk-big-kt} are all algorithmic and can be implemented efficiently. Note also that in the course of the algorithm, we compute all the graphs in $\mathcal{G}_0(G) \cup \ldots \cup \mathcal{G}_{3r-2}(G)$ and their delayed decompositions. Therefore, when the algorithm returns Yes, it can also efficiently return a collection of intervals that form a $K_t$ interval minor in $(G, <)$.
\end{remark}

\subsection{Finding a \texorpdfstring{$K_3$}{K3} interval minor} \label{subsec:algo-K3}

Every non-trivial ordered graph contains $K_1$ as an interval minor, and every ordered graph with at least one edge contains $K_2$ as an interval minor. Therefore, deciding whether an ordered graph contains $K_1$ or $K_2$ as an interval minor can be done in constant time. In this section, we provide a linear-time algorithm for deciding whether an ordered graph contains $K_3$ as an interval minor.

\begin{theorem} \label{thm:algo-K3}
There is an algorithm which, given as input an explicit $n$-vertex ordered graph $(G, <)$, decides whether $K_3$ is an interval minor of $(G, <)$ in time $O(n)$.
\end{theorem}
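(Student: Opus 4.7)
The plan is to reduce $K_3$ interval minor detection to a simple condition checkable by a single linear scan. Write $V(G) = \{1 < \cdots < n\}$, and for each $i$ set $\ell(i) = \min\{j < i : ij \in E\}$ (or $+\infty$), $r(i) = \max\{j > i : ij \in E\}$ (or $-\infty$), and $M(k) = \max_{i \leq k} r(i)$. A $K_3$ interval minor corresponds to cut positions $1 \leq p < q < n$ such that (i) $M(p) > q$, (ii) some $i \in [p+1,q]$ has $r(i) > q$, and (iii) some $i \in [p+1,q]$ has $\ell(i) \leq p$. These express the existence of the three cross edges between the intervals $[1,p], [p+1,q], [q+1,n]$.

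A short shrinking argument—take any $K_3$ minor with witnesses $u_1, u_2$ for (iii) and (ii) respectively, and replace $(p,q)$ by $(\min(u_1,u_2)-1, \max(u_1,u_2))$—gives the cleaner characterization: $K_3$ is an interval minor iff there exist $u_1, u_2$ such that, writing $x = \min(u_1,u_2)$ and $y = \max(u_1,u_2)$, we have $\ell(u_1) < x$, $r(u_2) > y$, and $M(x-1) > y$. This splits into Case A ($u_1 \leq u_2$), where the first two conditions reduce to ``$u_1$ has a left neighbor'' and ``$u_2$ has a right neighbor'', and Case B ($u_1 > u_2$), which is strictly more restrictive. Examples such as the monotone $K_{2,2}$ show Case B is genuinely needed.

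The algorithm computes $\ell, r, M$ and auxiliary arrays such as $\text{nextRA}[k] = \min\{v \geq k : r(v) > v\}$ in $O(n+m)$ from the edge list. Case A is tested by sweeping $u_1$ over left-available vertices and checking whether $\text{nextRA}[u_1] < M(u_1-1)$. Case B is tested by iterating over the edges $(a,b)$, setting $u_1 = b$ (so $\ell(b) \leq a$), and checking using further precomputed (threshold-parameterized) arrays whether some right-available $u_2 \in (a,b)$ satisfies $r(u_2) > b$ and $M(u_2-1) > b$; the monotonicity of $M$ and a two-pointer sweep over $b$ allow the relevant lookups to amortize to $O(1)$ per edge.

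To reach the claimed $O(n)$ bound, I combine this with the key sparsity fact that every $K_3$-interval-minor-free ordered graph has $O(n)$ edges, which follows from the characterization: once enough edges accumulate, one of the two case configurations is forced. Reading edges incrementally, once their count crosses a fixed threshold $Cn$ the algorithm immediately returns Yes; otherwise $m = O(n)$ and the entire algorithm runs in $O(n)$ time. The main obstacles are establishing this sparsity bound from the characterization and ensuring amortized $O(1)$ cost per edge in Case B, so as to avoid a $\log n$ factor from a range-minimum data structure.
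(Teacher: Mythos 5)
Your characterization of a $K_3$ interval minor via two cut positions and the arrays $\ell$, $r$, prefix-max $M$ is essentially the same as the paper's (which uses arrays $m(v)$, $M(v)$ and $Y(v)$ and iterates over the right endpoint $v$ of the first interval). However, there are two genuine gaps.

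First, the sparsity lemma—that any $n$-vertex ordered graph with $\geq n$ edges has a $K_3$ interval minor—is load-bearing for the $O(n)$ bound, and you acknowledge it as an ``obstacle'' rather than proving it. Your proposed route (``once enough edges accumulate, one of the two case configurations is forced'') is not an argument. The paper's proof is short but non-obvious: $\geq n$ edges forces a cycle $v_1,\ldots,v_\ell$; taking $v_1$ to be $<$-minimum on the cycle (with $v_2 < v_\ell$), the three intervals $\{u \leq v_1\}$, $\{v_1 < u \leq v_2\}$, $\{u > v_2\}$ work because the path $v_2,\ldots,v_\ell$ never returns to the first interval, so it must cross between the other two. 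This is the step you need and don't have.

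Second, your worry that a range-minimum structure would introduce a $\log n$ factor is misplaced, and the machinery you build to avoid it (``threshold-parameterized arrays,'' a two-pointer sweep over edges with claimed amortized $O(1)$ cost) is both underspecified and unnecessary. The classical sparse-table/Cartesian-tree RMQ (cited by the paper as Lemma~\ref{lem:rmq}) gives $O(1)$ per query after $O(n)$ preprocessing, so the paper simply iterates over $v \in V$ and answers two range queries per $v$, all in $O(n)$ total. Your Case~B amortization claim would need a careful monotonicity argument that you have not supplied; even if fixable, it is a detour from a standard tool that already achieves the target running time.
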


The next lemma explains why the above running time can be as low as $O(n)$, instead of the usual $O(m+n)$. 
It contrasts strikingly with the fact that $K_4$-interval-minor-free ordered graphs can have a quadratic number of edges, as we observed in the introduction.

\begin{lemma} \label{lem:n-edges-K3}
If $(G, <) = ((V, E), <)$ is an ordered graph with $|E| \geq |V|$ then $(G, <)$ contains $K_3$ as an interval minor.
\end{lemma}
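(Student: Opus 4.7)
The plan is to proceed by induction on $n = |V|$, taking as base case $n = 3$ (where $|E| \geq 3$ forces $G = K_3$, which is a $K_3$ interval minor of itself). A simple but useful observation is that $|E| \geq |V|$ forces $|V| \geq 3$, since $\binom{v}{2} < v$ for $v \leq 2$; this ensures that every recursive call stays in the valid regime where the induction hypothesis applies.

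For the inductive step with $n \geq 4$, I would first handle the case where the leftmost vertex $v_1$ is isolated by deleting it: the resulting graph has $n-1$ vertices and at least $n > n-1$ edges, so the induction hypothesis directly yields a $K_3$ interval minor. Otherwise, let $v_k$ be the $<$-largest neighbor of $v_1$, and consider the three-interval partition $I_1 = \{v_1\}$, $I_2 = \{v_2, \ldots, v_{k-1}\}$, $I_3 = \{v_k, \ldots, v_n\}$. The edge $v_1 v_k$ supplies an $I_1$-$I_3$ edge, so if this partition does not already witness a $K_3$ interval minor, then either there is no $I_1$-$I_2$ edge (Case A) or no $I_2$-$I_3$ edge (Case B). In Case A, the maximality of $v_k$ forces $v_1$ to have $v_k$ as its unique neighbor, and deleting $v_1$ leaves an instance with $n-1$ vertices and at least $n-1$ edges to which the induction hypothesis applies.

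The main case is B, where every edge of $G$ lies inside $A := \{v_1, \ldots, v_{k-1}\}$, inside $B := \{v_k, \ldots, v_n\}$, or equals $v_1 v_k$ itself. This gives the key inequality $|E(G[A])| + |E(G[B])| \geq n - 1 = |V(A)| + |V(B)| - 1$, so a pigeonhole forces at least one of $|E(G[A])| \geq |V(A)|$ or $|E(G[B])| \geq |V(B)|$; the initial observation then guarantees that the corresponding side has at least $3$ vertices. Applying the induction hypothesis on that side produces a $K_3$ interval minor of $G[A]$ (respectively $G[B]$), which lifts to a $K_3$ interval minor of $G$ by enlarging its leftmost or rightmost interval to absorb the remaining vertices of $B$ (respectively $A$). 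No serious obstacle arises; the argument is entirely elementary, and the only real idea is to split $G$ via the longest edge out of $v_1$.
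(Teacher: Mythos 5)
Your proof is correct, but it follows a genuinely different route from the paper. The paper gives a direct, non-inductive argument: since $|E|\geq|V|$, $G$ contains a cycle $(v_1,\dots,v_\ell)$; choosing $v_1$ to be the $<$-minimum vertex of the cycle and $v_2 < v_\ell$, one takes $I_1=\{u:u\le v_1\}$, $I_2=\{u:v_1<u\le v_2\}$, $I_3=\{u:u>v_2\}$. The cycle edges $v_1v_2$ and $v_1v_\ell$ supply the $I_1$--$I_2$ and $I_1$--$I_3$ edges, and the remainder of the cycle is a $v_2$--$v_\ell$ path that never re-enters $I_1$, so it must cross from $I_2$ to $I_3$. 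Your argument instead inducts on $n$: you peel off the leftmost vertex via its longest edge $v_1v_k$, and when the obvious three-interval partition fails, you observe that the edge count forces one of the two sides $A=\{v_1,\dots,v_{k-1}\}$ or $B=\{v_k,\dots,v_n\}$ to again have at least as many edges as vertices, recurse, and absorb the other side into the outer interval of the minor found. Both proofs are elementary and roughly the same length; the paper's cycle argument is a bit slicker and produces the minor in one shot, while yours is more of a divide-and-conquer that would also directly yield a near-linear recursive algorithm (though the paper obtains its $O(n)$ algorithm by other means). Two small points worth tightening in a write-up: when $k=2$ the middle interval $I_2$ is empty, which is fine because it lands you in your Case A; and after recursing on $G-v_1$ or on $G[A]$, $G[B]$, you should say explicitly that the three intervals lift to intervals of $(G,<)$ by absorbing the deleted vertices into the extremal interval (you do mention this for Case B but not for the isolated-$v_1$ case or Case A).
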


\begin{proof}
Since $|E| \geq |V|$ then $G$ contains a cycle $(v_1, \ldots, v_\ell, v_1)$.
Without loss of generality, assume that $v_1$ is $<$-minimum among all $v_i$, and that $v_2 < v_\ell$.
Consider the intervals $I_1, I_2$ and $I_3$, defined respectively as: \begin{itemize}
    \item All vertices $u$ such that $u \leq v_1$, and
    \item All vertices $u$ such that $v_1 < u \leq v_2$, and
    \item All vertices $u$ such that $v_2 < u$.
\end{itemize}
Observe that $v_1 \in I_1, v_2 \in I_2$ and $v_{\ell} \in I_{3}$.
The edge $v_1v_2$ is an edge between $I_1$ and $I_2$, and the edge $v_1v_{\ell}$ is an edge between $I_1$ and $I_3$.
Furthermore, since $v_1$ is $<$-minimum among all $v_i$, the path $v_2, \ldots, v_{\ell}$ is a path between $I_2$ and $I_3$ which never visits $I_1$, so it must contain an edge between $I_2$ and $I_3$. 
Thus, the interval minor obtained by contracting $I_1, I_2$ and $I_3$ is $K_3$.
\end{proof}

\begin{proof}[Proof of \cref{thm:algo-K3}]
The algorithm is presented in \cref{alg:algo-K3}.
The high-level idea is to try all possible endpoints for the first interval, and then determine whether the remainder of the graph can be partitioned into two intervals satisfying the required conditions, using constant-time queries.

\begin{algorithm}
\caption{Algorithm for \cref{thm:algo-K3}}
        \label{alg:algo-K3}
        \begin{algorithmic}[1]
        \Require $(G, <) = ((V, E), <)$ on vertex set $[n]$ given by its edge set.
    
        \If{$G$ has at least $n$ edges} \label{line:count-edges}
            \State Return Yes \label{line:first-return}
        \EndIf
        \For{every vertex $v \in V$}
            \State $M(v) \gets$ largest neighbor of $v$ ($0$ if $v$ is isolated) \label{line:computeM}
            \State $Y(v) \gets$ $1$ if $M(v) > v$, $0$ otherwise
            \State $m(v) \gets$ smallest $u > v$ which has a neighbour $w \leq v$ ($\infty$ if no such $u$) \label{line:computem}
        \EndFor
        \State Preprocess the arrays $M$ and $Y$ \label{line:preprocess}
        \For{every vertex $v \in V$}
            \State $I_1 \gets \{x \in V : x \leq v\}$
            \State $L \gets m(v)$
            \State $R \gets \max_{x \in I_1}M(x)$
            \If{there exists $u \in V$ such that $L \leq u < R$ and $M(u) > u$}
                \State Return Yes \label{line:second-return}
            \EndIf
            \If{there exists $u \in V$ such that $v < u < L < R$ and $M(u) > L$}
                \State Return Yes \label{line:last-return}
            \EndIf
        \EndFor
        \State Return No
        \end{algorithmic}
\end{algorithm}

\begin{claim*}
If $(G, <)$ has a $K_3$ interval minor then the algorithm returns Yes.
\end{claim*}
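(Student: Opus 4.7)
The plan is as follows. Assume $(G,<)$ admits a $K_3$ interval minor realized by a partition of $V(G)$ into three nonempty intervals $I_1 < I_2 < I_3$ with an edge between each pair. If $|E(G)| \geq n$, the pre-loop test at line~\ref{line:count-edges} already sends us to line~\ref{line:first-return} (this is \cref{lem:n-edges-K3}), so we may assume $|E(G)| < n$ and the algorithm enters the main loop. I would then single out the iteration for $v := \max I_1$, so that the set $I_1$ defined by the algorithm for this $v$ coincides exactly with the interval $I_1$ from the minor.

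The next step is to pin down the relative positions of the pointers $L = m(v)$ and $R = \max_{x \leq v} M(x)$ with respect to $I_2, I_3$. An $I_1$--$I_2$ edge $w_1 w_2$ witnesses $L \leq w_2 \leq \max I_2$, and an $I_1$--$I_3$ edge $w_1' w_3$ witnesses $R \geq M(w_1') \geq w_3 \geq \min I_3$. Combined with $\max I_2 < \min I_3$, this gives the key double inequality $L \leq \max I_2 < \min I_3 \leq R$, in particular $L < R$.

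The final step uses an $I_2$--$I_3$ edge $u u'$ with $u \in I_2$, $u' \in I_3$ to fire one of the two existential tests. From $u \in I_2$ and $u' \in I_3$ we get $v < u \leq \max I_2 < R$ and $M(u) \geq u' \geq \min I_3 > \max I_2 \geq L$, so in particular $M(u) > L$ and $M(u) > u$ unconditionally. The argument then splits on the position of $u$ relative to $L$. If $u \geq L$, then $L \leq u < R$ together with $M(u) > u$ triggers the first test and returns Yes at line~\ref{line:second-return}. Otherwise $u < L$, and then $v < u < L < R$ together with $M(u) > L$ triggers the second test and returns Yes at line~\ref{line:last-return}.

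The proof is essentially a bookkeeping exercise once the right witness is chosen; the only point that requires a moment of thought is the dichotomy $u \geq L$ vs. $u < L$. This is exactly what the two separate existential checks in the algorithm are designed to handle: $L$ marks the leftmost vertex beyond $I_1$ that still reconnects back into $I_1$, and whether the $I_2$--$I_3$ edge lies to the right of $L$ or jumps across $L$ determines which of the two checks captures it. So the main (mild) obstacle is just noticing that a single check would not suffice and that the two checks together cover exactly the two cases forced by the structure.
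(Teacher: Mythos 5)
Your proposal is correct and follows the same approach as the paper's proof: you single out the iteration $v = \max I_1$, establish the same bounds on $L$ and $R$ via the $I_1$--$I_2$ and $I_1$--$I_3$ edges, and split on whether the $I_2$-endpoint $u$ of an $I_2$--$I_3$ edge lies at or past $L$ to trigger line~\ref{line:second-return} or line~\ref{line:last-return} respectively. The only cosmetic difference is that the paper locates $L \in I_2$ and $R \in I_3$ exactly while you work with the slightly weaker $L \le \max I_2$ and $R \ge \min I_3$, which suffices just as well.
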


\begin{proof}[\noindent\textit{Proof of the Claim}]
If $G$ has at least $n$ edges, the algorithm returns Yes at \cref{line:first-return} so suppose that $G$ has less than $n$ edges. 
Let $(I_1, I_2, I_3)$ be a partition of $V(G)$ into intervals for $<$ which realizes $K_3$ as an interval minor, with $I_1 < I_2 < I_3$, and let $v$ be the last vertex of $I_1$.
Let $L$ be the minimum vertex outside of $I_1$ which has a neighbor in $I_1$, and $R$ be the maximum vertex which has a neighbor in $I_1$. Note that $L = m(v)$ and $R = \max_{u \in I_1}M(u)$. Since some vertex in $I_2$ has a neighbor in $I_1$ then $L \in I_2$, and similarly $R \in I_3$. This implies $L < R$.
Furthermore, there is an edge between $I_2$ and $I_3$, let $u \in I_2$ be a vertex with a neighbor in $I_3$.
If $u \geq L$ then $L \leq u < R$ (since $R \in I_3)$ and $M(u) > u$ since the neighbor $w$ of $u$ in $I_3$ satisfies $M(u) \geq w > u$. Thus, in that case the algorithm returns Yes at \cref{line:second-return}.
If $u < L$ then $v < u < L < R$ (since $v$ is the last vertex of $I_1$) and $M(u) > L$ since the neighbor $w$ of $u$ in $I_3$ satisfies $M(u) \geq w > L$ (since $L \in I_2$). Thus, in that case the algorithm returns Yes at \cref{line:last-return}.
Therefore, in all cases the algorithm returns Yes.
\end{proof}

\begin{claim*}
If the algorithm returns Yes then $(G, <)$ has a $K_3$ interval minor.
\end{claim*}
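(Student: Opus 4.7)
My plan is to case-split on the line at which the algorithm returns Yes and, in each case, exhibit an explicit tripartition $I_1 < I_2 < I_3$ of $V(G)$ into intervals that realizes a $K_3$ interval minor.

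The first return, at \cref{line:first-return}, fires precisely when $|E(G)| \geq |V(G)|$ and is handled directly by \cref{lem:n-edges-K3}. For the second return, at \cref{line:second-return}, the algorithm has produced a vertex $v$ together with a witness $u$ satisfying $L \leq u < R$ and $M(u) > u$, where $L = m(v)$ and $R = \max_{x \leq v} M(x)$. I would take $I_1 = \{x : x \leq v\}$, $I_2 = \{x : v < x \leq u\}$, and $I_3 = \{x : u < x\}$. The three crossing edges then fall out mechanically: $L$ lies in $I_2$ (using $v < L \leq u$, with $v < L$ by the definition of $m$) and has a neighbor in $I_1$ by the very definition of $m(v)$; the vertex $R$ lies in $I_3$ (since $u < R$) and has a neighbor in $I_1$ by the definition of $R$; and the edge $u\,M(u)$ joins $I_2$ and $I_3$ because $M(u) > u$.

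For the third return, at \cref{line:last-return}, the witness satisfies $v < u < L < R$ and $M(u) > L$, and I would instead use $I_1 = \{x : x \leq v\}$, $I_2 = \{x : v < x \leq L\}$, and $I_3 = \{x : L < x\}$. The verification is essentially identical: $L \in I_2$ and has a neighbor in $I_1$ by definition of $m(v)$; $R \in I_3$ (since $R > L$) and has a neighbor in $I_1$ by definition of $R$; and the edge $u\,M(u)$ joins $I_2$ and $I_3$ since $v < u \leq L$ places $u$ in $I_2$ and $M(u) > L$ places $M(u)$ in $I_3$.

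I do not foresee any real obstacle. The witnesses prescribed by the algorithm were designed exactly so that these tripartitions work; the only small bookkeeping is to observe that $L$ is finite and $R$ is positive whenever the existential conditions admit a witness, which is automatic from the existence of $u$.
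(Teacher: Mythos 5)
Your proof is correct and follows essentially the same route as the paper: the same tripartitions in both cases, the same verification of the three crossing edges, and the same finiteness/positivity bookkeeping for $L$ and $R$.
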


\begin{proof}[\noindent\textit{Proof of the Claim}]
Suppose first that the algorithm returns Yes at \cref{line:first-return}. Then, $G$ has at least $n$ edges so \cref{lem:n-edges-K3} implies that $(G, <)$ has a $K_3$ interval minor.

Suppose now that the algorithm returns Yes at \cref{line:second-return} while considering some vertex $v \in V$. Then, there exists $u \in V$ such that $L \leq u < R$ and $M(u) > u$. In particular, $L \neq \infty$ and $R \neq 0$. Let $I_1 = \{x \in V : x \leq v\}$, $I_2 = \{x \in V : v < x \leq u\}$ and $I_3 = \{x \in V : x > u\}$.
Since $L = m(v)$ and $L \neq \infty$ then by definition $L > v$ and $L$ has a neighbor $x \leq v$, i.e. $L$ has a neighbor in $I_1$. Since $v < L \leq u$ then $L \in I_2$ so there is an edge between $I_1$ and $I_2$.
Since $R \neq 0$ then there is an edge between $R$ and some vertex $x \leq v$, i.e. $R$ has a neighbor in $I_1$. Since $u < R$ then $R \in I_3$ so there is an edge between $I_1$ and $I_3$. 
Furthermore, since $M(u) > u$ then $M(u) \neq 0$ so $u$ has a neighbor $w > u$, so there is an edge between $I_2$ and $I_3$. 
This proves that the partition $(I_1, I_2, I_3)$ realizes $K_3$ as an interval minor.

Last, suppose that the algorithm returns Yes at \cref{line:last-return} while considering some vertex $v \in V$. Then, there exists $u \in V$ such that $v < u < L < R$ and $M(u) > L$. In particular, $L \neq \infty$ and $R \neq 0$. Let $I_1 = \{x \in V : x \leq v\}$, $I_2 = \{x \in V : v < x \leq L\}$ and $I_3 = \{x \in V : x > L\}$.
Since $L = m(v)$ and $L \neq \infty$ then by definition $L$ has a neighbor $x \leq v$, i.e. $L$ has a neighbor in $I_1$. Thus, there is an edge between $I_1$ and $I_2$. 
Since $R \neq 0$ then there is an edge between $R$ and some vertex $x \leq v$, i.e. $R$ has a neighbor in $I_1$. Since $L < R$ then $R \in I_3$ so there is an edge between $I_1$ and $I_3$.
Furthermore, since $v < u < L$ then $u \in I_2$, and since $M(u) > L$ then $u$ has a neighbor $w > L$, i.e. there is an edge between $I_2$ and $I_3$.
This proves that the partition $(I_1, I_2, I_3)$ realizes $K_3$ as an interval minor.
\end{proof}

\begin{claim*}
This algorithm can be implemented to run in time $O(n)$.
\end{claim*}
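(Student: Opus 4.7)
The plan is to show that each component of \cref{alg:algo-K3} can be executed in $O(n)$ time, crucially exploiting that after the edge-count step we may assume $|E|<n$, so any subroutine running in $O(m+n)$ is automatically $O(n)$. The edge count on \cref{line:count-edges} would be implemented by scanning the edge list while incrementing a counter and aborting as soon as the counter reaches $n$; we either return Yes on \cref{line:first-return}, or we have read (and can store) an edge list with fewer than $n$ edges in time $O(n)$.

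Given the edge list with $|E|<n$, the arrays $M$ and $Y$ would be computed in $O(m+n)=O(n)$ by a single pass: for each edge $\{u,v\}$, update $M(u) \leftarrow \max(M(u),v)$ and $M(v) \leftarrow \max(M(v),u)$, then set $Y(u)=1$ if $M(u)>u$ and $0$ otherwise. The array $m$ is the delicate part. I would first compute $\ell(u)$, the smallest neighbor of each vertex $u$ (or $\infty$ if $u$ has no smaller neighbor), in $O(m+n)$; then observe that $m(v)$ equals the smallest label $u$ of an interval $I_u = [\ell(u), u-1]$ containing $v$, where $u$ ranges over vertices with a smaller neighbor. This offline min-stabbing problem can be solved in $O(n\alpha(n)) = O(n)$ by a classical union-find trick: bucket-sort the at most $n$ intervals in increasing order of their label $u$, and for each $I_u$ in this order repeatedly call $\mathrm{find}(\ell(u))$ to locate the smallest position $v \geq \ell(u)$ not yet assigned an $m$-value, set $m(v) \leftarrow u$, and redirect the parent pointer of $v$ to $v+1$. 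Every position is claimed at most once, so the total cost is linear. Finally, \cref{lem:rmq} preprocesses $M$ and $Y$ for constant-time range-maximum queries in time $O(n)$.

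The main loop has $n$ iterations; each iteration retrieves $L = m(v)$ in $O(1)$, computes $R = \max_{x \in I_1} M(x)$ by a range-max query on $M$, checks ``there exists $u \in [L,R-1]$ with $M(u)>u$'' by asking whether $\max_{u \in [L,R-1]} Y(u) \geq 1$, and checks ``there exists $u \in (v,L-1)$ with $M(u)>L$'' by comparing $\max_{u \in (v,L-1)} M(u)$ with $L$ (with routine handling of the degenerate cases $L \geq R$ or $L \leq v+1$). Each such query is $O(1)$ by \cref{lem:rmq}, so the loop, and hence the whole algorithm, runs in $O(n)$.

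The main obstacle I anticipate is the $m$-array computation in strictly linear time: a direct sweep maintaining the set of active candidates with a balanced BST or priority queue would only give $O(n\log n)$. The reduction to offline min-stabbing solved by the monotone union-find above is what delivers the $O(n)$ bound.
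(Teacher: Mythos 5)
Your overall structure matches the paper's: both exploit that after \cref{line:count-edges} one may assume $|E| < n$, so all edge-iterations are $O(n)$, and both use \cref{lem:rmq} to answer the two range queries per iteration in $O(1)$. The one place where you genuinely diverge is the computation of the array $m$. The paper does it with a monotone stack: sweep $v$ from $1$ to $n$, maintain a stack of vertices with undetermined $m$-value, and whenever the smallest neighbor $s(v)$ of the current $v$ is at most the top-of-stack $t$, pop $t$ and set $m(t) = v$. Each vertex is pushed and popped once, so the sweep is unconditionally $O(n)$ with no auxiliary data structure beyond a stack. You instead reformulate $m(v)$ as an offline min-stabbing problem over the intervals $I_u = [\ell(u), u-1]$ and solve it with the union-find successor trick. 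That reformulation is correct and the algorithm works, but the bound it gives with off-the-shelf union-find is $O(n\alpha(n))$, and you assert $O(n\alpha(n)) = O(n)$, which is not true as stated. To make the union-find route genuinely linear you would need to invoke the specialized linear-time union-find for the interval/static-tree case (Gabow--Tarjan), which you do not mention. So your approach is a legitimate alternative but slightly over-engineered and, as written, off by an $\alpha(n)$ factor; the paper's stack-based sweep is both simpler and tight. Everything else — computing $M$, $Y$, $\ell$ (the paper's $s$), the RMQ preprocessing, and the constant-time work per iteration of the main loop — matches the paper's argument.
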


\noindent\textit{Proof of the Claim. }
Since $(G, <)$ is given explicitly, we can relabel all vertices in time $O(n)$ so that $G$ is on vertex set $[n]$.
\cref{line:count-edges} can be executed in time $O(n)$.
If the algorithm doesn't return Yes at \cref{line:first-return} then $G$ has less than $n$ edges so computing all $M(v)$ for $v \in V$ can be done in time $O(n)$ by iterating over all edges.
Computing the array $Y$ can then be done in time $O(n)$.
Similarly, computing for every $v \in V$ the smallest neighbor $s(v)$ of $v$ can be done in time $O(n)$ by iterating over all edges.
We now show how to compute all $m(v)$ in time $O(n)$.
Initialize an empty stack $S$, and consider all vertices of $G$ in turn, from $1$ to $n$.
When considering a vertex $v$, let $t$ be the vertex on top of $S$. While $s(v) \leq t$, set $m(t) = v$, and remove $t$ from $S$. When this no longer holds, add $v$ on top of $S$ and consider the next vertex.
After having added the vertex $n$ to $S$, set $m(v) = \infty$ for all $v$ that remain in $S$.
The correction of this algorithm can be proved with the following invariant: When the algorithm is considering a vertex $v$, all vertices $u \in S$ satisfy $m(u) \geq v$.

The preprocessing at \cref{line:preprocess} takes time $O(n)$ by \cref{lem:rmq}, and it allows us to answer maximum and minimum queries in constant time on any interval in the arrays $M$ and $Y$.

Then, the algorithm iterates over all $v \in V$. We show that for every $v \in V$, only constant time is spent considering $v$.
Computing $R$ takes constant time because it is a query on an interval for $M$, and $L$ can simply be accessed as $m(v)$.
Checking whether there exists $u \in V$ such that $L \leq u < R$ and $M(u) > u$ can be done by querying the maximum of the array $Y$ in the interval $[L, R-1]$ and checking whether this maximum is $1$ or $0$, which takes constant time using \cref{lem:rmq}.
Checking whether there exists $u \in V$ such that $v < u < L < R$ and $M(u) > L$ can be done by querying the maximum of the array $M$ in the interval $[v+1, L-1]$ and checking whether this maximum is greater than $L$, which also takes constant time using \cref{lem:rmq}. \qedhere \qedsymbol
\end{proof}

\section*{Acknowledgements}

We thank Julien Duron for his help with the tedious calculations.

The authors are partially supported by the French National Research Agency under research grants ANR GODASse ANR-24-CE48-4377 and ANR Twin-width ANR-21-CE48-0014-01. The third authors is also supported by Chinese Scholarship Council (CSC).

\bibliographystyle{plain}
\bibliography{biblio}

\appendix
\section{Proof of \texorpdfstring{\cref{lem:technical}}{Lemma 26}} \label{app:proof_technical}

\technical*

\begin{proof}
For every integer $t\geq 1$ and every integer $r \in \{0, \ldots, 3t-2\}$, set $${g_t(r) = 4^{3t-2-r} + \frac{4^{3t-2-r}-1}{3}\log(512t)}$$ and ${h_t(r) = 2^{2^{g(r)}}}$. 

Note that $g_t$ is nonincreasing, and ${g_t(3t-2) = 1}$, so for every ${r \in \{0, \ldots, 3t-2\}}$ we have ${h_t(r) \geq 2^{2^1} = 4}$. Moreover, it is clear by definition that ${f : t \mapsto h_t(0)}$ satisfies ${f(t) = 2^{2^{2^{O(t)}}}}$.

Fix some integer $t \geq 1$. To simplify the notation, we denote the function $h_t$ by $h$.
We will check that for every ${r \in [3t-2]}$, we indeed have 
$${h(r) \leq 2^{\sqrt{\sqrt{\log\left(\left(h(r-1)/2\right)^{1/(2t-1)}-4\right)}-1}-1}-4}.$$

Since ${h(r) + 4 \leq 2h(r)}$, it suffices to prove that 
\begin{align*}
        & 4h(r) \leq 2^{\sqrt{\sqrt{\log\left(\left(h(r-1)/2\right)^{1/(2t-1)}-4\right)}-1}}\\ 
        \iff & \log(4h(r)) \leq \sqrt{\sqrt{\log\left(\left(h(r-1)/2\right)^{1/(2t-1)}-4\right)}-1}\\
        \iff & \log(4h(r))^2 \leq \sqrt{\log\left(\left(h(r-1)/2\right)^{1/(2t-1)}-4\right)}-1 .
\end{align*}

Since ${\log(4h(r))^2 +1 \leq (\log(4h(r))+1)^2 = \log(8h(r))^2}$, it suffices to prove that 
\begin{align*}
        & \log(8h(r))^2 \leq \sqrt{\log\left(\left(h(r-1)/2\right)^{1/(2t-1)}-4\right)}\\ 
        \iff & \log(8h(r))^4 \leq \log\left(\left(h(r-1)/2\right)^{1/(2t-1)}-4\right)\\
        \iff & 2^{\log(8h(r))^4} \leq \left(h(r-1)/2\right)^{1/(2t-1)}-4 .
\end{align*}

Since ${2^{\log(8h(r))^4} + 4 \leq 2^{\log(8h(r))^4+1} \leq 2^{\log(16h(r))^4}}$, it suffices to prove that 

\begin{align*}
        & 2^{\log(16h(r))^4} \leq \left(h(r-1)/2\right)^{1/(2t-1)}\\ 
        \iff & \log(16h(r))^4 \leq \frac{1}{2t-1} \cdot \log(h(r-1)/2)\\
        \iff & (2t-1)\log(16h(r))^4 \leq \log(h(r-1)) -1 .
\end{align*}

Note that $$(2t-1)\log(16h(r))^4 + 1 \leq 2t\log(16h(r))^4 = 2t(\log(h(r)) + 4)^4.$$ Since $h(r) \geq 4$ we have $\log(h(r)) + 4 \leq 4\log(h(r))$. So $$(2t-1)\log(16h(r))^4 + 1 \leq 2t(4\log(h(r)))^4 \leq 512t\log(h(r))^4.$$
Thus, it suffices to show that $512t \log(h(r))^4 \leq \log(h(r-1))$.
Since $g_t(r) = \log\log(h(r))$, it suffices to show $4g_t(r) + \log(512t) \leq g_t(r-1)$.

It is simple from the definition of $g_t$ to see that $4g_t(r) + \log(512t) = g_t(r-1)$.
This proves that the desired inequality holds.
\end{proof}

\end{document}